\def\llncsStyle{0}
\def\fullV{1}
\newcommand{\myqed}{\qed}
\newcommand{\myqed}{}
\spnewtheorem{fact}[theorem]{Fact}{\bfseries}{\itshape}
\spnewtheorem{assumption}[theorem]{Assumption}{\bfseries}{\itshape}
\spnewtheorem*{sketch}{Proof sketch}{\itshape}{}
\spnewtheorem{construction}[theorem]{Construction}{\bfseries}{\itshape}
\spnewtheorem{reduction}[theorem]{Reduction}{\bfseries}{\itshape}
\spnewtheorem{myclaim}[theorem]{Claim}{\bfseries}{\itshape}
\newtheorem{theorem}{Theorem}[section]
\newtheorem{lemma}[theorem]{Lemma}
\newtheorem{corollary}[theorem]{Corollary}
\newtheorem{myclaim}[theorem]{Claim}
\newtheorem{definition}[theorem]{Definition}
\newtheorem{remark}[theorem]{Remark}
\newtheorem{observation}[theorem]{Observation}
\newtheorem{construction}[theorem]{Construction}
\newcommand{\marginnote}[1]{}
\newcommand{\half}{\frac{1}{2}}
\newcommand{\eps}{\varepsilon}
\newcommand{\N}{\mathbb{N}}
\newcommand{\RR}{\mathbb{R}}
\newcommand{\poly}{{\rm poly}}
\newcommand{\noise}{\mathsf{noise}}
\newcommand{\bit}{\{0,1\}}
\newcommand{\str}{\bit^*}
\newcommand{\BIT}[1]{\bit^{#1}}
\newcommand{\nbit}{\bit^n}
\newcommand{\ra}{\rightarrow}
\newcommand{\set}[1]{\left\{ #1\right\}}
\providecommand{\drawn}[1]{\stackrel{#1}{\leftarrow}}
\providecommand{\rs}{\drawn{R}}
\newcommand{\GF}{\mathrm{GF}}
\newcommand{\full}[1]{#1}
\newcommand{\short}[1]{}
\newcommand{\short}[1]{#1}
\newcommand{\full}[1]{}
\newcommand{\fullfull}[1]{}
\newcommand{\remove}[1]{}
\newcommand{\bnote}[1]{}
\newcommand{\gnote}[1]{}
\newcommand{\enc}{\mathsf{Enc}}
\newcommand{\dec}{\mathsf{Dec}}
\newcommand{\BSC}{\mathsf{BSC}}
\newcommand{\er}{\mathsf{err}}
\newcommand{\out}{\mathsf{out}}
\newcommand{\inn}{\mathsf{in}}
\newcommand{\Cout}{C_{\out}}
\newcommand{\Cin}{C_{\inn}}
\newcommand{\wt}{\mathsf{wt}}
\newcommand{\nout}{n_{\out}}
\newcommand{\kout}{k_{\out}}
\newcommand{\rout}{R_{\out}}
\newcommand{\nin}{n_{\inn}}
\newcommand{\kin}{k_{\inn}}
\newcommand{\Lin}{L_{\inn}}
\newcommand{\dist}{\mathsf{dist}}
\newcommand{\Prob}{\Pr}
\newcommand{\Pikn}{\Pi_{k,n}}
\newcommand{\GV}{\text{\sc gv}}
\newcommand{\bin}{\text{\emph{bin}}}
\newcommand{\argmax}{\operatornamewithlimits{arg\ max}}
\begin{document}

\ifnum\llncsStyle=1
\title{Deterministic Rateless Codes for BSC}
\author{Benny Applebaum \inst{} 
\and Liron David \inst{} \and Guy Even \inst{}
}
\institute{School of Electrical Engineering, Tel-Aviv University\\ \email{\{bennyap,lirondav,guy\}@post.tau.ac.il}}
\else
\title{Deterministic Rateless Codes for the Binary Symmetric Channel}
\author{Benny Applebaum\thanks{School of Electrical Engineering, Tel-Aviv University, \texttt{\{bennyap,lirondav,guy\}@post.tau.ac.il}.} \and Liron David\footnotemark[1] \and Guy Even\footnotemark[1]}
\fi

\date{\today}
\maketitle

\thispagestyle{empty}

\begin{abstract}
  A rateless code encodes a finite length information word into an infinitely long
  codeword such that longer prefixes of the codeword can tolerate a larger fraction of errors.
  A rateless code
  achieves capacity for a family of channels if, for every channel in the family,
  reliable communication is obtained by a prefix of the code whose rate is arbitrarily close to the
  channel's capacity. As a result, a universal encoder 
  can communicate over all channels in the family while simultaneously achieving optimal communication overhead.

  In this paper, we construct the first \emph{deterministic} rateless code for the
  binary symmetric channel. Our code can be encoded and decoded in $O(\beta)$ time
  per bit and in almost logarithmic parallel time of $O(\beta \log n)$, where $\beta$ is any (arbitrarily slow) super-constant function.
  Furthermore, the error probability
  of our code is almost exponentially small $\exp(-\Omega(n/\beta))$.
  Previous rateless codes are probabilistic (i.e., based on code ensembles), require
  polynomial time per bit for decoding, and have inferior asymptotic error probabilities.

  Our main technical contribution is a constructive proof for the existence of an
  infinite generating matrix that each of its prefixes induce a weight distribution
  that approximates the expected weight distribution of a random linear code.
\end{abstract}
\newpage
\setcounter{page}{1}
\section{Introduction}
Consider a single transmitter $T$ who wishes to broadcast an information word $m\in
\bit^k$ to multiple receivers $B_1,\ldots,B_t$ over a Binary Symmetric Channel (BSC)
with crossover probability $p$.  By Shannon's theorem, using error correcting codes it is
possible to solve this problem with asymptotically optimal communication of $k\cdot
\frac{1}{C(p)-\delta}$ bits where $C(p)$ is the capacity of the channel and
$\delta>0$ is an arbitrarily small constant.  Furthermore, 
there are explicit capacity-achieving codes in which decoding and encoding can be
performed efficiently in polynomial or even linear time, e.g.~\cite{Barg00linear,Barg02error,Barg04error}.

The task of noisy broadcast becomes more challenging when each receiver $B_i$ experiences a different level of noise $p_i$ (e.g., due to a different distance from the transmitter).
Naively, one would use a code which is tailored to the noisiest channel with parameter $p_{\max}$. However, this will add an unnecessary communication overhead for receivers with lower noise level.
To make things worse, the transmitter may be unaware of the noise parameters, and, in some cases, may not even have a non-trivial upper-bound on the noise level.
Under these circumstances, the naive solution is not only wasteful but simply not applicable.

This problem (also studied in~\cite{ByersLMR98,shulman2000static}) can be solved by a \emph{rateless code}. Such a code  allows the transmitter to map the
information   word  $m\in  \bit^k$ into    an    infinitely long   sequence  of  bits
$\set{c_i}_{i\in  \N}$ such   that the longer the prefix   of the   codeword, the higher level of noise can be corrected. Ideally, we would    like to simultaneously   achieve the
optimal rate with respect   to all the noise  parameters  $p_i$.  That is, for  every
value of $p_i$, a prefix of length  $k\cdot \frac{1}{C(p_i)-\delta}$ should guarantee
reliable communication.

Rateless codes were extensively studied under various
names
\full{~\cite{mandelbaum1974adaptive,lin1984automatic,chase1985code,hagenauer1988rate,ByersLMR98,shulman2000static,rowitch2000performance,
caire2001throughput,ha2004rate,sesia2004incremental,ji2005rate,rajwan2007method,rajwan2008data}}. \short{(See~\cite{erez2012rateless} and references therein.)}
Information-theoretically, the problem of rateless
transmission is well understood~\cite{shulman2003communication}, and, for many noise models, random
codes provide an excellent (inefficient) solution.  The task of constructing
efficient rateless codes, which provide polynomial-time encoding and decoding, is
much more challenging.  Currently, only a few examples of efficient
capacity-achieving rateless codes are known for several important cases such as erasure channels, Gaussian channels, and binary symmetric channels~\cite{luby2002lt,shokrollahi2006raptor,erez2012rateless,perry2012spinal}.
Interestingly, all known constructions are probabilistic. Namely, the encoding
algorithm employs some public randomness, which is shared by the transmitter and all the receivers. (Equivalently, these constructions can be viewed as \emph{ensembles} of rateless codes.) This raises the natural question of whether randomness is inherently needed for rateless codes.\footnote{As we will see in Section~\ref{sec:overview}, the question is non-trivial even for computationally unbounded encoders as a rateless code is an infinite object.}
\vspace{-7pt}
\subsection{Our Results}
In this paper, we answer the question to the affirmative by constructing deterministic efficient rateless codes which achieve the capacity over the binary symmetric channel.
Letting $C(p)$ denote the capacity of the BSC with crossover probability $p$, we prove the following theorem.

\begin{theorem}[Main theorem]\label{thm:main}
Fix some super-constant function $\beta(k)=\omega(1)$.
There exists a deterministic rateless encoding algorithm $\enc$ and a deterministic rateless decoding algorithm $\dec$ with the following properties:
\begin{itemize}
  \item (\textbf{Capacity achieving}) For every 
  information word $m\in \bit^k$, noise parameter $p\in (0,\half)$, and prefix length $n=k\cdot\frac{1}{C(p)-\delta}$ where
  $0<\delta<C(p)$ is an arbitrary constant,
  we have that
  \[\Pr_{\noise\rs \BSC(p)}[\dec(\enc(m,[1:n])+\noise)\neq m]\leq 2^{-\Omega(k/\beta)},\]
  where $\enc(m,[1:n])$ denotes the $n$-bit prefix of the codeword $\enc(m)$, and the constants in the big Omega notation depend on $\delta$ and $p$.
  \item (\textbf{Efficiency}) The $n$-long prefix of $\enc$ can be computed in time $n \cdot \beta$, and decoding is performed in time $n \cdot \beta$. Both algorithms can be implemented in parallel by circuits of depth $O(\beta +\log n)$.
\end{itemize}
\end{theorem}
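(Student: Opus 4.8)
The plan is to reduce the rateless coding problem to two sub-problems: (i) constructing a single infinite generating matrix $G \in \F_2^{\infty \times k}$ whose every $n$-column prefix spans a good linear code — specifically one whose weight distribution approximates that of a random linear code of the same dimensions — and (ii) wrapping this "outer" weight-distribution guarantee inside a concatenation-style code so that encoding and decoding become efficient. The headline efficiency and error-probability claims then follow by instantiating standard machinery (e.g. a Forney-style concatenation with short inner codes decoded by brute force, or the linear-time list-decoding framework of Barg–Zémor) on top of the infinite outer object, with the $\beta$ parameter controlling the block length of the inner code and hence the depth/time overhead.

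**First I would** make precise the notion of an \emph{approximately-random weight distribution}: for a threshold governed by the Gilbert–Varshamov bound at rate $R = k/n$, the number of nonzero codewords of weight $\le w$ in every prefix should be at most (say) $2^{O(k/\beta)}$ times the expectation over a uniformly random linear code, for all relevant $n$ simultaneously. I would state this as the main technical lemma (the "constructive infinite generating matrix" promised in the abstract) and prove it by a conditional-expectation / method-of-conditional-probabilities argument: reveal the columns of $G$ one at a time, and at each step choose the next column (from a small, $\poly$-size set of candidates, e.g. outputs of a $\beta$-wise independent or small-bias generator rather than all $2^{\text{current length}}$ vectors) so as to keep a suitable potential function — a weighted count of low-weight codewords across a geometric sequence of prefix lengths — below its target. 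The key point is that the potential must be computable in time polynomial in the current length, which forces the candidate column set and the "checkpoint" prefix lengths to be sparse; summing a geometric series over checkpoints shows that controlling the code only at checkpoints suffices to control it at all lengths, up to constant factors in the rate loss.

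**The main obstacle** I expect is exactly this tension between \emph{determinism/efficiency} and the \emph{infinite, online} nature of the object: the greedy column-selection must commit to $G[1{:}n]$ before knowing which larger $n'$ will be used, yet the weight-distribution guarantee must hold for \emph{all} $n$ at once, and the potential function must remain cheap to evaluate as $n\to\infty$. Handling this requires (a) a potential that telescopes correctly across checkpoints so that a single online choice simultaneously benefits all future prefixes, and (b) derandomizing the "random column" step with a generator whose seed length is only $O(\log n)$ so that the per-bit work stays $O(\beta)$ and the parallel depth stays $O(\beta + \log n)$. Once the outer object is in hand, the remaining steps — concatenating with a brute-force-decodable inner code of length $O(\beta)$, verifying that a prefix of length $n = k/(C(p)-\delta)$ still beats the GV/Shannon threshold for crossover probability $p$, and bounding the decoding failure probability by a union bound over low-weight error patterns (giving $2^{-\Omega(k/\beta)}$) — are routine but must be carried out carefully to confirm the claimed time $n\cdot\beta$ and depth $O(\beta+\log n)$ for both encoding and decoding.
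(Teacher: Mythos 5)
Your high-level skeleton (a deterministically generated infinite generating matrix whose every prefix has an approximately random weight distribution, wrapped in a Forney-style concatenation with a brute-force component of size $\beta$) is indeed the paper's skeleton, but your decomposition is inverted, and the inversion creates a gap your sketch does not close. In the paper the derandomized rateless object is the \emph{inner} code, with information length $\kin=\beta=\omega(1)$: its rows are found by exhaustive search over $\GF(2)^{\beta}$ and it is decoded by brute-force maximum likelihood, both costing $2^{O(\beta)}$ per block, which is affordable precisely because $\beta$ is tiny. Ratelessness of the whole scheme comes from extending each inner codeword, while the \emph{outer} code is a fixed, efficient, expander-based code of rate $1-o(1)$ over an alphabet of size $\beta$, decodable from an $\Omega(1/\beta)$ fraction of symbol errors; the final bound $2^{-\Omega(k/\poly(\beta))}$ is a Chernoff bound over the $\Lin$ independent inner ML decodings, not a union bound over low-weight error patterns. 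In your plan the infinite matrix lives at the full dimension $k$ and is treated as the ``outer'' object. But an approximately random weight distribution gives you no efficient decoder for that length-$k$ code --- the only decoder available for it is ML, which costs $2^{\Theta(k)}$, and even encoding with length-$k$ rows costs $\Theta(k)$ per output bit rather than $O(\beta)$. Concatenating a short inner code onto it does not help, since in concatenation it is the \emph{outer} component that must have an efficient decoder. So the claimed time $n\cdot\beta$ and depth $O(\beta+\log n)$ do not follow from your reduction.

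To rescue efficiency you propose making the matrix construction itself fast (candidate columns from a small-bias or $\beta$-wise independent generator, a checkpointed potential computable in polynomial time). This is both unnecessary under the correct decomposition and unsubstantiated: any potential tracking the weight distribution ranges over all $2^k$ information words, and you give no way to evaluate it in $\poly(n)$ time, let alone $O(\beta)$ per bit. More importantly, the genuinely delicate point --- that a good prefix can always be \emph{extended} --- is exactly where the paper does its work, and your telescoping-checkpoint idea does not address it. One cannot $\eps$-split small weight classes by any single row; the paper handles this by marking words that ever fall into a small class, requiring each new row only to $\frac{1}{2\sqrt n}$-split every large unmarked class (existence via Chebyshev over pairwise-independent inner-product indicators plus a union bound) and to $1/8$-elevate the minimum-weight class. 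This marking/elevation bookkeeping is what yields both the multiplicative weight-distribution bound (within $e^{2(\sqrt n-\sqrt k)}$ plus $2n^4$ marked words) and the minimum distance $\Omega((n-k)/\log n)$, which are then combined --- a low-weight union bound plus an extension of Poltyrev's bound for heavy codewords --- to give inner ML error $e^{-\Omega(n/\log n)}$. Without an argument of this kind, your main technical lemma (simultaneous control of all prefixes by an online greedy choice) remains an assertion rather than a proof.
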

Letting $\beta$ be a slowly increasing function, (e.g, $\log^*(k)$) we obtain an ``almost'' exponential error and ``almost'' linear time encoding and decoding. 

One may also consider a weaker form of capacity achieving rateless codes in which the encoding is allowed to depend on the gap to capacity $\delta$.
(This effectively puts an a-priory upper-bound on the noise probability which makes things easier.)
In this setting we can obtain an asymptotically optimal construction with linear time encoding and decoding and exponentially small error.
\begin{theorem}\label{thm:weak}
For every $\delta>0$, there exists a deterministic encoding algorithm $\enc_{\delta}$ and a deterministic decoding algorithm $\dec_{\delta}$ with the following properties:
\begin{itemize}
  \item (\textbf{Weak capacity achieving}) For every
  information word $m\in \bit^k$, noise parameter $p\in (0,\half)$ such that
  $C(p)>\delta$,
and prefix length $n=k\cdot\frac{1}{C(p)-\delta}$
  we have that
  \[\Pr_{\noise\rs \BSC(p)}[\dec_{\delta}(\enc_{\delta}(m,[1:n])+\noise)\neq m]\leq 2^{-\Omega(k)}.\]
  \item (\textbf{Efficiency}) The $n$-long prefix of the code can be encoded and decoded in linear time $O(n )$ and in parallel by circuits of logarithmic depth $O(\log(n))$.
\end{itemize}
(The constants in the asymptotic notations depend on $\delta$.)
\end{theorem}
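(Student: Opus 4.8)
\medskip\noindent
The plan is to use the classical code-concatenation paradigm, exploiting the one extra freedom of the weak setting: a known $\delta$ pins down an a-priori bound $p<p_{\max}:=C^{-1}(\delta)$ on the noise, so $p$ stays bounded away from $\half$, and this lets the inner code have \emph{constant} block length (and hence be maximum-likelihood decoded in constant time). Fix $\eps_0:=\delta/2$ and let $\Cout$ be a fixed explicit linear code over the alphabet $\bit^b$, of rate $1-\eps_0$, encodable and decodable from a constant fraction $\gamma=\gamma(\delta)>0$ of symbol errors in linear time and by depth-$O(\log n)$ circuits (classical constructions, e.g.\ expander-based codes, have these properties), where $b=b(\delta)\in\N$ is a constant chosen below. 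For the inner code we invoke the paper's central construction, instantiated with the constant dimension $b$: it yields a deterministic, efficiently computable infinite generating matrix $G^{\inn}\in\bit^{b\times\N}$ each of whose prefixes $G^{\inn}[1:\ell]$ generates a linear $[\ell,b]$-code whose weight distribution approximates the expected weight distribution of a random linear code of the same dimensions. The encoder $\enc_\delta$ first applies $\Cout$ to $m$, obtaining an outer codeword $(y_1,\dots,y_N)\in(\bit^b)^N$ with $N:=(k/b)/(1-\eps_0)$, then feeds each symbol $y_j$ to the rateless inner encoder to obtain an infinite bit-stream, and column-interleaves the $N$ streams; thus an $n$-bit prefix protects each $y_j$ by the inner code of length $\ell:=n/N$.

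Next comes the parameter bookkeeping. Given $p$ with $C(p)>\delta$ and the prescribed $n=k/(C(p)-\delta)$, a one-line computation gives $\ell=n/N=b(1-\delta/2)/(C(p)-\delta)$, so the inner rate is $b/\ell=(C(p)-\delta)/(1-\delta/2)$, which lies below $C(p)$ by at least $\delta'':=\delta/(2-\delta)>0$. As $p$ ranges over $(0,p_{\max}]$ the quantity $\ell\cdot E_r(b/\ell;p)=b\cdot\big[\tfrac{1-\delta/2}{C(p)-\delta}\,E_r\big(\tfrac{C(p)-\delta}{1-\delta/2};\,p\big)\big]$ --- with $E_r$ the random-coding error exponent --- is continuous and strictly positive in $p$ (the gap to capacity is always at least $\delta''$ and $p$ is bounded away from $\half$) and blows up as $C(p)\to\delta^+$ (there $\ell\to\infty$ while $E_r$ stays bounded below), hence it is at least $b\cdot c^*(\delta)$ for some constant $c^*(\delta)>0$. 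By the standard bound that expresses the ML (nearest-codeword) decoding error of a linear code over a BSC through its weight enumerator, together with the near-random guarantee on $G^{\inn}[1:\ell]$, the inner code errs with probability at most $2^{-\Omega_\delta(b)}$; choosing $b=b(\delta)$ large enough makes this at most $\gamma/2$, uniformly over all admissible $p$.

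The decoder $\dec_\delta$ de-interleaves the $n$ received bits into $N$ length-$\ell$ blocks, runs the (constant-time, since $b=O_\delta(1)$) nearest-codeword decoder of $G^{\inn}[1:\ell]$ on each block to get $\hat y_j\in\bit^b$, and hands $(\hat y_1,\dots,\hat y_N)$ to the linear-time decoder of $\Cout$. Because $\BSC(p)$ is memoryless and column-interleaving places the $N$ blocks in disjoint coordinate sets, the inner failures $\{\hat y_j\neq y_j\}$ are independent, each of probability at most $\gamma/2$; a Chernoff bound gives that more than $\gamma N$ of them occur with probability at most $2^{-\Omega(N)}$, and since $b=O_\delta(1)$ forces $N=\Omega(k)$, this is $2^{-\Omega(k)}$. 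Conditioned on at most $\gamma N$ erroneous symbols, $\Cout$'s decoder recovers $(y_1,\dots,y_N)$ and hence $m$. Finally, efficiency: inner encoding and decoding cost $O(\ell)$ per block (a constant-size matrix--vector product, respectively a search over the $2^b=O_\delta(1)$ inner codewords), so $O(n)$ in total and depth $O(\log n)$; the outer algorithms are linear-time and depth $O(\log n)$ by the choice of $\Cout$; interleaving is linear-time and depth $O(1)$; and the overall rate is exactly $k/n=C(p)-\delta$. (Divisibility of $k$ by $b$ and integrality of $N$ and $\ell$ are handled by standard padding, at a lower-order additive loss absorbed into $\delta$.)

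The real content sits in a single step --- the existence of a deterministic, efficiently computable infinite generating matrix $G^{\inn}$ all of whose prefixes have near-random weight distributions --- which is exactly the paper's main technical contribution, here invoked in its most benign regime (constant dimension $b$), where a greedy column-by-column construction only ever chooses among the $2^b$ possible next columns and the analysis simplifies accordingly. (Indeed, Theorem~\ref{thm:weak} can alternatively be read off from the construction underlying Theorem~\ref{thm:main} by fixing $\delta$ first and then setting its parameter $\beta$ to a sufficiently large constant $c(\delta)$, which turns the bounds $n\cdot\beta$ and $2^{-\Omega(k/\beta)}$ into $O(n)$ and $2^{-\Omega(k)}$.)
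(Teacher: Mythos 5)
Your proposal is correct and follows essentially the same route as the paper: Theorem~\ref{thm:weak} is obtained there by taking the concatenated construction of Section~\ref{sec:concat} (linear-time expander-based outer code plus the deterministic rateless inner code of Section~\ref{sec:inner}) and fixing its parameter $\beta$ to a sufficiently large constant depending on $\delta$, which is exactly the instantiation you describe (your $b$ playing the role of $\beta$), including the independence-plus-Chernoff analysis of the inner ML failures. The only cosmetic difference is that you decouple the outer alphabet/rate from the inner dimension and argue uniformity over $p$ via the error exponent, whereas the paper ties both to the single parameter $\beta$ and cites Claim~\ref{claim:conc}; the substance is the same.
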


\paragraph{Comparison to Spinal codes.} Prior to our work, \emph{Spinal
  codes}~\cite{perry2011rateless,perry2012spinal,spinal2012} were the only known efficient (randomized) rateless
codes for the BSC. Apart from being deterministic, our construction has several
important theoretical advantages over spinal codes. The upper bound on the decoding error of spinal
codes is only inverse polynomial in $k$, and these codes only weakly achieve the
capacity (i.e., the encoding depends on the gap $\delta$ to capacity). Moreover, the
decoding complexity is polynomial 
(as opposed to linear or quasilinear in our codes), and both encoding and decoding are highly
sequential as they require $\Omega(k)$ sequential steps. \bnote{Linear and systematic.}  It
should be mentioned however that, while Spinal codes were reported to be highly practical, we
currently do not know whether our codes perform well in practice.

\subsection{Overview of our construction}\label{sec:overview}
Our starting point is a simple (yet inefficient and randomized) construction based on a random linear code.
Assume that both the encoder and decoder have an access to an infinite sequence of random $k$-bit row vectors $\set{R_i}_{i\in \N}$.
To encode the message $m\in \bit^k$, viewed as a $k$-bit column vector, the encoder sends the sequence $\set{R_i\cdot m}_{i\in \N}$ of inner products over the binary field.
To decode a noisy $n$-bit prefix of the codeword, we will employ the maximum-likelihood decoder (ML) for the code generated by the $n\times k$ matrix $R=(R_1,\ldots,R_n)$. A classical result in coding theory asserts that such a code achieves the capacity of the BSC.
Namely,
as long as the gap from capacity $\delta=C(p)-k/n$ is positive, the decoding error probability
\begin{equation}\label{Eq:ML}
  \Pr_{
  \noise\rs \BSC(p),R\rs \bit^{n\times k
  }}[\text{ML}_R(R\cdot m+\noise)\neq m]
\end{equation}
decreases exponentially fast as a function of $k$.

This construction has two important drawbacks: It is probabilistic and it does not support efficient decoding.
For now, let us ignore computational limitations, and attempt to de-randomize the construction.

\subsubsection{Derandomization}
We would like to deterministically generate an infinite number of rows $\set{R_i}_{i\in \N}$ such that
every $n$-row prefix matrix $R[1:n]=(R_1,\ldots,R_n)$ has a low ML-decoding error of,
say $0.01$, for every $p$ for which $C(p)-k/n$ is larger than, say,
$0.01$.\footnote{We use small constants to simplify the presentation, the discussion remains valid when the constants are replaced with a function that decreases with $k$.}

Although we know that, for every $n$, almost all $n\times k$ matrices
satisfy this condition, it is not a-priory clear that every such low-error matrix can
be extended to a larger matrix while preserving low error.

To solve this problem, we identify a property of \emph{good} matrices which, on one hand, guarantees low decoding error, and, on the other hand, is \emph{extendible} in the sense that every good matrix can be augmented by some row while preserving its goodness. We will base our notion of goodness on the \emph{weight distribution} of the matrix $R$.

Let $W_{i,n}$ denote the set of information words which are mapped by the matrix $R[1:n]$ to codewords of Hamming weight $i$, and let $w_{i,n}$ denote the size of this set. The sets $(W_{1,n},\ldots,W_{n,n})$ form a partition of $\bit^k$, and the vector $(w_{i,n})_{i=1,\ldots,n}$ is called the weight distribution of the code. When a row $R_{n+1}$ is added, the weight of all information words which are orthogonal to $R_{n+1}$ remains the same, while the weight of non-orthogonal words grows by 1. Thus $R_{n+1}$ splits $W_{i,n}$ to two parts: the orthogonal
vectors which ``remain'' in $W_{i,n+1}$, and the non-orthogonal vectors which
are ``elevated'' to $W_{i+1,n+1}$. A random row $R_{n+1}$ is therefore expected to split $W_{i,n}$ into two \emph{equal} parts.

If in each step we could choose such an ``ideal'' row which simultaneously halves all
$W_{i,n}$'s, we would get an ``ideal'' weight distribution in which
$w^*_i(n,k)=\binom{n}{i} \cdot 2^{k-n}$, as expected in a random linear code. Such an
ideal weight distribution guarantees a low ML decoding error over $\BSC(p)$ when
$C(p)<k/n$
(cf.~\cite{poltyrev1994bounds,shulman1999random,barg2002random}).

While we do not know how to choose such an ideal row (in fact it is
not clear that such a row exists), a probabilistic argument shows that
we can always find a row $R_{n+1}$ which approximately splits every
sufficiently large $W_{i,n}$ simultaneously. Furthermore, by keeping track of the small sets and choosing
$R_{n+1}$ which elevates a constant fraction of the lightest vectors, we make sure that the distance of the code is not
too small, e.g., $W_{i,n}$ is empty for all $i<\Omega((n-k)/\log n)$. Using these properties we show that the
resulting code has low ML decoding error.
(See Section~\ref{sec:inner}.)
\bnote{Add high level of the algorithm? mention the constructive version of weight distribution.}

\subsubsection{Making the code efficient}\label{sec:eff}
The above approach gives rise to a deterministic rateless code which achieves the
capacity of the BSC with a sub-exponential error of $\eps=2^{-\Omega(\beta/\log \beta)}$ where $\beta$ is the length of the information word.
However, the time complexity of
encoding/decoding the $n$-bit prefix of a codeword is $n\cdot 2^{O(\beta)}$.
We solve this problem by noting that Forney's concatenation technique~\cite{Forney66} naturally extends to the rateless setting.
We sketch the construction below. (Full details appear in Section~\ref{sec:concat}.)

The construction uses the inefficient rateless code as an ``inner code'' $\Cin:\bit^\beta \ra \bit^*$, and, in addition, employs a standard efficient outer code $\Cout:B^{\kout}\ra B^{\nout}$ where $B\triangleq\BIT{\beta}$ and $\kout\triangleq k/\beta$.

To encode a message $m\in \BIT{k}$, we parse it as $M\in B^{\kout}$,
apply the outer code to obtain a codeword $C\triangleq(C_{1},\ldots,C_{\nout})$ and then apply
the inner code to each of the symbols of $C$ in parallel. Namely, each symbol $C_i$ is
encoded by the code $\Cin$ to an infinitely-long column vector. The $\nin \cdot \nout$ prefix of the concatenated encoding is obtained by collecting the binary
vectors $(X_1,\ldots,X_{\nout})$ where $X_i$ denotes the prefix of length $\nin$ of
the inner codeword that corresponds to $C_i$.

Decoding proceeds in the natural way. Let $Y=(Y_1,\ldots,Y_{\nout})$ denote
the noisy $\nin\cdot \nout$ prefix of the encoding of the message $m$. First, maximum
likelihood decoding is employed to decode each of the inner codewords $Y_i$ into $\hat{X}_i$. Next, the decoder of the
outer code recovers an information word $M$ from the noisy codeword
$(\hat{X}_1,\ldots,\hat{X}_{\nout})$.

In order to prove Theorem~\ref{thm:main}, we need a somewhat non-standard setting of
the parameters. To avoid having to fix the gap to the channel's capacity ahead of
time, we use an outer code whose rate tends to $1$ (i.e.,
$\nout=\kout(1+o(1))$). Set $\beta=\omega(1)$. For concreteness, take an outer code $\Cout:B^{\kout}\ra
B^{\nout}$ with $\nout=\kout+\kout/\poly(\beta)$, and assume that the code can be decoded from a fraction of
$\eps'=\Omega(1/\poly(\beta))$ errors in time $\nout \cdot \poly(\beta)$ and can be
encoded with similar complexity.\footnote{Such a code can be obtained based on expander graphs, e.g.,~\cite{Spielman96a,spielmanPhD,guruswami2005linear}.
In fact, we will employ the code of~\cite{guruswami2005linear} which achieves a smaller alphabet of absolute size $\beta$. This is not a real issue as we can increase the alphabet to $2^{\beta}$ by parsing $\beta/\log \beta$ symbols as a single symbol without affecting the properties of the code. See Section~\ref{sec:concat}.} A standard application of Chernoff's bound shows that the decoding error of $p$-noisy codeword of length
$n \geq k\cdot\frac{1}{C(p)-\delta}$, is $2^{-\Omega(\nout(\eps'-\eps)^2)}$, which, under
our choice of parameters, simplifies to $2^{-\Omega(k/\poly(\beta))}$. For a slowly increasing
$\beta=\omega(1)$, we derive an almost-exponential error, and an almost linear
encoding/decoding time complexity of $\nout \cdot \beta+n \cdot 2^{O(\beta)}$.

Theorem~\ref{thm:weak} is obtained by using a (large) constant $\beta$ which depends on the gap to capacity $\delta$. As a result the rate of the outer code is bounded away from $1$, but the error becomes exponentially small and both encoding and decoding can be performed in linear time.

\subsection{Discussion}
One of the main conceptual contributions of this work is a formalization of rateless codes from an algorithmic point of view (see Section~\ref{sec:rateless}).
This formulation raises a more general research problem:
\begin{quote}
    Is it possible to gradually generate an infinite combinatorial object $\mathcal{O}=\set{\mathcal{O}_i}_{i=1}^\infty$ 
    via a deterministic algorithm?
\end{quote}
 Note that the question may be interesting even for inefficient algorithms as it may be infeasible, in general, to decide whether a finite sequence $\mathcal{O}_1,\ldots, \mathcal{O}_n$ is a prefix of some good infinite sequence $\mathcal{O}$. (This is very different than the standard finite setting, where inefficient derandomization is trivially achievable by exhaustive search.) It will be interesting to further explore other instances of this question (e.g., for some families of graphs).

 The formulation of a deterministic construction of a rateless code can be formulated as
 follows. Refer to a generating matrix as ``pseudo-random-weight'' if the weight
 distribution of the code it generates is ``close'' to the expected weight
 distribution of random linear codes.  Our main technical contribution is a
 deterministic construction of an infinite generating matrix, every finite prefix of
 which is ``pseudo-random-weight''.

 An interesting open problem is to obtain stronger approximations for the ``ideal''
weight distribution.  Specifically, it should be possible to improve the code's distance
 from sub-linear ($\Omega((n-k)/\log n)$) to linear ($\Omega((n-k))$) in the
 redundancy. More ambitiously, is it possible to construct a rateless code which, for
 every restriction to $n$ consecutive bits, achieves the capacity of the BSC? Getting
 back to our motivating story of noisy multicast, such a rateless code would allow
 the receivers to dynamically join the multicast.

\section{Rateless Codes}\label{sec:rateless}
In this section we formalize the notion of rateless codes. We begin with some standard notation.
\paragraph{Notation.}
The Hamming distance between two binary vectors $x,x'$ of equal length is denoted by
$\dist(x,x')$.  Let $\mu$ denote a probability distribution and $X$ denote a random
variable.  We denote that $X$ is distributed according to $\mu$ by $x\rs\mu$.  Let
$\BSC(p)$ denote the binary symmetric channel with crossover probability $p\in
(0,\half)$. We abuse notation and write $\noise\rs \BSC(p)$ to denote that $\noise$
is a binary vector whose coordinates are random independent Bernoulli trials chosen
to be $1$ with probability $p$ and a $0$ with probability $1-p$. (The vector's length
will be clear from the context.) Recall that the capacity of the binary symmetric
channel is $1-H(p)$ where $H(p)\triangleq -p \log p-(1-p)\log p$ is the entropy function. (By default, the base of all logarithms is $2$.)

We begin with a syntactic definition of a rateless code.
\begin{definition}[rateless code]
  A rateless code is a pair of algorithms $(\enc,\dec )$.
  \begin{enumerate}
  \item The encoder $\enc: \BIT{*} \times \N \rightarrow
    \bit$ takes an information word $m\in\BIT{*}$ and an index $i\in\N$, and outputs the $i$-th bit of the encoding of $m$. (Equivalently, the encoding of $m$ is an infinite sequence of bits $(\enc(m,i))_{i\in \N}$.)

  \item The decoder $\dec:\BIT{*} \times \N \rightarrow \BIT{*}$
   maps a noisy codeword $y\in\str$ and an integer $k$ (which corresponds to the length of the information word) to an information word $m'\in \bit^k$.
  \end{enumerate}
\end{definition}
Note that in our definition, both the encoder and the decoder are assumed to be \emph{deterministic}. One can relax the definition and consider a probabilistic rateless code in which the encoder and the decoder depend on some shared randomness. This corresponds to an ensemble of codes from which a code is randomly chosen.

\paragraph{Conventions.}
We let $\enc(m,[1:n])$ denote the first $n$ bits of the codeword that corresponds
to $m\in\BIT{*}$.  Namely, $\enc(m,[1:n])$ is the binary string $c=(c_1,\ldots,c_{n})$, where
$c_i=\enc(m,i)$. A rateless code defines $(n,k)$ codes for every $n$ and $k$ via \short{ $C_{n,k}\triangleq \{ \enc(m,[1:n]) \mid m\in \BIT{k}\}$.}
 \full{ \begin{align*}
    C_{n,k}\triangleq \{ \enc(m,[1:n]) \mid m\in \BIT{k}\}
\:.
\end{align*}} We measure the complexity of encoding (resp. decoding) of a rateless
code as the time $T(k,n)$ that takes to encode (resp., decode) the code $C_{n,k}$.
The encoder and the decoder are defined for every information block length $k$.
We often consider a specific $k$ and then abbreviate $\dec(y,k)$ by $\dec(y)$.

\begin{remark}[Additional features.]
In some scenarios it is beneficial to have a rateless code with the following additional features.
\begin{itemize}
  \item (Linearity) A rateless code is \emph{linear} if $\enc$ is a linear function. Namely, for $m\in\GF(2)^k$, we have
  \full{\[\enc(m,i)= R_i \cdot m,\]} \short{$\enc(m,i)= R_i \cdot m$,} where $\{R_i\}_{i=1}^{\infty}$, is an infinite sequence of row vectors $R_i\in\GF(2)^k$. We refer to the infinite matrix $G=\{R_i\}_{i=1}^{\infty}$ as the \emph{generator} matrix of the code.

  \item (Systematic) An encoding is \emph{systematic} if, for every $m\in \BIT{k}$, we have
  $\enc(m,[1:k])=m$.
\end{itemize}
\end{remark}

We define the \emph{error function} of a rateless code $(\enc,\dec )$ over the binary
symmetric channel $\BSC(p)$ as a function of $k,n$ and $p\in (0,1/2)$. \short{We
  write $\noise\rs \BSC(p)$ to denote a vector of independent Bernoulli trials that
  attain the value $1$ with probability $p$ and $0$ with probability $1-p$.}
\begin{definition}[The error function]
  \[\er(p,k,n) \triangleq \max_{m\in \bit^k}\Pr_{\noise\rs
    \BSC(p)}[\dec(\enc(m,[1:n])+\noise)\neq m].\]
\end{definition}
Equivalently, this is the maximum error probability, over the $\BSC(p)$, of the code
$C_{n,k}$ that is obtained by restricting the rateless code to a prefix of
length $n$.

\begin{definition}[capacity achieving rateless code for $\BSC$]
 A rateless code $( \enc,\dec)$ achieves capacity with respect to the binary
 symmetric channel if, for every $p\in (0,1/2)$ and every $\delta \in (0,1-H(p))$,
if
$n(k)\triangleq \frac{k}{1-H(p)-\delta}$, then

 \begin{equation}\label{Eq:err}
        \lim_{k \to \infty} \er(p,k,n(k)) = 0.
\end{equation}
\end{definition}
\noindent Naturally, it is desirable to bound~(\ref{Eq:err}) by a quickly decaying
function of $k$.

Motivated by the analysis of finite codes, one may be interested also
in proving that, for a fixed $k$, increasing redundancy over the same channel also
increases the probability of successful decoding, namely
\begin{align*}
\forall ~k\qquad   \lim_{n \to \infty} \er(p,k,n) = 0.
\end{align*}
Such a property implies that the minimum distance increases as a function of $n$ and
that the decoding algorithm benefits from this increase.
\section{An Inefficient Deterministic Rateless Code}
\label{sec:inner}
In this section we present an (inefficient) deterministic construction of a rateless
code that achieves capacity with respect to binary symmetric channels. In fact, when
all other parameters are fixed, the error function decreases almost exponentially as
a function of $n$. This code will be later used as the inner code of our final
construction.  Formally, we prove the following theorem.

\begin{theorem}\label{thm:inner}
  There exists a deterministic, rateless, linear, systematic code $(\enc,\dec)$ with the following
  properties:
  \begin{description}
  \item[Capacity achieving:] For every $p\in (0,\half)$ and  $\delta\in (0,1-H(p))$, if $n\geq k/(1-H(p)-\delta)$, then
    the error function satisfies\footnote{Note that if $n=k/(1-H(p)-\delta)$, then
      the theorem simply states that the error function is $e^{-O(k/\log k)}$.
However, the bound also holds for rates far below the capacity. For example, if $k$
is constant and $n$ tends to infinity, then the error function is $e^{-\Omega(n/\log n)}$.}
    \[\er(p,k,n)=e^{-\Omega(n/\log n)}\:.\]
  \item[Complexity:] Encoding and decoding of $k$-bit information words and
    $n$-bit codewords can be done in time $O(nk\cdot 2^{2k})$.
  \end{description}
\end{theorem}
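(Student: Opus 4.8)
The plan is to construct the generator matrix $G=\{R_i\}_{i\in\N}$ greedily, one row at a time, maintaining an invariant on the weight distribution of every prefix $G[1:n]$. The invariant has two parts: (i) the code has no codewords of weight below some threshold $t_n = \Omega((n-k)/\log n)$ (a distance guarantee), and (ii) for every weight class $i$, the size $w_{i,n}$ of $W_{i,n}$ is ``not much larger'' than the ideal value $w^*_i(n,k)=\binom{n}{i}2^{k-n}$ — say $w_{i,n}\le w^*_i(n,k)+ (\text{slack})$, where the slack is large enough for the probabilistic argument to go through but small enough to still give a good ML bound. First I would prove the key combinatorial step: given a prefix $G[1:n]$ satisfying the invariant, there exists a row $R_{n+1}\in\GF(2)^k$ such that $G[1:n+1]$ still satisfies it. Recall that adding $R_{n+1}$ splits each $W_{i,n}$ according to orthogonality: a word $m$ stays in weight class $i$ if $\langle R_{n+1},m\rangle=0$ and is elevated to class $i+1$ otherwise. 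For a uniformly random $R_{n+1}$, each nonzero $m$ is elevated with probability exactly $1/2$ independently-ish, so in expectation each $W_{i,n}$ is halved, reproducing the ideal recursion $w^*_i(n+1,k)=\tfrac12(w^*_i(n,k)+w^*_{i-1}(n,k))$. A union bound over the weight classes, using a Chernoff/Hoeffding tail on the number of elevated words within each sufficiently large class, shows that with positive probability no class deviates from its halved value by more than the allotted slack; to also maintain the distance invariant I would, as the overview suggests, additionally force $R_{n+1}$ to be non-orthogonal to a constant fraction of the lightest surviving words, which is compatible with the averaging argument because the light classes are small. Hence a good row exists, and the greedy construction runs forever.

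Next I would convert the weight-distribution bound into an ML-decoding error bound over $\BSC(p)$. Here I would invoke the standard union-bound / Poltyrev-type analysis (cf.~\cite{poltyrev1994bounds,shulman1999random,barg2002random}): for a linear code with weight distribution $(w_{i,n})$, the ML decoding error under $\BSC(p)$ is bounded by a sum over nonzero weights $i$ of $w_{i,n}$ times a pairwise error term that decays like $2^{-n\,E(i/n,p)}$ for an appropriate exponent $E$. Plugging in $w_{i,n}\le \binom{n}{i}2^{k-n}+\text{slack}$ reduces the bound, up to lower-order terms, to the random-linear-code bound, which is known to be $2^{-\Omega(n)}$ whenever $k/n < 1-H(p)-\delta$; the sub-linear-distance guarantee $t_n=\Omega((n-k)/\log n)$ is what controls the contribution of the smallest weights (which the averaging argument cannot pin down tightly) and is responsible for the $\log n$ loss, yielding the claimed $e^{-\Omega(n/\log n)}$. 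Making this code \emph{systematic} is a minor add-on: I would initialize the construction with the $k$ rows of the identity matrix (which has a perfectly fine weight distribution to seed the invariant, since at $n=k$ the ``ideal'' is just the binomial weights of $\bit^k$ itself) and continue greedily from there.

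Finally, the complexity claim. Both encoding and decoding of the $(n,k)$ prefix need the matrix $G[1:n]$, which is produced by the greedy procedure: each of the $n$ steps searches over all $2^k$ candidate rows, and evaluating the invariant for a candidate costs $O(nk\cdot 2^k)$ (iterating over all $2^k$ messages, each inner product costing $O(k)$, and updating the $O(n)$ weight counters), giving $O(nk\cdot 2^{2k})$ overall; encoding is then $n$ inner products of length $k$, and ML decoding is a brute-force search over the $2^k$ codewords, both absorbed into the same bound. I expect the main obstacle to be the combinatorial existence step — specifically, choosing the slack and the distance threshold so that the three demands are simultaneously satisfiable: the slack must be small enough that the final ML bound is still $e^{-\Omega(n/\log n)}$ and not washed out, yet large enough (and the ``sufficiently large'' cutoff for applying concentration chosen carefully enough) that the union bound over $\Theta(n)$ weight classes leaves positive probability, all while the forced elevation of light words does not collide with the averaging requirement. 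Getting the book-keeping on these parameters consistent across all $n$ simultaneously — so that a single invariant is genuinely inductive — is the delicate part; the rest is an application of known weight-distribution-to-error-probability estimates.
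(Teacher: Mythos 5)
Your outline matches the paper's strategy (greedy row-by-row construction that splits weight classes and elevates light words, a weight distribution close to that of a random linear code, and a Poltyrev-type ML analysis with brute-force complexity), but two points need repair, one minor and one essential. The minor one: for a fixed candidate row $R$ the indicators $\langle R,x\rangle$ over distinct nonzero $x$ are only \emph{pairwise} independent (if $x_3=x_1+x_2$ then the third indicator is the XOR of the first two), so Chernoff/Hoeffding is not available; the paper uses Chebyshev (Claim~\ref{claim:split}), which only guarantees a $\frac{1}{2\sqrt{n}}$-split of classes of size at least $2n^2$. As a consequence the achievable invariant is \emph{multiplicative}, $w_{i,n}\leq 2n^4+w^*_i(n,k)\cdot\prod_{j}(1+1/\sqrt{j})\leq 2n^4+w^*_i(n,k)\,e^{2\sqrt{n}}$ (Claim~\ref{claim:weightRLC}), rather than your additive ``$w^*_i+\text{slack}$'': the per-step deviation scales with the class size, so an additive slack is not inductively maintainable, while the $e^{O(\sqrt{n})}$ factor is harmless in the ML bound because it is subexponential.

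The essential gap is the one you yourself flag as ``the delicate part'' but do not resolve: small weight classes. No probabilistic argument can produce a row that splits a class of small (say, subquadratic) size, and once such words are elevated they merge into larger classes, so an invariant quantified over \emph{every} class is not obviously inductive. The paper's missing idea is a \emph{marking} strategy (Algorithm~\ref{alg:Rn}): whenever a class has fewer than $2n^2$ unmarked words, all its words are marked forever and permanently exempted from the splitting requirement; Observation~\ref{obs:M} bounds the total number of marked words by $2n^4$, which is exactly the additive term in the weight bound, and the distance is recovered by forcing each new row to $1/8$-elevate the minimum-weight class, so that (by the balls-in-bins argument of Claim~\ref{claim:dist}) each bin holding at most $2n^4$ marked words is emptied within $\log_{8/7}(2n^4)=\Theta(\log n)$ steps. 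This is precisely where the minimum distance $\Omega\bigl((n-k)/\log n\bigr)$, and hence the $\log n$ loss in the exponent $e^{-\Omega(n/\log n)}$, comes from. With that mechanism supplied, the remainder of your plan --- the two-regime ML analysis (union bound with $P_i\leq 2^{-\beta i}$ below the Gilbert--Varshamov radius, and the extension of Poltyrev's theorem, Theorem~\ref{thm:poltyrev}, above it), the systematic initialization with the identity, and the $O(nk\cdot 2^{2k})$ complexity accounting --- coincides with the paper's proof.
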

The decoder is simply maximum likelihood decoding. The encoder multiplies the
information word by the generating matrix.
Each row of the generating matrix can be computed in
time $O(k\cdot 2^{2k})$. Hence, the generating matrix of $C_{n,k}$ can be computed in
time $O(nk\cdot 2^{2k})$. Both the encoder and decoder require the generating matrix.
Once the generating matrix of $C_{n,k}$ is computed, the running times of the encoding and the decoding
are as follows:
    \begin{itemize}
    \item The encoding of $\enc(m,[n:1])$ of $m\in \BIT{k}$ can be computed in time
      $O( n\cdot k)$.
    \item Computing $\dec(y,k)$ for $y\in\BIT{n}$ can be done in $O(n\cdot k \cdot
      2^k)$.
    \end{itemize}

In the following sections we describe the construction of the generating matrix of
the code and analyze the error of the maximum likelihood decoder.
\subsection{Computing the generating matrix}

Our goal is to construct an infinite generating matrix $G$ with $k$ columns.  Let
$R_i\in \bit^k$ denote the $i$th row of the generating matrix.  Let $G_n$ denote the
$k\times n$ matrix, the rows of which are $(R_i)_{i=1\ldots n}$. Let $C_{n,k}$ denote
the code generated by $G_n$.  The generating matrix $G$ begins with the $k\times k$
identity matrix, and hence each code $C_{n,k}$ is systematic.  Subsequent rows $R_i$
(for $i>k$) of the generating matrix are constructed one by one.  Let
$W_{i,n}\triangleq \{x\in\bit^k : \wt(G_n\cdot x)=i\}$ denote the $i$th weight class
of $C_{n,k}$. The rows are chosen so that the weight distribution $(|W_{1,n}|,\ldots,
|W_{n,n}|)$ of $C_{n,k}$ is close to that of a random $[n,k]$-linear code
$C^*_{n,k}$.  Note that when a row vector $R_{n+1}$ is added, if $x\in\bit^k$ is
orthogonal to $R_{n+1}$, then $\wt(G_{n+1}\cdot x)=\wt(G_{n}\cdot x)$; otherwise,
$\wt(G_{n+1}\cdot x)=\wt(G_{n}\cdot x)+1$.  Thus $R_{n+1}$ splits each weight class
$W_{i,n}$ to two parts: the orthogonal vectors which ``remain'' in $W_{i,n+1}$, and
the non-orthogonal vectors which are ``elevated'' to $W_{i+1,n+1}$.

\begin{definition}
A vector $R\in \GF(2)^k$ $\eps$-\emph{splits}
 a set $S\subseteq \GF(2)^k$ if
\[
(\frac 12 - \eps)\cdot |S| \leq |\{s\in S \mid s\cdot R = 1\}| \leq (\frac 12 + \eps)\cdot |S|.
\]
A vector $R\in \GF(2)^k$ $\eps$-\emph{elevates}
 a set $S\subseteq \GF(2)^k$ if
\[
|\{s\in S \mid s\cdot R = 1\}| \geq \eps \cdot |S|.
\]
\end{definition}

Ideally, we would like to find a row $R_{n+1}$ that $\eps$-splits every weight class
$W_{i,n}$. Since we cannot achieve this, we compromise on splitting only part of the
weight classes, as follows.
By a probabilistic argument, there exists a single vector which $\eps$-splits all
weight classes that are large (where a weight class $W_{i,n}$ is large if $|W_{i,n}|\geq
2n^2$). However, we cannot find vector that also $\epsilon$-splits every weight class
that is small.

The algorithm for computing the rows $R_i$ of $G$ for $i>k$ is listed as
Algorithm~\ref{alg:Rn}.  The algorithm employs a marking strategy to deal with small
weight classes $W_{i,n}$.  Initially, all the nonzero information words are unmarked.
Once an information word becomes a member of a small weight class, it is marked, and
remains marked forever (even if it later belongs to a weight class $W_{i',n'}$ which is
large). The unmarked vectors in $W_{i,n}$ are denoted by $\widehat{W}_{i,n}$. By
definition, the set $\widehat{W}_{i,n}$ is either empty or large, and so there exists
a vector $R_{n+1}$ which $\eps$-splits $\widehat{W}_{i,n}$. In addition, $R_{n+1}$ is
required to elevate the set of nonzero codewords of minimum weight.  As we will
later see, the distance of the resulting code grows sufficiently fast as a function
of $n$, and its weight distribution is sufficiently close to the expected weight
distribution of a random linear
code. 

\begin{algorithm}
  \caption{Compute-Generating-Matrix - An algorithm for computing rows $R_n$ of the
    generating matrix of the rateless code for $n>k$. }
\begin{enumerate}
\item Let $(R_1,\ldots,R_k)$ be the rows of the $k\times k$ identity matrix.
\item Initialize the set of marked information words $M\gets \emptyset$.
\item For $n=k$ to $\infty$ do
  \begin{enumerate}
\item For $1\leq i\leq n$, let $W_{i,n}$ be the set of information words that are
  encoded by a codeword of weight $i$.
\item Let $d>0$ be the minimal positive integer for which $W_{d,n}$ is non-empty.
\item For every $i$, if $|W_{i,n}\setminus M|<2n^2$, then mark all the information
  words in $W_{i,n}$ by setting
  $M\gets M \cup W_{i,n}$. Let $\widehat{W}_{i,n}\triangleq (W_{i,n}\setminus M)$
  denote the unmarked vectors in $W_{i,n}$.
\item\label{line:Rn} Let $R_{n+1}$ be the lexicographically first vector in
  $\GF(2)^k$ that simultaneously $\frac{1}{2\sqrt{n}}$-splits every unmarked weight
  class $\widehat{W}_{i,n}$ and $1/8$-elevates $W_{d,n}$.
  \end{enumerate}
\end{enumerate}
\label{alg:Rn}
\end{algorithm}

We remark that (according to the analysis) the $1/8$-elevation of $W_{d,n}$ can be skipped if
$\widehat{W}_{d,n}\neq\emptyset$ (namely, the elevation is required only if every
vector in $W_{d,n}$ is marked).
It is not hard to verify that Algorithm~\ref{alg:Rn} can compute the first $n$ rows
in time $O(nk\cdot 2^{2k})$.  The following lemma states that Algorithm~\ref{alg:Rn}
succeeds in finding a row $R_n$ for every $n>k$.
\begin{lemma}\label{lem:Rn}
The algorithm always finds a suitable vector $R_{n+1}$ in Line~\ref{line:Rn}.
\end{lemma}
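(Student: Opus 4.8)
\noindent\textbf{Proof of Lemma~\ref{lem:Rn}.}
The plan is to invoke the probabilistic method: I will show that a uniformly random vector $R\in\GF(2)^k$ satisfies \emph{both} requirements of Line~\ref{line:Rn} with positive probability, so that in particular a suitable vector exists and the lexicographically first one is well defined. Two facts drive the argument. First, by construction every nonempty unmarked class $\widehat{W}_{i,n}$ is large, i.e.\ $|\widehat{W}_{i,n}|\ge 2n^2$. Second, for two \emph{distinct nonzero} vectors $s,s'\in\GF(2)^k$ the bits $s\cdot R$ and $s'\cdot R$ are independent and uniform, so for any finite nonempty $S\subseteq\GF(2)^k\setminus\{0\}$ the count $X_S:=|\{s\in S:s\cdot R=1\}|$ is a sum of pairwise independent $\{0,1\}$ variables of mean $1/2$; hence $\E[X_S]=|S|/2$ and $\Var[X_S]=|S|/4$. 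This applies to every set we need: the all-zero information word is encoded to the all-zero codeword, so it never belongs to $W_{i,n}$ for $i\ge 1$, and in particular not to any $\widehat{W}_{i,n}$ nor to $W_{d,n}$ (recall $d\ge 1$).

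First I would bound the probability that $R$ fails to $\tfrac{1}{2\sqrt n}$-split some unmarked class. Fixing a nonempty $S=\widehat{W}_{i,n}$ with $|S|\ge 2n^2$, Chebyshev's inequality gives
\begin{align*}
\Pr\!\left[R\text{ does not }\tfrac{1}{2\sqrt n}\text{-split }S\right]
&=\Pr\!\left[\bigl|X_S-\tfrac{|S|}{2}\bigr|\ge\tfrac{|S|}{2\sqrt n}\right]\\
&\le\frac{|S|/4}{\bigl(|S|/(2\sqrt n)\bigr)^{2}}=\frac{n}{|S|}\le\frac{1}{2n}.
\end{align*}
Since the weight of a nonzero codeword of $C_{n,k}$ lies in $\{1,\dots,n\}$, there are at most $n$ classes $\widehat{W}_{i,n}$, so a union bound shows that $R$ fails to split at least one of them with probability at most $1/2$.

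It remains to handle the $1/8$-elevation of $W_{d,n}$ and to combine the estimates; I expect the boundary case $|W_{d,n}|=1$ to be the only delicate point. Put $m:=|W_{d,n}|$. If $m\ge 4$, then applying Chebyshev's inequality with deviation $3m/8$ around the mean $m/2$,
\[
\Pr\!\left[X_{W_{d,n}}<\tfrac m8\right]\le\frac{m/4}{(3m/8)^{2}}=\frac{16}{9m}\le\frac49<\frac12;
\]
and if $m\in\{2,3\}$, the event $X_{W_{d,n}}<m/8$ is precisely ``$s\cdot R=0$ for every $s\in W_{d,n}$'', which has probability $2^{-\dim\mathrm{span}(W_{d,n})}\le 2^{-2}=\tfrac14$, since two distinct nonzero vectors already span a $2$-dimensional subspace. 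In either case, adding this elevation-failure probability to the $1/2$ bound from the previous paragraph keeps the total strictly below $1$, so a vector $R$ meeting all requirements exists. Finally, if $m=1$ the elevation requirement is ``$s\cdot R=1$ for the unique $s\in W_{d,n}$'', which fails with probability exactly $1/2$, so a naive union bound only gives total failure probability at most $1$. Here I would observe that $m=1<2n^2$ forces $W_{d,n}$ to be marked, hence $\widehat{W}_{d,n}=\emptyset$ and the weight $d$ imposes no splitting constraint; the union bound of the previous paragraph then runs over at most $n-1$ classes and yields at most $(n-1)/(2n)<1/2$, so the total failure probability is again strictly below $1$. In every case the algorithm can find a suitable $R_{n+1}$, which proves the lemma.
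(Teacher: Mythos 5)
Your proof is correct and follows essentially the same route as the paper's: pairwise independence plus Chebyshev to show a uniformly random $R$ $\tfrac{1}{2\sqrt n}$-splits every nonempty unmarked class except with probability at most $1/2$, a separate bound for the $1/8$-elevation of $W_{d,n}$, and a union bound to conclude that a suitable vector exists. The only difference is bookkeeping at the boundary: the paper handles small $W_{d,n}$ (size at most $8$) by noting that half of all vectors are non-orthogonal to a fixed element of $W_{d,n}$ and relies on strictly more than half of the vectors being simultaneous splitters, whereas you treat $|W_{d,n}|=1$ by observing that $\widehat{W}_{d,n}=\emptyset$ removes one term from the splitting union bound -- both resolutions are valid.
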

The lemma is proven via a simple probabilistic argument. See Appendix~\ref{app:Rn}.
\remove{
\begin{proof}
We begin with the following claim.
\begin{myclaim}\label{claim:split}
  For a set of $k$-bit vectors $W$ of size larger than $2n^2$, there are at least  $2^k \cdot (1-\frac{1}{2n})$ vectors $R$ which
 $\frac{1}{2\cdot \sqrt{n}}$-split $W$.
\end{myclaim}
\begin{proof}
  Let $W=\{x_1,\ldots, x_m\}$, where $m\geq 2n^2$.  Let $R$ denote a random vector
  chosen uniformly in $\BIT{k}$.  This uniform distribution induces $m$ random
  variables defined by
\begin{align*}
  Z_i & \triangleq
  \begin{cases}
    1 & \text{if $R\cdot x_i=1$,}\\
    0 & \text{if $R\cdot x_i=0$.}
  \end{cases}
\end{align*}
The expectation of each random variable $Z_i$ is $1/2$, and the variance of each
$Z_i$ is $1/4$. (However, they are not independent.) Since the elements of $W$ are
distinct, the random variables $\{Z_i\}_i$ are pairwise independent.
By Chebyshev's Inequality,
\begin{align}
  \label{eq:Cheb}
\Prob \left( \left|\frac 1m\cdot \sum_{i=1}^m Z_i -\frac 12 \right| \geq \frac{1}{2\cdot\sqrt{n}}\right) < \frac{1}{2n}.
\end{align}
To complete the proof, note that $R$ is an $\frac{1}{2\cdot \sqrt{n}}$-splitter for $W$
if and only if
$\left|\frac 1m\cdot \sum_{i=1}^m Z_i -\frac 12 \right| <
\frac{1}{2\cdot\sqrt{n}}$.
\myqed\end{proof}

Recall that $\widehat{W}_{i,n}$ is either empty or of size larger than $2n^2$. It follows, by a union bound, that more than half of the $k$-bit vectors simultaneously $\frac{1}{2\cdot \sqrt{n}}$-split all the sets $\widehat{W}_{1,n},\ldots, \widehat{W}_{n,n}$. Therefore, to prove the lemma it suffices to show that at least half of the $R$'s $(1/8)$-elevates the set $W_{d,n}$.

Note that any $3/8$-splitter of $W_{d,n}$ is also a $1/8$-elevator of this set.
Hence, for $|W_{d,n}| \geq 4$, we can apply the argument of the above Claim~\ref{claim:split} and get that at least half of the $R$'s $1/8$-elevate $W_{d,n}$. In case $|W_{d,n}| <4$, we can just
consider the set of non-orthogonal vectors to one vector $x\in W_{d,n}$,
and the claim follows. This completes the proof of Lemma~\ref{lem:Rn}.
\myqed\end{proof}
}
\subsection{Weight Distribution}
In this section we analyze the weight distribution of the linear code $C_{n,k}$.
We let $w_{i,n}$ be the size
of $W_{i,n}$, the set of information words whose encoding under $C_{n,k}$ has Hamming
weight $i$. We will show that $w_{i,n}$ is not far from the expected weight
distribution $w^*_i(n,k) \triangleq {n \choose i} \cdot 2^{k-n}$ of a random $[n,k]$
linear code. \short{Let $\Pikn\triangleq \prod_{j=k+1}^{n-1}
  \left(1+\frac{1}{\sqrt{j}}\right)=O(e^{2(\sqrt{n}-\sqrt{k})}).$}

\begin{observation}\label{obs:M}
After $n$ iterations, the number of marked information words is less than $2n^4$.
\end{observation}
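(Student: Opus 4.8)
The only place where information words are ever added to the set $M$ is step~(c) of the main loop of Algorithm~\ref{alg:Rn} (step~2 only initializes $M\gets\emptyset$, and step~(d) leaves $M$ untouched). So the plan is to bound, iteration by iteration, the number of \emph{newly} marked words, and then sum.

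Fix an execution of step~(c) with loop-variable value $j$ (where $k\le j\le n$). For each weight index $i$, the algorithm performs $M\gets M\cup W_{i,j}$ only when $|W_{i,j}\setminus M|<2j^2$ for the current set $M$; hence that particular assignment adds at most $2j^2-1$ information words that were not already in $M$. The relevant weight indices are $i\in\{1,\dots,j\}$ (codewords of $C_{j,k}$ have Hamming weight at most $j$, and the zero word, which lies in $W_{0,j}$, is never marked), so at most $j$ such assignments occur and strictly fewer than $j\cdot 2j^2=2j^3$ information words are marked during iteration $j$. Crucially, this counts only words not previously in $M$, so there is no double counting across iterations.

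Summing over all iterations that have been executed by the time the $n$-row prefix has been built — these correspond to loop-variable values in $\{k,\dots,n\}\subseteq\{1,\dots,n\}$ — the total number of marked information words is at most
\[
\sum_{j=1}^{n} j\,(2j^2-1)\;=\;\frac{n^2(n+1)^2}{2}-\frac{n(n+1)}{2}\;<\;2n^4,
\]
where the last inequality uses $(n+1)^2<4n^2$ for $n\ge 2$ (and is checked directly for $n=1$). This proves the observation.

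The only place any care is needed is the bookkeeping: making sure the per-iteration bound $2j^3$ is counted against \emph{previously unmarked} words (so sets are not counted twice when they reappear, possibly large, in later iterations), and doing the $\sum j^3$ arithmetic tightly enough to land strictly below $2n^4$. Both are routine, so I do not anticipate a real obstacle here.
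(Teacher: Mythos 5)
Your proposal is correct and follows essentially the same argument as the paper: every marking event $M\gets M\cup W_{i,j}$ adds fewer words than the threshold ($2j^2$ at iteration $j$, versus the paper's uniform bound $2n^2$), and there are at most $n^2$ such class--iteration events, giving fewer than $2n^4$ marked words in total. Your per-iteration accounting is just a slightly tighter version of the paper's count, not a different method.
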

\begin{proof}
  For every $i,n'\leq n$ the set $W_{i,n'}$ contributes less than $2n^2$ information
  words to the set $M$ of marked words. Hence there are most $2n^4$ marked vectors
  after the $R_n$ is chosen.
\end{proof}

\begin{myclaim}\label{claim:weightRLC}
  For every $n$ and $i$, we have that $w_{i,n}\leq 2n^4+w^*_i(n,k) \cdot \Pikn$\short{.} \full{where
  \[
\Pikn\triangleq \prod_{j=k+1}^{n-1} \left(1+\frac{1}{\sqrt{j}}\right)\leq e^{2(\sqrt{n}-\sqrt{k})}.
\]}
\end{myclaim}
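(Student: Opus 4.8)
### Plan for proving Claim~\ref{claim:weightRLC}

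My plan is to track how each weight class $W_{i,n}$ evolves as rows are appended, separating the contribution of marked vectors (which I can bound crudely by Observation~\ref{obs:M}) from that of unmarked vectors (which I can control precisely via the $\eps$-splitting guarantee). Concretely, I will prove by induction on $n$ that $w_{i,n} \leq 2n^4 + \hat{w}_{i,n}$, where $\hat{w}_{i,n} \triangleq |\widehat{W}_{i,n}|$ denotes the number of \emph{unmarked} vectors of weight $i$, and simultaneously that $\hat{w}_{i,n} \leq w^*_i(n,k)\cdot \Pikn$. Since $w_{i,n} = \hat{w}_{i,n} + |W_{i,n}\cap M|$ and the latter is at most the total number of marked words, which is less than $2n^4$ by Observation~\ref{obs:M}, the first inequality is immediate once the second is established; so the real content is the bound on $\hat{w}_{i,n}$.

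For the inductive step on $\hat{w}_{i,n}$, consider passing from $n$ to $n+1$. A weight-$i$ unmarked vector at stage $n+1$ arose either from $\widehat{W}_{i,n}$ (it was orthogonal to $R_{n+1}$, so stayed) or from $\widehat{W}_{i-1,n}$ (it was non-orthogonal, so got elevated). Since $R_{n+1}$ is chosen to $\frac{1}{2\sqrt{n}}$-split every nonempty $\widehat{W}_{i,n}$, the number of orthogonal vectors in $\widehat{W}_{i,n}$ is at most $\bigl(\tfrac12 + \tfrac{1}{2\sqrt{n}}\bigr)\hat{w}_{i,n}$ and the number of non-orthogonal ones in $\widehat{W}_{i-1,n}$ is at most $\bigl(\tfrac12 + \tfrac{1}{2\sqrt{n}}\bigr)\hat{w}_{i-1,n}$. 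Ignoring the fact that marking during stage $n+1$ can only \emph{decrease} $\hat{w}_{i,n+1}$ (which only helps us), this gives
\[
\hat{w}_{i,n+1} \;\leq\; \Bigl(\tfrac12 + \tfrac{1}{2\sqrt{n}}\Bigr)\bigl(\hat{w}_{i,n} + \hat{w}_{i-1,n}\bigr).
\]
Feeding in the induction hypothesis $\hat{w}_{j,n} \leq w^*_j(n,k)\cdot \Pi_{k,n}$ and using $w^*_i(n,k) + w^*_{i-1}(n,k) = \bigl(\binom{n}{i} + \binom{n}{i-1}\bigr)2^{k-n} = \binom{n+1}{i}2^{k-n} = 2\,w^*_i(n+1,k)$ (Pascal's rule), the right-hand side becomes $\bigl(1 + \tfrac{1}{\sqrt{n}}\bigr)\,w^*_i(n+1,k)\cdot \Pi_{k,n} = w^*_i(n+1,k)\cdot \Pi_{k,n+1}$, closing the induction. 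The base case $n=k$ is clear: $G_k$ is the identity, so $W_{i,k}$ has exactly $\binom{k}{i}$ vectors $= w^*_i(k,k)$, nothing is marked, and $\Pi_{k,k}$ is the empty product $=1$.

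The one subtlety I would be careful about is the bookkeeping of \emph{when} marking happens relative to \emph{when} splitting happens within an iteration: in Algorithm~\ref{alg:Rn}, at stage $n$ one first marks the small classes (updating $\widehat{W}_{i,n}$) and only then chooses $R_{n+1}$ to split the resulting $\widehat{W}_{i,n}$'s. So the splitting bound applies to the post-marking sets, and the transition above is valid for those sets; the vectors that get newly marked at stage $n+1$ simply fall out of the $\hat{w}$ count and get absorbed into the $2(n+1)^4$ slack term. Finally, the closed-form estimate $\Pikn = \prod_{j=k+1}^{n-1}\bigl(1+\tfrac{1}{\sqrt j}\bigr) \leq \exp\bigl(\sum_{j=k+1}^{n-1} \tfrac{1}{\sqrt j}\bigr) \leq \exp\bigl(2(\sqrt n - \sqrt k)\bigr)$ follows from $1+x \leq e^x$ and comparing the sum to the integral $\int_k^n t^{-1/2}\,dt$. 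I expect the index-shift in the product (whether the top index is $n-1$ or $n$) to be the only place where an off-by-one could creep in, so I would pin down $\Pi_{k,n}$ carefully against the recursion $\Pi_{k,n+1} = \Pi_{k,n}\cdot\bigl(1+\tfrac{1}{\sqrt n}\bigr)$ that the induction actually uses.
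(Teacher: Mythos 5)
Your proposal is correct and follows essentially the same route as the paper: split $w_{i,n}$ into marked vectors (bounded by $2n^4$ via Observation~\ref{obs:M}) and unmarked vectors, and bound $|\widehat{W}_{i,n}|$ by induction using the $\frac{1}{2\sqrt{n}}$-splitting recurrence together with Pascal's rule for $w^*_i$, plus $1+x\leq e^x$ for the closed-form bound on $\Pikn$. The index-shift subtlety you flag is real (the paper's product starting at $j=k+1$ does not quite match its own recursion, which already uses the factor $1+\frac{1}{\sqrt{k}}$ at the first step), but it only affects the constant and not the stated bound $e^{2(\sqrt n-\sqrt k)}$.
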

\begin{proof}
By Observation~\ref{obs:M}, it suffices to bound the unmarked vectors by
\begin{align}\label{eq:win}
  |\widehat{W}_{i,n}| \leq w^*_i(n,k) \cdot \Pikn\:.
\end{align}
Indeed, $|\widehat{W}_{i,n}|$ and $w^*_i(n,k)$ satisfy the following
recurrences:
{\small
\begin{align*}
w^*_i(n,k)&=\frac 12 \cdot (w^*_{i-1,n-1} + w^*_{i,n-1})\\
|\widehat{W}_{i,n}| &\leq
\left(1+\frac{1}{\sqrt{n-1}}\right)\cdot \frac 12
\cdot \left(|\widehat{W}_{i-1,n-1}| +|\widehat{W}_{i,n-1}|\right)\:.
\end{align*}
}
We can now prove Eq.~\ref{eq:win} by induction on $n\geq k$. Indeed,
$w_{i,k}=w_{i,k}^*$, and
\begin{align*}
|\widehat{W}_{i,n}| &\leq
\left(1+\frac{1}{\sqrt{n-1}}\right)\cdot \frac 12
\cdot \left(w^*_{i-1,n-1}\Pi_{k,{n-1}} +w^*_{i,n-1}\Pi_{k,{n-1}}\right)\\
&=
 \frac {1}{2} \left(w^*_{i-1,n-1} +w^*_{i,n-1} \right)\Pikn\\
 &= w^*_i(n,k) \Pi_{k,n}.
\end{align*}
The claim follows.
\myqed\end{proof}

\noindent We will also need to prove that the distance of $C_{n,k}$ is sufficiently large.
\begin{myclaim}\label{claim:dist}
For every $n>k$, the minimum distance of the code $C_{n,k}$ is greater than
$\frac{n-k}{55\cdot \log n}$.
\end{myclaim}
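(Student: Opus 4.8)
The plan is to bound the number of added rows, $n-k$, in terms of the final minimum distance $D\triangleq d(n)$, where $d(m)$ denotes the minimum distance of $C_{m,k}$; I will show $n-k=O(D\log n)$, which rearranges to the claimed inequality (the regime $n-k\le O(1)$ is trivial, since there $\frac{n-k}{55\log n}<1\le d(n)$). Organize the run of Algorithm~\ref{alg:Rn} into \emph{phases}: for each $d$ the set of steps $m$ with $d(m)=d$ is an interval $[a_d,a_{d+1}-1]$, with $a_1=k$. Two structural facts drive the proof. First, the minimum distance increases by exactly $1$ at each jump: if $W_{d,m}\neq\emptyset$ and $W_{d,m+1}=\emptyset$ then every weight-$d$ word was elevated to weight $d+1$ (weights never decrease and nothing lighter exists), so $d(m+1)=d+1$; hence phases $d=1,\dots,D$ are all non-empty. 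Second, throughout the phase at minimum distance $d$ the class $W_{d,\cdot}$ never gains members, stays non-empty, and the $1/8$-elevation the algorithm performs on the current minimum class gives $|W_{d,m+1}|\le\tfrac78|W_{d,m}|$.

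I would then split each phase into a \emph{marking-free part} $[a_d,m_1^{(d)}]$, on which $\widehat W_{d,m}\neq\emptyset$, followed by a \emph{marked part}, on which $\widehat W_{d,m}=\emptyset$; this ordering is valid because during the phase $\widehat W_{d,\cdot}$ only loses elements (no weight-$(d-1)$ words, marking is monotone). On the marked part, $W_{d,m}\subseteq M$, so $|W_{d,m}|\le|M|<2m^4\le 2n^4$ by Observation~\ref{obs:M}; since this quantity stays $\ge1$ and shrinks by a factor $\ge\tfrac78$ per step, the marked part has length at most $\log_{8/7}(2n^4)=O(\log n)$, and summing over the $D$ phases contributes only $O(D\log n)$ to $n-k$.

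For the marking-free parts the key is the weight-distribution bound. At the last marking-free step $m_1=m_1^{(d)}$ we have, on one hand, $|\widehat W_{d,m_1}|\ge 2m_1^2$ (else the algorithm would have marked the whole class), and on the other hand $|\widehat W_{d,m_1}|\le w_d^*(m_1,k)\cdot\Pi_{k,m_1}\le\binom{m_1}{d}2^{k-m_1}e^{2(\sqrt{m_1}-\sqrt k)}$ by Eq.~(\ref{eq:win}) and the bound on $\Pi_{k,n}$ from Claim~\ref{claim:weightRLC}. Using $\binom{m_1}{d}\le m_1^{d}$ and $\sqrt{m_1}-\sqrt k\le\sqrt{m_1-k}$, taking logarithms, and absorbing the $\sqrt{m_1-k}$ term, this forces $m_1-k\le 2D\log n$ (again modulo an $O(1)$ correction). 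Now the crucial observation: the marking-free intervals over all $d\le D$ are pairwise disjoint, are non-decreasing in $d$, and all lie inside $[k,m_1^{(D)}]$, so their \emph{total} length is at most $m_1^{(D)}-k+1\le 2D\log n+1$ — linear in $D$, not $\sum_{d\le D}O(d\log n)$. Adding the $O(D\log n)$ from all marked parts, the $O(D\log n)$ from all marking-free parts, and one extra $O(\log n)$ for the (possibly unfinished) marked tail of phase $D$, gives $n-k\le c\,D\log n$ with an explicit $c<55$, i.e. $D>\frac{n-k}{55\log n}$.

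The step I expect to be the main obstacle is making the weight-distribution estimate actually bite. A priori $|W_{d,m}|$, and even its unmarked part $\widehat W_{d,m}$, can be super-polynomially large, so the $\tfrac78$-per-step shrinkage alone only caps a phase at $\log_{8/7}(2^k)=O(k)$ steps, far too weak for an $O(\log n)$ bound. The resolution is exactly the interplay between the marking rule and Eq.~(\ref{eq:win}): a phase can linger in its marking-free part only while $\widehat W_{d,m}\ge 2m^2$, and the weight-distribution upper bound on $\widehat W_{d,m}$ drops below $2m^2$ as soon as $m-k\gtrsim d\log m$; together with the disjoint-and-nested structure of these intervals, this turns what would be an $O(D^2\log n)$ total into $O(D\log n)$.
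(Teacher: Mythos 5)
Your proposal is correct, but it takes a genuinely different route from the paper's proof. The paper argues over a fixed time horizon of $n-k$ steps split into two halves: during the first $(n-k)/2$ steps the bound $|\widehat W_{i,k+t}|\le(2/3)^t\binom{k+t}{i}$ (Claim~\ref{claim:f}) shows that no \emph{unmarked} word can remain at weight at most $\Delta$, and during the second half an induction over weight classes, with the schedule $t(i)=t+i\log_{8/7}(2n^4)$, sweeps the at most $2n^4$ \emph{marked} words (Observation~\ref{obs:M}) past weight $\Delta$, one class per $\log_{8/7}(2n^4)$ steps, using the $1/8$-elevation of the minimum-weight class. You instead partition time into phases according to the current minimum distance $d$ and amortize: the marked tail of each phase is short by the same $7/8$-decay against the $2n^4$ cap, while the \emph{total} length of all marking-free portions is bounded globally (not per phase) by playing the marking threshold $|\widehat W_{d,m}|\ge 2m^2$ against Eq.~(\ref{eq:win}), which pins the position of the last marking-free step at $k+O(D\log n)$; the disjoint-and-ordered structure of these intervals is exactly what avoids the naive $O(D^2\log n)$ count. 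This dispenses with Claim~\ref{claim:f} altogether and reuses Eq.~(\ref{eq:win}), which is needed anyway for Claim~\ref{claim:weightRLC}, and it makes the dynamics transparent (the distance increases by exactly one per phase, and each phase is short); the paper's version, in exchange, needs no appeal to the $w^*_i\Pi_{k,n}$ bound and directly exhibits the two-stage schedule. Your supporting claims all check out: the jump-by-one fact, the absence of arrivals into $W_{d,\cdot}$ during its phase, the monotonicity of $\widehat W_{d,\cdot}$ within a phase (so the marking-free part indeed precedes the marked part), and the shrinkage $|W_{d,m+1}|\le\tfrac78|W_{d,m}|$ from the algorithm's mandatory elevation of the minimum class. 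The only informal spot is the constant bookkeeping, but it has ample slack: with $\log_{8/7}(2n^4)+1\le 27\log n$ for $n\ge 2$ and the marking-free total at most $2D\log n+O(1)$, one gets $n-k\le 29\,D\log n+O(1)$, comfortably below $55\,D\log n$ once the trivial small cases are set aside.
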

\begin{proof}
 It is easier to view the evolution of the weight distribution of $C_{n,k}$ as a
  process of shifting balls in $n$ bins.  A ball represents a nonzero information
  word, and a bin corresponds to a weight class.  We assume that $\bin(1)$ is positioned on the
  left, and $bin(n)$ is positioned on the right.  Moving (or shifting) a ball one bin
  to the right means that the augmentation of the generating matrix by a new row
  increases the weight of the encoding of the information word by one. Note that, as
  the generating matrix is augmented by a new row, a ball either stays in the same
  bin or is shifted by one bin to the right.

  Step $t$ of the process corresponds to the weight distribution of $C_{n',k}$ for
  $n'=t+k$.  Let $\bin_t(i)$ denote the set of balls in $\bin(i)$ after step
  $t$.
By Algorithm~\ref{alg:Rn}, the process treats marked balls and unmarked balls differently.

  Let $t\triangleq(n-k)/2$ denote half the redundancy.
Let $\alpha\triangleq \frac{2}{\log_2 (8/7)} < 11$.
Let $\Delta\triangleq\frac{n-k}{\alpha \log (2n^4)}$.
In these terms, We prove a slightly stronger minimum distance, namely,
\begin{equation}
  \label{eq:goal}
  \bin_{2t}(i)=\emptyset, \qquad \forall i \leq \Delta.
\end{equation}

  The proof is divided into two parts. First we consider the unmarked balls, and then
  we consider the marked balls.  We begin by proving that
\begin{equation}\label{eq:unmarked}
    bin_t(i)\setminus M= \emptyset, \qquad \forall i\leq \Delta.
\end{equation}
Namely, after $t$ iterations of Algorithm~\ref{alg:Rn}, the bins
$\bin(1),\ldots,\bin(\Delta)$ may contain only marked balls.  Note that
if $\bin_t(i)=\emptyset$ for every $i\leq \Delta$, then
$\bin_{2t}(i)=\emptyset$ for every $i\leq \Delta$.

The proof of Equation~\ref{eq:unmarked} is based Claim~\ref{claim:f} (proved in
Appendix~\ref{appendix:bound}) that states the following:
\begin{equation}
|\bin_{t}(i)| \leq \left(\frac 23\right)^{t} \cdot \binom{k+t}{i} \leq \left(\frac 23\right)^{t} \cdot (k+t)^i.
\label{eq:W}
\end{equation}
The intuition is as follows. Initially, $\bin_0(i)$ contains at most
$\binom{k}{i}$ vectors.  After step $t+1$, $\bin_{t+1}(i)$ contains roughly
half the balls of $\bin_t(i-1)$ (i.e., the elevated balls) and roughly half
the balls of $\bin_t(i)$ (i.e. the non-elevated balls). A recursive
analysis shows that after $t$ steps we get the above expression (for simplicity the
bound assumes only $1/3$-elevation) .

For $t=(n-k)/2$ and $i\leq \Delta$, the RHS of Eq.~\ref{eq:W} is smaller than 1, and
so Eq.~\ref{eq:unmarked} follows.

To prove that $\bin_{2t}(i)\cap M = \emptyset$ for every $i\leq \Delta$, let
$t(i)\triangleq t+i\cdot \log_{8/7} (2n^4)$.  Note that $t(\Delta)= 2t$.
We wish to prove, by induction on $i$, that the leftmost bin with a marked ball after $t(i)$ iterations is
$\bin({i+1})$. After $\log_{8/7} (2n^4)$ additional iterations, also $\bin(i+1)$
lacks marked balls. In this manner, after $2t$ iterations all the marked balls are
pushed to the right of $\bin({\Delta})$. Formally,
we claim that
\begin{equation}
  \label{eq:marked}
\bin_{t(i)}(j) \cap M = \emptyset, \qquad \forall j\leq i.
\end{equation}
Equation~\ref{eq:marked} suffices because $t(\Delta)=2t$, and hence it implies that
$\bin_{2t}(j)=\emptyset$ for every $j\leq \Delta$, as required.  The proof of
Eq.~\ref{eq:marked} is by induction on $i$. For $i=0$ the claim is trivial (because
every nonzero information word is encoded to a nonzero word). The induction step for
$i>0$ is as follows. For every $t(i-1) < t \leq t(i)$, if $\bin_t(i)$ contains a
marked ball, then, by the induction hypothesis, it is the leftmost bin that contains
a marked ball. Hence, each new row $R_{t+1}$ of the generator matrix $1/8$-elevates
$\bin_t(i)$.  Since $\bin_t(i)$ consists only of marked balls, by Obs.~\ref{obs:M},
it follows that $|\bin_{t(i-1)}(i)|< 2n^4$.  Hence, after $\log_{8/7}(2n^4)$ steps,
the bin is emptied, namely, $\bin_{t(i)}(i)=\emptyset$, as required.

We proved that $bin_{2t}(i)$ is empty if $i\leq \Delta$, and the claim follows.\myqed\end{proof}
\full{
Overall Claims~\ref{claim:dist} and~\ref{claim:weightRLC} imply that $C_{n,k}$ is
close to an ``average'' code in the following sense.
Let $\alpha\triangleq \frac{2}{\log_2 (8/7)} < 11$.

\begin{lemma}\label{lemma:weight distribution}
  The weight distribution of the constructed code $C_{n,k}$ satisfies the following
  bound:
  \begin{align}
    \label{eq:weight}
    w_{i,n} &\leq
    \begin{cases}
      0 & \text{if $0<i\leq \frac{n-k}{\alpha\log (2n^4)}$}\\
2n^4+w^*_i(n,k) \cdot \Pikn & \text{if $i> \frac{n-k}{\alpha \log (2n^4)}$}.
    \end{cases}
  \end{align}
\end{lemma}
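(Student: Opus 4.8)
The plan is to read the lemma off directly from the two claims just proved, so that no new combinatorial argument is needed and the only work is lining up parameters. Concretely, for $0<i\le \frac{n-k}{\alpha\log(2n^4)}$ the bound $w_{i,n}=0$ is precisely the content of Claim~\ref{claim:dist}, while for the remaining $i$ the bound $w_{i,n}\le 2n^4+w^*_i(n,k)\cdot\Pikn$ is exactly Claim~\ref{claim:weightRLC}.

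For the second case I would simply quote Claim~\ref{claim:weightRLC}, observing that the bound it provides in fact holds for \emph{every} $i$ (and every $n\ge k$), so the role of the case split in the lemma is only to replace this estimate by the sharper $w_{i,n}=0$ wherever the minimum-distance argument reaches. For the first case I would invoke the strengthened conclusion established inside the proof of Claim~\ref{claim:dist}, namely Eq.~\ref{eq:goal}: taking $t=(n-k)/2$ and $\Delta=\frac{n-k}{\alpha\log(2n^4)}$ one has $\bin_{2t}(i)=\emptyset$ for all $i\le\Delta$. The point I would make explicit is the indexing: the ball-shifting process is arranged so that step $s$ records the weight classes of $C_{k+s,k}$, hence step $2t=n-k$ records those of $C_{n,k}$, and for $i\ge 1$ the set $\bin_{2t}(i)$ is precisely $W_{i,n}$, with balls corresponding to nonzero information words. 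Therefore $w_{i,n}=|W_{i,n}|=|\bin_{2t}(i)|=0$ for $1\le i\le\Delta$, which is the first case with exactly the stated threshold. I would also note why the internal statement is needed rather than the externally advertised form of Claim~\ref{claim:dist}: since $\alpha<11$ and $\log(2n^4)=1+4\log n\le 5\log n$ for $n\ge 2$, the threshold $\Delta$ is in fact strictly larger than the rounded bound $\frac{n-k}{55\log n}$, so only Eq.~\ref{eq:goal} is strong enough.

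Finally I would clear the degenerate value $n=k$ separately: there $\frac{n-k}{\alpha\log(2n^4)}=0$ makes the first case vacuous, while $C_{k,k}=\bit^k$ gives $w_{i,k}=\binom{k}{i}=w^*_i(k,k)$ and $\Pi_{k,k}$ is the empty product $1$, so the second bound holds with slack (indeed Claim~\ref{claim:weightRLC} already subsumes this). I do not expect a real obstacle: the substance is entirely in Claims~\ref{claim:dist} and~\ref{claim:weightRLC}, and the only thing requiring care is the bookkeeping just described — that the constant $\alpha$ and the threshold $\Delta$ in the lemma coincide with the quantities fixed in the proof of Claim~\ref{claim:dist}, and that the time step $2t$ there matches the codeword length $n$.
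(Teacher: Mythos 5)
Your proposal is correct and follows essentially the same route as the paper, which states the lemma as a direct consequence of Claims~\ref{claim:dist} and~\ref{claim:weightRLC} without further argument. Your extra bookkeeping — invoking the stronger internal bound of Eq.~\ref{eq:goal} (threshold $\frac{n-k}{\alpha\log(2n^4)}$) rather than the rounded external statement of Claim~\ref{claim:dist}, and matching step $2t=n-k$ of the ball process with $C_{n,k}$ — is precisely what the paper relies on implicitly, so there is no gap.
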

}

\subsection{Analysis of the ML Decoding Error}
In this section we complete the proof of Theorem~\ref{thm:inner}.  Let $\dec$ be the maximum-likelihood (ML)
decoder which, given a noisy codeword $y\in \nbit$ and $k$, finds a closest codeword
$\hat{y}\in C_{n,k}$ and outputs the message $m\in \bit^k$ for which $G_n\cdot m = \hat{y}$.
\begin{lemma}\label{lemma:inner error}
For every $p$ and $\delta \in (0,1-H(p))$. If $n\geq\frac{k}{1-H(p)-\delta}$, then the
  error function of the maximum likelihood decoder satisfies
    \[\er(p,k,n)=e^{-\Omega(n/\log n)}\:.\]
\end{lemma}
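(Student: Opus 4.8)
The plan is to bound the ML decoding error via a union bound over all possible error patterns that could cause a decoding failure, organized by the Hamming weight of the closest incorrect codeword. Fix a message $m$; since the code is linear we may assume $m=0$ by symmetry of the BSC, so the transmitted codeword is the all-zeros word and the received word is just the noise vector $\noise\rs\BSC(p)$. ML decoding fails only if there is some nonzero codeword $c\in C_{n,k}$ with $\dist(\noise,c)\le\dist(\noise,0)=\wt(\noise)$. I would split the analysis into two regimes according to $\wt(\noise)$: a ``typical'' regime where $\wt(\noise)$ is close to its expectation $pn$, and an ``atypical'' regime where $\wt(\noise)$ deviates substantially; the atypical regime contributes $e^{-\Omega(n)}$ by a standard Chernoff bound and can be discarded.

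In the typical regime the core estimate is the classical one: for a fixed codeword $c$ of weight $i$, the probability over the noise that $c$ is at least as likely as $0$ is at most (roughly) the probability that a $\BSC(p)$-noise flips at least half of the $i$ coordinates in the support of $c$, which is bounded by something like $\big(2\sqrt{p(1-p)}\big)^{i}$ — exponentially small in the codeword weight $i$. Summing over all $w_{i,n}$ codewords of weight $i$ and then over all $i$, the failure probability is at most $\sum_i w_{i,n}\cdot\big(2\sqrt{p(1-p)}\big)^{i}$, up to lower-order factors. Here is where I invoke the weight-distribution bounds established earlier: by Lemma~\ref{lemma:weight distribution} (equivalently Claims~\ref{claim:weightRLC} and~\ref{claim:dist}), the sum splits as a contribution from the ``random-like'' term $w^*_i(n,k)\cdot\Pi_{k,n} = \binom{n}{i}2^{k-n}\cdot\Pi_{k,n}$, which behaves like the weight enumerator of a random linear code and therefore — exactly as in the Poltyrev/Shulman--Feder random-coding bound — sums to $e^{-\Omega(n)}$ whenever the rate $k/n < C(p)$, i.e. whenever $1-H(p)-\delta \geq k/n$; and a contribution from the additive $2n^4$ term, which only appears for weights $i > \Delta \triangleq \frac{n-k}{\alpha\log(2n^4)}$, and contributes at most $n\cdot 2n^4\cdot\big(2\sqrt{p(1-p)}\big)^{\Delta}$. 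Since $2\sqrt{p(1-p)}<1$ is a constant and $\Delta = \Omega(n/\log n)$, this second contribution is $e^{-\Omega(n/\log n)}$, which dominates (is larger than) the $e^{-\Omega(n)}$ from the first part and hence determines the final bound $e^{-\Omega(n/\log n)}$.

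Concretely the steps are: (1) reduce to $m=0$ by linearity and channel symmetry; (2) write the failure event as the union over nonzero codewords $c$ of the event $\{\wt(\noise\xor c)\le\wt(\noise)\}$ and apply a union bound, grouping codewords by weight $i$; (3) bound the per-codeword probability by $\big(2\sqrt{p(1-p)}\big)^{i}$ (the Bhattacharyya/pairwise-error bound for the BSC), handling the parity of $i$ and the possibility of ties by a crude constant or polynomial factor; (4) substitute the weight-distribution bound from Lemma~\ref{lemma:weight distribution}, separating the $w^*_i(n,k)\cdot\Pi_{k,n}$ part from the $2n^4$ part; (5) bound the first part by $e^{-\Omega(n)}$ using that $k/n<C(p)$ together with $\Pi_{k,n}\le e^{2\sqrt n}=e^{o(n)}$ and the standard estimate $\sum_i\binom{n}{i}2^{k-n}(2\sqrt{p(1-p)})^i = 2^{k-n}(1+2\sqrt{p(1-p)})^n = 2^{-n(1-H_{\min}(p)) + k}$-type computation, checking the exponent is negative with a constant gap depending on $\delta,p$; (6) bound the second part by $\poly(n)\cdot(2\sqrt{p(1-p)})^{\Delta}=e^{-\Omega(n/\log n)}$; (7) add back the atypical-noise term $e^{-\Omega(n)}$ and conclude.

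The main obstacle I anticipate is step (5): getting the random-linear-code part of the sum to be genuinely exponentially small requires being careful that the rate condition $k/n \le 1-H(p)-\delta$ translates into a strictly negative exponent with slack depending only on $\delta$ and $p$, and that the $\Pi_{k,n}=e^{O(\sqrt n)}$ blow-up and any polynomial prefactors are absorbed — this is the point where one must reproduce (or cite) the random-coding exponent bound for the BSC, e.g.~\cite{poltyrev1994bounds,shulman1999random,barg2002random}. A secondary technical nuisance is handling codewords of small weight $i \le \Delta$: these would be the dangerous terms (large $w^*_i$ is not yet small enough when $i$ is tiny), but Claim~\ref{claim:dist} guarantees $w_{i,n}=0$ for all such $i$, so they simply do not occur — it is essential that the distance bound and the error analysis use the same threshold $\Delta$, which is why the construction in Algorithm~\ref{alg:Rn} tracks the lightest weight class via the elevation requirement.
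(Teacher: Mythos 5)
Your reduction to the all-zero codeword, your use of the distance bound (Claim~\ref{claim:dist}) to kill weights below $\Delta$, and your treatment of the additive $2n^4$ part of the spectrum via $\poly(n)\cdot(2\sqrt{p(1-p)})^{\Delta}=e^{-\Omega(n/\log n)}$ all match the paper's Case~1. The gap is in your step (5). The computation you propose, $\sum_i\binom{n}{i}2^{k-n}\bigl(2\sqrt{p(1-p)}\bigr)^i=2^{k-n}\bigl(1+2\sqrt{p(1-p)}\bigr)^n$, has a negative exponent only when $k/n<1-\log_2\bigl(1+2\sqrt{p(1-p)}\bigr)$, i.e.\ below the \emph{cutoff rate} $R_0(p)$ of the BSC, which is strictly smaller than the capacity $1-H(p)$ for every $p\in(0,\half)$. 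So for rates between $R_0(p)$ and $1-H(p)-\delta$ the bound on the ``random-like'' part is vacuous: a union bound with the unconditioned pairwise Bhattacharyya factor $(2\sqrt{p(1-p)})^i$ does not reach capacity, no matter how the polynomial prefactors and $\Pikn\leq e^{2\sqrt n}$ are absorbed. The typical/atypical split of $\wt(\noise)$ you set up is the right first move, but you never actually use the conditioning in the per-codeword estimate; even the crude gain it provides (codewords of weight above $2(p+\epsilon)n$ cannot beat the all-zero word under typical noise) still leaves a positive exponent for some rates below capacity (e.g., for $p=0.1$ the truncated union--Bhattacharyya bound works only up to rate roughly $0.43$, while capacity is roughly $0.53$). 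This is exactly the obstacle you flagged, but the ``standard estimate'' you propose to resolve it with does not resolve it.

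What is needed --- and what the paper does --- is a genuinely conditional argument for heavy codewords. The paper splits according to the weight of the \emph{decoded} codeword $\hat y$ at the Gilbert--Varshamov radius $\delta_{\GV}n$: for $0<\wt(\hat y)<\delta_{\GV}n$ it argues as you do, using that there $w^*_i(n,k)<1$ and hence $w_{i,n}\leq 2n^4+e^{2\sqrt n}$, together with the minimum-distance bound; for $\wt(\hat y)\geq\delta_{\GV}n$ it invokes the extension of Poltyrev's theorem (Theorem~\ref{thm:poltyrev}, proved in Appendix~\ref{appendix:P}), whose proof conditions on the exact noise weight $r\approx pn$ and counts, for each codeword $c$ of weight $i$, the number of weight-$r$ words ML-decoded to $c$ (a hypergeometric-type counting estimate), yielding $2^{-\Omega(n)}$ for all rates below $1-H(p)-\delta$ provided $w_i\leq 2^{(\delta/3)n}w^*_i(n,k)$ in that range --- a condition your spectrum bound with slack $\Pikn\leq e^{2\sqrt n}=2^{o(n)}$ does satisfy. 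Citing the Poltyrev/Shulman--Feder bound, as you hedge, would be legitimate; but that bound is not the geometric-series computation you wrote, and your plan to ``check the exponent is negative'' would fail precisely at this step.
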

\begin{proof}
  Fix $p$ and $\delta$, and consider $n$ and $k$ such that $n\geq\frac{k}{1-H(p)-\delta}$.  Let $\delta_{\GV}$ be
  the root $\delta\in(0,1/2)$ of the equation $H(\delta)=1-\frac kn$.  Since the code
  is linear, we may assume without loss of generality that the all zero codeword was
  transmitted.  Our goal is to upper-bound the event that $\hat{y}$, the codeword
  computed by the ML-decoder, is non-zero. We divide the analysis into two cases
  based on the Hamming weight of $\hat{y}$.

\paragraph{Case 1: $\hat{y}$ is of weight smaller than $\delta_{\GV}\cdot  n$.}
For a fixed codeword $y$ of weight $i>0$, erroneous decoding to $y$ corresponds to
the event that
the $\BSC(p)$ flipped at least $i/2$ bits in the support of $y$.
(The support of $y$ is the set $\{j: y_j=1\}$.)
This event happens with probability
\[\textstyle{P_i \triangleq \sum_{j=\lceil i/2 \rceil}^{i} \binom{i}{j}\cdot p^j\cdot (1-p)^{i-j}}.\]
By a union-bound, we can upper-bound the
probability of the event that $0<\wt(\hat{y})< \delta_{\GV}\cdot  n$
by
\begin{align}
\sum_{i=1}^{\delta_{GV}\cdot n-1} w_{i,n} \cdot P_i
&\leq  \sum_{i=(n-k)/(55\log n)}^{\delta_{GV}\cdot n-1} (2n^4+e^{2\sqrt{n}})  \cdot P_i, \label{eq:case1}
\end{align}
where the upper-bound $w_{i,n}\leq (2n^4+e^{2\sqrt{n}})$
follows from \full{Lemma~\ref{lemma:weight distribution}}\short{Claims~\ref{claim:dist} and~\ref{claim:weightRLC}} and from the fact
that $w^*_i(n,k)<1$ if $i/n< \delta_{\GV}$.  Below, we show that
\begin{equation}\label{eq:Pi}
  P_i\leq 2^{-\beta\cdot i}
\end{equation}
where $\beta \triangleq -\frac 12 \cdot \log_2 (4p(1-p))$ is positive since $p\in (0,\half)$.
It follows that the error probability~(\ref{eq:case1}) is upper-bounded by
\[
(2n^4+e^{2\sqrt{n}}) \cdot \sum_{i=(n-k)/(55\log n)}^{\delta_{GV}\cdot n} 2^{-\beta\cdot i}\leq e^{-\Omega(n/\log n)}.\]
It is left to prove Eq.~(\ref{eq:Pi}). Indeed, by definition, $P_i$ satisfies
\[
P_i \triangleq \textstyle{\sum_{j=\lceil i/2 \rceil}^{i} \binom{i}{j}\cdot p^j\cdot
     (1-p)^{i-j} \leq
 p^{i/2}\cdot(1-p)^{i/2} \cdot \sum_{j=\lceil i/2 \rceil}^{i} \binom{i}{j} \leq
 p^{i/2}\cdot(1-p)^{i/2} \cdot 2^i}, \]
which can be written as $(4p(1-p))^{i/2}$. Because  $p<1/2$, it follows that $\beta>0$, and $P_i\leq 2^{-\beta\cdot i}$, as required.

\paragraph{Case 2: $\hat{y}$ is of weight larger than $ \delta_{\GV}\cdot  n$.}
In this regime, the spectrum of our code is sufficiently close to that of a random
linear code, and so the error of the ML-decoding can be analyzed via (an extension of) Poltyrev's bound~\cite{poltyrev1994bounds} (see also~\cite{shulman1999random}).
The extension bounds the probability of the event that ML-decoding returns a
``heavy'' word. Note that no assumption is made on the minimum distance of the code.
The proof is based on an analysis in~\cite{Barg_enee739lecture4}.
\begin{theorem}[extension of Thm.~1 of~\cite{poltyrev1994bounds} - proof in Appendix~\ref{appendix:P}]\label{thm:poltyrev}
   Let $p\in (0,\half)$ be a constant, $\delta >0$ be a constant such that $\frac{k}{n}<1-H(p)-\delta$, and $\tau\in [0,1]$ be a threshold parameter. There exists a constant $\alpha>0$ for which the following holds. If $C$ is an $[n,k]$ linear code whose weight distribution $\{w_{i}(C_n)\}_i$
     satisfies \[w_i \leq      2^{(\delta / 3) n} \cdot w^*_i(n,k) \qquad \text{for every } i\geq \tau n.\]
  Then, the probability over $\BSC(p)$ that the all zero word is ML-decoded to a codeword of weight at least $\tau n$ is $2^{-\alpha n}$.
\end{theorem}

Since the weight distribution of our code satisfies the Poltyrev's criteria
for codewords of weight at least $\delta_{\GV}\cdot  n$, we
conclude that the decoding error in case (2) is $2^{-\Omega(n)}$.

By combining the two cases, we conclude that the error-probability is at most
$2^{-\Omega(n/\log n)}$, as
required.
\myqed\end{proof}


\section{Efficient Rateless Codes}\label{sec:concat}
In this section we will prove our main theorems and construct an efficient rateless
code $(\enc,\dec)$ that achieves the capacity of the binary symmetric channel.  We
define $(\enc,\dec)$ via its restriction $C_{n,k}$ to information words of length $k$
and codewords of length $n$.  Following the outline sketched in
Section~\ref{sec:eff}, we let $C_{n,k}$ be the concatenation of an $[\nout,\kout]$
outer code $\Cout$ and an $[\nin,\kin]$ inner code $\Cin$ defined as follows.

\paragraph{Inner Code.}
The inner code $\Cin$ is the inefficient rateless code described in
Section~\ref{sec:inner} restricted to input length $\kin$ and output
length $\nin$. Recall that this is an $[\nin,\kin]$ linear systematic
code over $\{0,1\}$ which can be encoded in time $O(\nin\kin\cdot
2^{2\kin})$.  Maximum likelihood decoding requires $O(\nin\kin\cdot
2^{\kin})$ time and achieves an error of
$\er(p,\kin,\nin)=e^{-\Omega(\nin/\log \nin)}$ over $\BSC(p)$ as long
as $\nin\geq \kin \cdot (1-H(p)-\delta)^{-1}$ for some
$\delta\in (0,1-H(p))$. Both encoding and decoding can be implemented in
parallel time of $O(\kin)$.

\paragraph{Outer Code.}
The outer code $\Cout$ is taken from~\cite[Lemma 1]{guruswami2005linear}.  It is an
$[\nout,\kout]$ linear systematic code over an alphabet ${\Sigma_\out}$ with $\nout =
\kout \cdot (1+|\Sigma_{\out}|^{-1/2})$.  Hence, the rate of the outer code tends to one as the alphabet $\Sigma_\out$
increases.  The outer code can be encoded in time $O(\nout\cdot
|\Sigma_\out|^{1/2})$. Decoding in time $O(\nout\cdot |\Sigma_\out|)$ is successful
as long as the fraction of errors is bounded by
$\eps_\out=\Theta(|\Sigma_\out|^{-1})$.  Furthermore, the code can be encoded and
decoded in parallel time of $O(\log (\nout \cdot |\Sigma_{\out}| ))$.

\begin{construction}[The concatenated code $C^{\beta}_{n,k}$]
For lengths $k$ and $n$, and a parameter $\beta$ let
\begin{equation*}
    |\Sigma_\out|=\kin=\beta, \quad  \kout= k/\log_2 |\Sigma_\out|, \quad \Lin = (\nout \cdot \log_2 |\Sigma_\out|)/\kin, \quad \nin=n/\Lin.
\end{equation*}
\begin{itemize}
  \item The \emph{encoder} of the concatenated code $C^{\beta}_{n,k}$ maps $k$-bit information word to $n$-bit codeword as follows (see Figure~\ref{figure:enc}).
\[
F^k_2 \overset{1}{\hookrightarrow} \Sigma_\out^{\kout} \overset{2}{\longrightarrow} \Sigma_{\out}^{\nout}
\overset{3}{\hookrightarrow} (F_2^{\kin})^{\Lin} \overset{4}{\longrightarrow} (F_2^{\nin})^{\Lin}.
\]
The four steps of the encoder are:
(1)~A message $m\in \BIT{k}$ is parsed as the message $m_\out \in
(\Sigma_\out)^{\kout}$.
Namely, $\Sigma_\out=\{0,1\}^{\log \beta}$, and the message $m$ is broken into
$\kout$ blocks of length $\log_2 |\Sigma_\out|$.
(2)~The encoder of the outer code maps $m_\out$ to a codeword $c_\out \in  (\Sigma_\out)^{\nout}$.
(3)~The outer codeword $c_\out$ is parsed as $\Lin$  messages
$(m_{\inn}^{1},\ldots,m_{\inn}^{\Lin})$ each over $\BIT{\kin}$.
(4)~The encoder of the inner code maps each message $m_{\inn}^{j}$ to an inner
codeword $c_\inn^j\in \BIT{\nin}$.

 \item The \emph{decoder} of the concatenated code $C^{\beta}_{n,k}$ maps $n$-bit codeword word to $k$-bit information as follows (see Figure~\ref{figure:dec}).
\[
(F_2^{\nin})^{\Lin} \overset{4}{\longrightarrow}  (F_2^{\kin})^{\Lin} \overset{3}{\hookrightarrow} \Sigma_{\out}^{\nout}
\overset{2}{\longrightarrow}  \Sigma_\out^{\kout} \overset{1}{\hookrightarrow} F^k_2 .
\]
The four steps of the decoder correspond to the encoding steps in reveresed order:
(4)~The decoder of the inner code applies maximum likelihood decoding to each inner
noisy codeword $\hat{c}_\inn^j\in \BIT{\nin}$. We denote the ML-decoding of
$\hat{c}_\inn^j\in \BIT{\nin}$ by
$\hat{m}_{\inn}^{j}$.
(3)~ The $\Lin$  (inner) information words
$(\hat{m}_{\inn}^{1},\ldots,\hat{m}_{\inn}^{\Lin})$ each over $\BIT{\kin}$ are parsed
as a noisy codeword $\hat{c}_\out \in  (\Sigma_\out)^{\nout}$ of the outer code.
(2)~The decoder of the outer code maps the noisy codeword $\hat{c}_\out \in  (\Sigma_\out)^{\nout}$ to a message $\hat{m}_\out \in
(\Sigma_\out)^{\kout}$.
(1)~The message $\hat{m}_\out$ is parsed as a message $\hat{m}\in \BIT{k}$.

\end{itemize}
\end{construction}

The encoder of the rateless code (when $n$ is not predetermined) outputs the
encoding of $m_\inn^1,\ldots,m_\inn^{\Lin}$ ``row by row''. Namely, after the $i$'th
bit of the encodings is output, the encoder outputs bit $i+1$ of each inner-codeword.
Hence,  the code
$C^{\beta}_{n,k}$ is a prefix of the code $C^{\beta}_{n',k}$ for
$n<n'$ and so the code defines a rateless code. Also note that the
code is systematic and the complexity of encoding is $O(\nout\cdot
|\Sigma_\out|^{1/2}+ \Lin\cdot \nin\cdot \kin\cdot 2^{2\kin}
)=O(n\cdot \beta \cdot 2^{2\beta})$ and the complexity of decoding is
$O(\nout\cdot |\Sigma_\out|+ \Lin\cdot \nin\cdot \kin\cdot 2^{2\kin}
)=O(n\cdot \beta \cdot 2^{2\beta})$. (We assume that the encoder and the decoder
need to compute the generating matrix.) Furthermore, both operations can
be performed in parallel-time of $O(\kin + \log
(\nout \cdot |\Sigma_{\out}| ))=O(\beta +\log n)$.  The performance
over $\BSC(p)$ is analyzed by the following claim.

In the following claim we bound the decoding error of the concatenated code $C_{n,k}$
over $\BSC(p)$. We consider two settings. In the first setting, the rate of the inner
code is $(1-H(p)-\delta)$, and we prove that the probability of erroneous decoding
tends to zero almost exponentially in $k$. In the second setting, the outer code is
fixed (hence $k, \beta, \kout$, and $\nout$ are fixed), and the rate of the inner
code tends to zero. In the second setting we prove that the probability of erroneous
decoding tends exponentially to zero as a function of $n$. This implies that the
decoder benefits from the increase in the minimum distance of the code as $n$ increases.
\begin{myclaim}\label{claim:conc}
For every $p\in (0,\half)$ and $\delta>0$, if $\nin \geq \kin
\cdot\frac{1}{1-H(p)-\delta}$, then the decoding error $\er(p,k,n)$ of the concatenated code
$C_{n,k}$ over $\BSC(p)$ is $2^{-\Omega(\frac{k}{\beta^3})}$.
Moreover, if $p,k$, and the outer code are fixed, then
$\er(p,k,n)=2^{-\Omega(n/\log n)}$. 
\end{myclaim}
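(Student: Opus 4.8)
The plan is to run the standard Forney concatenation analysis, treating the inner-code decoding failures as independent events and applying a Chernoff bound to the fraction of corrupted outer symbols. First I would observe that, because the noise on $\BSC(p)$ is i.i.d.\ across coordinates and the inner codewords occupy disjoint blocks of the $n$-bit prefix, the $\Lin$ noise vectors feeding the $\Lin$ inner decoders are mutually independent. By Lemma~\ref{lemma:inner error} (equivalently Theorem~\ref{thm:inner}), since $\nin\geq \kin/(1-H(p)-\delta)$, each inner ML-decoder fails with probability at most $q\triangleq e^{-\Omega(\nin/\log\nin)}$, and these $\Lin$ failure events are independent. An inner failure is exactly the event that the recovered symbol $\hat m_\inn^j$ differs from $m_\inn^j$, i.e.\ that the corresponding outer symbol of $\hat c_\out$ is erroneous; outer symbols whose inner block decodes correctly are correct. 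Hence the number of erroneous outer symbols is stochastically dominated by $\mathrm{Bin}(\Lin,q)$.

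Next I would compare the expected number of outer errors to the outer code's correction radius. The outer code of~\cite{guruswami2005linear} corrects a fraction $\eps_\out=\Theta(|\Sigma_\out|^{-1})=\Theta(1/\beta)$ of symbol errors, so it suffices to bound $\Pr[\mathrm{Bin}(\Lin,q)\geq \eps_\out \nout]$. The key quantitative point is that $q$ is super-polynomially small in $\beta$: with $\nin\geq\kin/(1-H(p)-\delta)=\Omega(\beta)$ we get $q=e^{-\Omega(\beta/\log\beta)}$, which is $o(\eps_\out)=o(1/\beta)$. Therefore $\E[\mathrm{Bin}(\Lin,q)]=\Lin q \ll \eps_\out\nout$ (note $\Lin = \Theta(\nout)$ since $\kin=\beta=\log_2|\Sigma_\out|\cdot(\text{const})$), and a Chernoff bound gives
\[
\Pr[\mathrm{Bin}(\Lin,q)\geq \eps_\out\nout]\leq e^{-\Omega(\nout\cdot \eps_\out)}=e^{-\Omega(\nout/\beta)}.
\]
Finally I would translate this into the claimed bound using the parameter dictionary of the construction: $\nout=\Theta(\kout)=\Theta(k/\log\beta)$, so $\nout/\beta=\Omega(k/(\beta\log\beta))=\Omega(k/\beta^2)$, which is certainly $\Omega(k/\beta^3)$. (I keep a coarse $\beta^3$ to leave slack for the alphabet-blowup step that parses $\beta/\log\beta$ symbols of the base code into one symbol of size $2^\beta$, which changes the effective correction fraction by at most a $\log\beta$ factor.) This proves the first assertion.

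For the ``moreover'' part, $p,k,\beta,\kout,\nout$ are all fixed, so $\Lin$ and $\eps_\out$ are constants, and only $\nin=n/\Lin\to\infty$ grows with $n$. Now each inner decoder fails with probability $q(n)=e^{-\Omega(\nin/\log\nin)}=e^{-\Omega(n/\log n)}$, and by a union bound over the constantly many inner blocks the probability that even a single inner block fails — hence that the outer decoder sees any error at all — is at most $\Lin\cdot q(n)=e^{-\Omega(n/\log n)}$, as claimed. The main obstacle in writing this cleanly is purely bookkeeping: one must verify that with the given settings $|\Sigma_\out|=\kin=\beta$, the ratio $\Lin/\nout$ is bounded by an absolute constant and $\nin=\Theta(\beta)$, so that the exponent $\nout\eps_\out$ really does come out as $\Omega(k/\poly(\beta))$; no genuinely new idea is needed beyond Theorem~\ref{thm:inner}, the independence of the inner noise blocks, and Chernoff's inequality.
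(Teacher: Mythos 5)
Your overall route is the paper's: the $\Lin$ inner noise blocks are independent, each inner ML decoding fails with probability $\eps_\inn=e^{-\Omega(\nin/\log \nin)}$ by Lemma~\ref{lemma:inner error}, and one bounds the probability that the fraction of failing inner blocks exceeds the outer correction radius $\eps_\out=\Theta(1/\beta)$. For the ``moreover'' part, your union bound over the constantly many inner blocks is a slightly simpler variant of the paper's union bound over $\eps_\out$-fractions of $\Lin$, and both give $e^{-\Omega(n/\log n)}$.

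Two parameter slips need fixing, one of which is a genuine (though repairable) flaw as written. First, $\Lin$ is not $\Theta(\nout)$: each inner block carries $\kin/\log_2|\Sigma_\out|=\beta/\log_2\beta$ outer symbols, so $\Lin=\nout\log_2\beta/\beta$. Second, and consequently, bounding $\Pr[\mathrm{Bin}(\Lin,q)\geq \eps_\out\nout]$ does not suffice: a single failing inner block can corrupt all $\beta/\log_2\beta$ outer symbols it contains (a factor you underestimate as ``$\log\beta$''), so the correct sufficient condition is that the \emph{fraction} of failing inner blocks be below $\eps_\out$, i.e.\ the binomial threshold must be $\eps_\out\Lin$, smaller than your $\eps_\out\nout$ by the factor $\beta/\log_2\beta$; this is precisely the paper's observation that the fraction of erroneous outer symbols is bounded by $\hat e$, the fraction of failing inner blocks. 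Fortunately the repair costs little: with threshold $\eps_\out\Lin$ and $q=e^{-\Omega(\beta/\log\beta)}\ll\eps_\out$, your multiplicative Chernoff step gives $e^{-\Omega(\eps_\out\Lin)}=e^{-\Omega(k/\beta^2)}$ (indeed $e^{-\Omega(k/(\beta\log\beta))}$ if you keep the $\log(\eps_\out/q)$ gain), which still implies the claimed $2^{-\Omega(k/\beta^3)}$. The paper instead applies the additive Hoeffding bound $\Pr[\hat e\geq\eps_\out]\leq 2^{-2\Lin(\eps_\out-\eps_\inn)^2}$ with $\Lin\geq k/\beta$ and $\eps_\out-\eps_\inn=\Omega(1/\beta)$, which is exactly where its exponent $k/\beta^3$ comes from; once your threshold is corrected, your concentration step is in fact the tighter of the two.
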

\begin{proof}
  Let $\hat{c}_\inn=(\hat{c}_\inn^1,\ldots,\hat{c}_\inn^{\Lin})$ denote the noisy
  prefix of length $n=\nin\cdot \Lin$ of the encoding of the message $m$.  Let
  $\hat{e}$ denote the fraction of the inner-code information words that are
  incorrectly decoded by the ML-decoder.  The decoder of the outer-code is successful
  as long as $\hat{e}<\eps_\out$. (Note that each decoded inner information word is
  parsed into $k_\inn/\log_2 |\Sigma_\out|$ symbols of the outer code. Hence, the
  fraction of erroneous symbols is bounded by $\hat{e}$.)  When $k$ tends to infinity,
  we bound the probability of the event that $\hat{e}\geq \eps_\out$ using an
  additive Chernoff bound.  Let $\eps_\inn$ denote the probability of erroneous
  decoding of a noisy inner codeword $\hat{c}_\inn^j$. As the ML-decoding errors are
  $\Lin$ independent random events, we conclude that $\Pr[\hat{e}\geq \eps_\out] \leq
  2^{-2\Lin(\eps_{\out}-\eps_\inn)^2}$.

  By Lemma~\ref{lemma:inner error}, $\eps_{\inn}=e^{-\Omega(\nin/\log
    \nin)}=e^{-\Omega(\beta/\log \beta)}$. Under our choice of parameters
  $\eps_{\out}-\eps_\inn=\Omega(1/\beta)$ and $\Lin=(\nout\cdot \log \beta)/\beta >(\kout\cdot \log \beta)/\beta=k/\beta$, and so the
 the bound on the error probability simplifies to $2^{-\Omega(k/\beta^3)}$.

 In the second setting, when the outer code is fixed, we bound the probability of the
 event that $\hat{e}\geq \eps_\out$ by a union bound over all $\eps_\out$-fractions
 of $\Lin$. Namely, $\Pr(\hat{e}\geq \eps_\out) \leq \binom{\Lin}{\eps_\out\cdot \Lin
 }\cdot {\eps_\inn}^{\eps_\out\cdot\Lin } $ which is bounded by $
 2^{H(\eps_\out)\cdot\Lin}\cdot \eps_\inn^{\eps_\out \cdot \Lin} $.  By
 Lemma~\ref{lemma:inner error}, $\eps_{\inn}=e^{-\Omega(\nin/\log \nin)}$.  Because
 $\eps_\out$ and $\Lin$ are fixed, the probability of the event is bounded by
 $e^{-\Omega(n/\log n)}$, as required.
 \myqed\end{proof}

Letting $\beta$ be an (arbitrary slowly) growing function of $k$ we derive the following corollary, which in turn, directly implies Theorem~\ref{thm:main}.
\begin{corollary}[Thm.~\ref{thm:main} refined]
  Let $\beta = \omega(1)$, the rateless code defined by $C^{\beta}_{n,k}$ is a linear
  systematic rateless code that can be encoded and decoded in time $O(n\cdot \beta
  \cdot 2^{2\beta}))$ and parallel time of $O(\log n + \beta)$. Furthermore, for
  fixed $\delta>0$ and crossover probability $p$ for which
  $n\geq k\cdot\frac{1}{1-H(p)-\delta}$, the decoding error is
  $2^{-\Omega(\frac{k}{\beta^3})}$.
\end{corollary}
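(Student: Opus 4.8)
The plan is to read off the Corollary from the concatenated construction $C^{\beta}_{n,k}$ together with Claim~\ref{claim:conc}; essentially all the work is already done, and what remains is to check that the ingredients fit together. Three things need to be verified: that $C^{\beta}_{n,k}$ is linear, systematic, and rateless; that its sequential and parallel complexities are as stated; and that its error over $\BSC(p)$ obeys the claimed bound.

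For the structural properties, I would argue as follows. The inner code $\Cin$ is $\F_2$-linear and systematic by Theorem~\ref{thm:inner}, and the outer code $\Cout$ of~\cite[Lemma 1]{guruswami2005linear} is linear and systematic over $\Sigma_\out$; since $|\Sigma_\out|=\beta$ is a power of two and $\F_2\subseteq\Sigma_\out$, the outer code is in particular $\F_2$-linear, so the concatenation $C^{\beta}_{n,k}$ is $\F_2$-linear. It is systematic because the systematic coordinates of $\Cout$ carry the message $m$ and are then mapped, via the systematic coordinates of the $\Lin$ inner encodings, into an initial segment of the concatenated codeword, as observed right after the construction. The rateless property is immediate from the ``row-by-row'' emission order: enlarging $n$ merely extends each of the $\Lin$ inner codewords, and $\Cin$ is itself rateless, so $C^{\beta}_{n,k}$ is a prefix of $C^{\beta}_{n',k}$ whenever $n<n'$.

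For the complexity, I would substitute the parameter settings $|\Sigma_\out|=\kin=\beta$, $\kout=k/\log_2\beta$, $\Lin=\nout\log_2\beta/\kin$, $\nin=n/\Lin$ (rounding by standard padding where needed) into the per-stage costs quoted for $\Cin$ — including the $O(\nin\kin 2^{2\kin})$ cost of computing its generator matrix from Theorem~\ref{thm:inner} — and for $\Cout$. Using $\Lin\nin=n$ and $\nout\le 2\kout\le 2n$, the outer stage contributes $O(\nout|\Sigma_\out|^{1/2})=O(n\sqrt\beta)$ to encoding and $O(\nout|\Sigma_\out|)=O(n\beta)$ to decoding, both dominated by the inner stage $\Lin\cdot O(\nin\kin 2^{2\kin})=O(n\beta 2^{2\beta})$; this gives the stated total of $O(n\beta 2^{2\beta})$. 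For depth, the encoder (resp.\ decoder) composes the outer and inner stages sequentially, so its parallel time is the sum of the inner depth $O(\kin)=O(\beta)$ and the outer depth $O(\log(\nout|\Sigma_\out|))=O(\log(n\beta))$, which is $O(\log n+\beta)$.

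Finally, for the error bound I would invoke Claim~\ref{claim:conc}, whose hypothesis is $\nin\ge\kin/(1-H(p)-\delta')$ for some constant $\delta'>0$. From the parameter relations $k=\kout\log_2\beta$ and $\nin/\kin=n/(\Lin\kin)=n/(\nout\log_2\beta)$, the Corollary's hypothesis $n\ge k/(1-H(p)-\delta)$ yields
\[
\frac{\nin}{\kin}\ \ge\ \frac{\kout}{(1-H(p)-\delta)\,\nout}\ =\ \frac{1}{(1-H(p)-\delta)(1+\beta^{-1/2})}.
\]
Since $\delta>0$ is fixed and $\beta=\omega(1)$, for all sufficiently large $k$ we have $(1-H(p)-\delta)(1+\beta^{-1/2})\le 1-H(p)-\delta/2<1-H(p)$, so Claim~\ref{claim:conc} applies with $\delta'=\delta/2$ and gives $\er(p,k,n)=2^{-\Omega(k/\beta^3)}$; the finitely many remaining small values of $k$ are absorbed into the constant in the exponent. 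This parameter chase — verifying that the outer code's vanishing rate overhead $1+\beta^{-1/2}\to 1$ keeps the inner code strictly below capacity — is the only delicate point; everything else is a direct quotation of the construction and of Claim~\ref{claim:conc}.
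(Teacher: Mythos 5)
Your proposal is correct and follows essentially the same route as the paper: the structural and complexity claims are read off directly from the construction of $C^{\beta}_{n,k}$, and the error bound comes from the same parameter chase — since the outer code's rate is $1-o(1)$, the hypothesis $n\geq k\cdot\frac{1}{1-H(p)-\delta}$ forces $\nin/\kin\geq\frac{1}{1-H(p)-\delta'}$ for a slightly smaller constant gap $\delta'$, at which point Claim~\ref{claim:conc} gives the $2^{-\Omega(k/\beta^3)}$ error. Your choice of the fixed gap $\delta'=\delta/2$ (rather than the paper's $\delta'=\delta-o(1)$) is a harmless, if anything slightly cleaner, way to phrase the same step.
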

\begin{proof}
  Since $\beta = \omega(1)$ the rate of the outer code is $1-o(1)$ and
  so for $n,p$ and $\delta$ which satisfy
  $\frac{n}{k} \geq \frac{1}{1-H(p)-\delta}$, we have
  that \[\frac{\nin}{\kin}\geq
  \frac{n}{k(1+\frac{1}{\sqrt{\beta}})}=\frac{1}{1-H(p)-\delta'}\] for
  $\delta'=\delta-o(1)$. We can therefore apply Claim~\ref{claim:conc}
  and derive the corollary.\myqed\end{proof}

The proof of Theorem~\ref{thm:weak} is similar, except that now, when we are given
the gap to capacity $\delta$ ahead of time, we can set $\beta$ to be a sufficiently large constant.

\begin{corollary}[Thm.~\ref{thm:weak} restated]
  Let $\delta>0$ be a constant. Then there exists a constant $\beta$
  for which the rateless code defined by $C^{\beta}_{n,k}$ is a linear
  systematic rateless code that can be encoded and decoded in time
  $O(n)$ and parallel time of $O(\log n)$. Furthermore, for crossover
  probability $p$ for which $n \geq k\cdot\frac{1}{1-H(p)-\delta}$, the
  decoding error is $2^{-\Omega(k)}$.
\end{corollary}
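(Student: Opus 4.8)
The plan is to obtain this corollary as a direct specialization of Claim~\ref{claim:conc} in which the free parameter $\beta$ is taken to be a constant depending only on $\delta$ (rather than a growing function of $k$). The point is that once $\beta$ is a constant, the factor $2^{O(\beta)}$ in the complexity estimates of the Construction is itself a constant, so encoding and decoding become linear and the parallel depth becomes $O(\log n)$, while the $2^{-\Omega(k/\beta^3)}$ bound of Claim~\ref{claim:conc} collapses to $2^{-\Omega(k)}$. Linearity and systematicity are inherited from the inner and outer codes exactly as recorded after the Construction, so the only real work is to check that the hypothesis of Claim~\ref{claim:conc} can be met for a \emph{single} constant $\beta$, uniformly over all $p$ with $1-H(p)>\delta$.

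Concretely, I would first fix $\beta$ and instantiate $C^{\beta}_{n,k}$, so that $\kin=|\Sigma_\out|=\beta$, $\Lin=k(1+\beta^{-1/2})/\beta$, and $\nin/\kin=(n/k)\cdot(1+\beta^{-1/2})^{-1}$. Because the outer code has rate $(1+\beta^{-1/2})^{-1}$ rather than $1$, a prefix of length $n\geq k/(1-H(p)-\delta)$ only guarantees an inner rate $\kin/\nin\leq 1-H(p)-\delta'$ with $\delta'\triangleq\delta-\beta^{-1/2}(1-H(p)-\delta)$; indeed a one-line computation gives $1-H(p)-\delta'=(1-H(p)-\delta)(1+\beta^{-1/2})$, hence $\nin\geq\kin/(1-H(p)-\delta')$. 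Since $H(p)>0$ for every $p\in(0,1/2)$, we have $\delta/(1-H(p)-\delta)>\delta/(1-\delta)$, so choosing the constant $\beta$ large enough that $\beta^{-1/2}<\delta/(1-\delta)$ — a condition that involves only $\delta$ — makes $\delta'$ a positive constant for every admissible $p$, bounded below by $\delta'_0\triangleq\delta-\beta^{-1/2}(1-\delta)>0$. Thus the inner code operates at a constant gap to capacity, and Lemma~\ref{lemma:inner error} applies.

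With this in place I would invoke Claim~\ref{claim:conc} with its inner-code gap set to $\delta'$: by Lemma~\ref{lemma:inner error} the per-block inner error is $\eps_\inn=e^{-\Omega(\nin/\log\nin)}$, which tends to $0$ as the constant $\beta$ grows (uniformly over the relevant range of $p$, since there $p$ is bounded away from $1/2$ and the gap is at least $\delta'_0$), while the outer code corrects an $\eps_\out=\Theta(1/\beta)$ fraction of errors; hence after possibly enlarging $\beta$ we get $\eps_\out-\eps_\inn=\Omega(1/\beta)$, and the additive Chernoff bound in the proof of Claim~\ref{claim:conc} yields failure probability at most $2^{-2\Lin(\eps_\out-\eps_\inn)^2}$. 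Since $\beta=\Theta(1)$ forces $\Lin=\Theta(k)$ and $\eps_\out-\eps_\inn=\Theta(1)$, this is $2^{-\Omega(k)}$, as claimed. Finally, the complexity bounds follow by substituting $2^{O(\beta)}=O(1)$ into the per-block costs $O(\Lin\cdot\nin\cdot\kin\cdot 2^{2\kin})$ and $O(\Lin\cdot\nin\cdot\kin\cdot 2^{\kin})$ and the outer-code costs $O(\nout|\Sigma_\out|^{1/2})$, $O(\nout|\Sigma_\out|)$, all of which become $O(n)$, with parallel depths $O(\kin+\log(\nout|\Sigma_\out|))=O(\log n)$.

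The step I expect to be the main obstacle — or at least the one requiring the most care — is the uniform choice of the constant $\beta$: it must simultaneously (i) keep the residual inner gap $\delta'$ strictly positive for \emph{every} $p$ with $C(p)>\delta$, which is exactly what the bound $\delta/(1-H(p)-\delta)\geq\delta/(1-\delta)$ buys us, and (ii) push $\eps_\inn$ below $\eps_\out$ for that same range of $p$; everything else is a mechanical specialization of the already-proven Claim~\ref{claim:conc}.
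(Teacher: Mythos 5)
Your proposal is correct and follows essentially the same route as the paper: fix a constant $\beta$ depending only on $\delta$ so that the outer code's rate loss consumes only part of the gap, leaving the inner code a constant gap to capacity, then invoke Claim~\ref{claim:conc} (error $2^{-\Omega(k/\beta^3)}=2^{-\Omega(k)}$) and note that $2^{O(\beta)}=O(1)$ makes encoding/decoding linear time and $O(\log n)$ depth. Your treatment is in fact slightly more explicit than the paper's one-line choice of $\beta$ (the paper simply sets $\kout/\nout=1/(1+\delta/2)$ so the inner rate is at most $1-H(p)-\delta/2$), particularly in spelling out the uniformity over all admissible $p$, but the substance is identical.
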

\begin{proof}
  Choose $\beta$ for which the rate of the outer code
  $\rout=\kout/\nout=1/(1+\delta/2)$. As a result, an $n$-bit prefix of the
  concatenated code of rate $R=k/n\geq1-H(p)-\delta$ implies that the rate of the
  inner code $\kin/\nin$ is at most $1-H(p)-\delta/2$, and so by
  Claim~\ref{claim:conc}, the decoding error $\er(p,k,n)\leq
  2^{-\Omega(k/\beta^3)}=2^{-\Omega(k)}$. By construction, encoding and decoding can
  be performed in linear-time and logarithmic parallel-time.  \myqed\end{proof}

\begin{figure}[htbp]
  \begin{center}
    \begin{tabular}[c]{p{0.75 \textwidth}}
\scalebox{1} 
{
\begin{pspicture}(0,-6.69)(16.062813,8.29)
\psframe[linewidth=0.04,dimen=outer](9.720938,6.49)(2.9009376,5.89)
\psframe[linewidth=0.04,dimen=outer](9.700937,4.33)(2.8809376,3.73)
\psline[linewidth=0.04cm](4.9209375,4.33)(4.9209375,3.77)
\psline[linewidth=0.04cm](5.9009376,4.33)(5.9009376,3.77)
\psline[linewidth=0.04cm](8.700937,4.31)(8.700937,3.75)
\psline[linewidth=0.04cm](3.9009376,4.31)(3.9009376,3.77)
\psline[linewidth=0.04cm,arrowsize=0.05291667cm 2.0,arrowlength=1.4,arrowinset=0.4]{<->}(2.9209375,4.51)(3.8809376,4.51)
\psline[linewidth=0.04cm,arrowsize=0.05291667cm 2.0,arrowlength=1.4,arrowinset=0.4]{->}(6.2809377,5.49)(6.2809377,4.67)
\psline[linewidth=0.04cm,arrowsize=0.05291667cm 2.0,arrowlength=1.4,arrowinset=0.4]{->}(6.3009377,3.29)(6.3009377,2.47)
\psframe[linewidth=0.04,dimen=outer](11.360937,2.11)(1.3209375,1.53)
\psline[linewidth=0.04cm](2.3209374,2.07)(2.3209374,1.55)
\psline[linewidth=0.04cm](10.320937,2.07)(10.320937,1.57)
\psline[linewidth=0.04cm](3.3009374,2.11)(3.3009374,1.55)
\psline[linewidth=0.04cm](4.3009377,2.11)(4.3009377,1.55)
\psdots[dotsize=0.08](6.7009373,3.91)
\psdots[dotsize=0.08](7.5009375,3.91)
\psdots[dotsize=0.08](7.1209373,3.91)
\psdots[dotsize=0.08](6.1009374,1.73)
\psdots[dotsize=0.08](6.5209374,1.73)
\psdots[dotsize=0.08](6.9009376,1.73)
\psline[linewidth=0.04cm,arrowsize=0.05291667cm 2.0,arrowlength=1.4,arrowinset=0.4]{<->}(1.3409375,2.29)(2.3009374,2.29)
\usefont{T1}{ptm}{m}{n}
\rput(3.4623437,4.8){$\log ( |\Sigma_\out| ) $}
\usefont{T1}{ptm}{m}{n}
\rput(1.9423437,2.58){$\log ( |\Sigma_\out| ) $}
\usefont{T1}{ptm}{m}{n}
\rput(6.952344,2.98){$(2)$}
\usefont{T1}{ptm}{m}{n}
\rput(6.8523436,5.02){$(1)$}
\psline[linewidth=0.04cm,arrowsize=0.05291667cm 2.0,arrowlength=1.4,arrowinset=0.4]{->}(6.3209376,-1.07)(6.3209376,-1.89)
\usefont{T1}{ptm}{m}{n}
\rput(3.7223437,-1.84){$1 $}
\usefont{T1}{ptm}{m}{n}
\rput(6.952344,-1.38){$(4)$}
\psline[linewidth=0.04cm](4.1009374,-2.29)(4.1009374,-5.47)
\psline[linewidth=0.04cm](4.7209377,-2.29)(4.7209377,-5.47)
\psline[linewidth=0.04cm](5.3209376,-2.31)(5.3209376,-5.49)
\psline[linewidth=0.04cm](8.480938,-2.29)(8.480938,-5.47)
\usefont{T1}{ptm}{m}{n}
\rput(3.7723436,-3.6){$c_\inn^1$}
\usefont{T1}{ptm}{m}{n}
\rput(8.832344,-3.6){$c_\inn^{\Lin}$}
\psline[linewidth=0.04cm,arrowsize=0.05291667cm 2.0,arrowlength=1.4,arrowinset=0.4]{<->}(3.4809375,-2.09)(4.0809374,-2.09)
\usefont{T1}{ptm}{m}{n}
\rput(8.882343,-1.84){$1 $}
\psline[linewidth=0.04cm,arrowsize=0.05291667cm 2.0,arrowlength=1.4,arrowinset=0.4]{<->}(8.480938,-2.09)(9.140938,-2.11)
\psdots[dotsize=0.08](7.5009375,-4.87)
\psdots[dotsize=0.08](7.1009374,-4.87)
\psdots[dotsize=0.08](6.7209377,-4.87)
\psline[linewidth=0.04cm](3.4809375,-2.29)(3.4809375,-5.47)
\psline[linewidth=0.04cm](9.120937,-2.27)(9.120937,-5.47)
\psline[linewidth=0.04cm](3.4609375,-2.29)(9.140938,-2.29)
\psdots[dotsize=0.08](3.7209375,-5.65)
\psdots[dotsize=0.08](3.7209375,-6.05)
\psdots[dotsize=0.08](3.7209375,-6.43)
\psdots[dotsize=0.08](8.920938,-5.65)
\psdots[dotsize=0.08](8.920938,-6.05)
\psdots[dotsize=0.08](8.920938,-6.43)
\usefont{T1}{ptm}{m}{n}
\rput(1.4123437,4.0){$m_\out$}
\usefont{T1}{ptm}{m}{n}
\rput(0.6123437,1.82){$c_\out$}
\psline[linewidth=0.04cm,arrowsize=0.05291667cm 2.0,arrowlength=1.4,arrowinset=0.4]{->}(6.3009377,1.11)(6.3009377,0.29)
\psframe[linewidth=0.04,dimen=outer](11.340938,-0.11)(1.3009375,-0.69)
\psline[linewidth=0.04cm](3.3409376,-0.11)(3.3409376,-0.67)
\psdots[dotsize=0.08](6.1409373,-0.45)
\psdots[dotsize=0.08](6.5609374,-0.45)
\psdots[dotsize=0.08](6.9409375,-0.45)
\psline[linewidth=0.04cm,arrowsize=0.05291667cm 2.0,arrowlength=1.4,arrowinset=0.4]{<->}(1.3209375,0.07)(3.3209374,0.07)
\usefont{T1}{ptm}{m}{n}
\rput(2.2023437,0.36){$\kin $}
\usefont{T1}{ptm}{m}{n}
\rput(6.952344,0.8){$(3)$}
\psline[linewidth=0.04cm](5.3209376,2.07)(5.3209376,1.55)
\psline[linewidth=0.04cm](5.2809377,-0.13)(5.2809377,-0.65)
\psline[linewidth=0.04cm](9.300938,-0.13)(9.300938,-0.65)
\psline[linewidth=0.04cm](9.320937,2.07)(9.320937,1.55)
\usefont{T1}{ptm}{m}{n}
\rput(2.3823438,-0.42){$m_{\inn}^{1} $}
\usefont{T1}{ptm}{m}{n}
\rput(10.382343,-0.42){$m_{\inn}^{\Lin} $}
\psline[linewidth=0.04cm,arrowsize=0.05291667cm 2.0,arrowlength=1.4,arrowinset=0.4]{<->}(9.300938,0.09)(11.300938,0.09)
\usefont{T1}{ptm}{m}{n}
\rput(10.1823435,0.38){$\kin $}
\psline[linewidth=0.04cm](3.1209376,6.45)(3.1209376,5.91)
\psline[linewidth=0.04cm](3.3209374,6.47)(3.3209374,5.93)
\psline[linewidth=0.04cm](3.5009375,6.45)(3.5009375,5.91)
\psline[linewidth=0.04cm](3.7009375,6.47)(3.7009375,5.93)
\psline[linewidth=0.04cm](3.9009376,6.45)(3.9009376,5.91)
\psline[linewidth=0.04cm](4.1009374,6.45)(4.1009374,5.91)
\psline[linewidth=0.04cm](4.3009377,6.45)(4.3009377,5.91)
\psline[linewidth=0.04cm](4.5009375,6.45)(4.5009375,5.91)
\psline[linewidth=0.04cm](4.7209377,6.45)(4.7209377,5.91)
\psline[linewidth=0.04cm](4.9209375,6.47)(4.9209375,5.93)
\psline[linewidth=0.04cm](5.1009374,6.45)(5.1009374,5.91)
\psline[linewidth=0.04cm](5.3009377,6.47)(5.3009377,5.93)
\psline[linewidth=0.04cm](5.5009375,6.45)(5.5009375,5.91)
\psline[linewidth=0.04cm](5.7009373,6.45)(5.7009373,5.91)
\psline[linewidth=0.04cm](5.9009376,6.45)(5.9009376,5.91)
\psline[linewidth=0.04cm](8.680938,6.45)(8.680938,5.91)
\psline[linewidth=0.04cm](8.900937,6.45)(8.900937,5.91)
\psline[linewidth=0.04cm](9.100938,6.45)(9.100938,5.91)
\psline[linewidth=0.04cm](9.300938,6.45)(9.300938,5.91)
\psline[linewidth=0.04cm](9.500937,6.45)(9.500937,5.91)
\usefont{T1}{ptm}{m}{n}
\rput(0.59234375,-0.42){$c_\out$}
\psdots[dotsize=0.08](6.7009373,6.11)
\psdots[dotsize=0.08](7.1209373,6.11)
\psdots[dotsize=0.08](7.5009375,6.11)
\usefont{T1}{ptm}{m}{n}
\rput(3.4423437,4.02){$m_{\out}^{1}$}
\usefont{T1}{ptm}{m}{n}
\rput(9.1823435,4.02){$m_{\out}^{\kout}$}
\usefont{T1}{ptm}{m}{n}
\rput(1.9023438,1.82){$c_{\out}^{1}$}
\usefont{T1}{ptm}{m}{n}
\rput(10.862344,1.8){$c_{\out}^{\nout}$}
\usefont{T1}{ptm}{m}{n}
\rput(1.4723438,6.16){$m\in \BIT{k}$}
\end{pspicture} 
}
    \end{tabular}
  \end{center}
\caption{Encoder: concatenation of the outer code and the inner code}
\label{figure:enc}
\label{fig:encoder}
\end{figure}
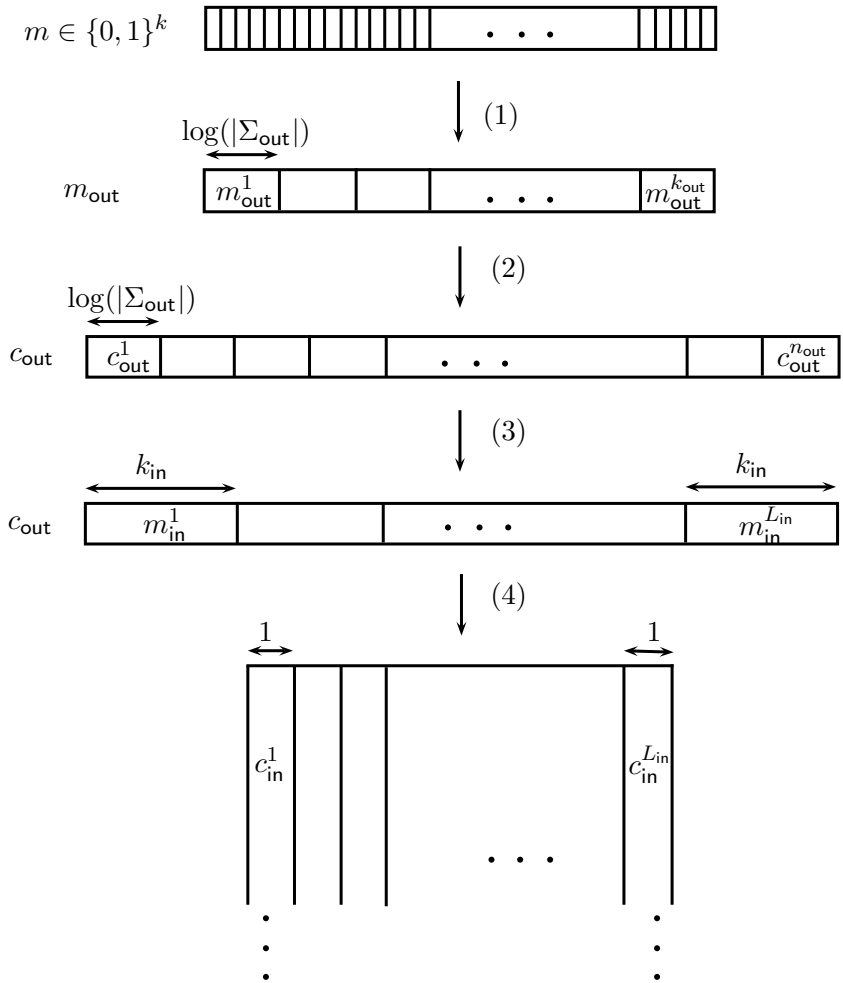

\begin{figure}[htbp]
\centering
\scalebox{1} 
{
\begin{pspicture}(0,-8.2)(14.302813,8.22)
\psframe[linewidth=0.04,dimen=outer](10.160937,6.82)(4.4409375,3.0)
\psline[linewidth=0.04cm](5.0809374,6.8)(5.0809374,3.02)
\psline[linewidth=0.04cm](5.7209377,6.8)(5.7209377,3.04)
\psline[linewidth=0.04cm](6.3409376,6.82)(6.3409376,3.02)
\psline[linewidth=0.04cm](9.500937,6.78)(9.500937,3.0)
\usefont{T1}{ptm}{m}{n}
\rput(4.7923436,4.71){$\hat{c}_\inn^1$}
\psline[linewidth=0.04cm,arrowsize=0.05291667cm 2.0,arrowlength=1.4,arrowinset=0.4]{<->}(4.1009374,6.82)(4.1009374,3.02)
\usefont{T1}{ptm}{m}{n}
\rput(3.7523437,4.71){$\nin$}
\psline[linewidth=0.04cm,arrowsize=0.05291667cm 2.0,arrowlength=1.4,arrowinset=0.4]{->}(7.3009377,8.2)(7.3009377,7.02)
\usefont{T1}{ptm}{m}{n}
\rput(8.192344,7.49){$BSC(p)$}
\psline[linewidth=0.04cm,arrowsize=0.05291667cm 2.0,arrowlength=1.4,arrowinset=0.4]{->}(4.6809373,2.8)(2.7409375,1.86)
\psframe[linewidth=0.04,dimen=outer](3.1209376,1.62)(2.3209374,0.2)
\usefont{T1}{ptm}{m}{n}
\rput(2.6523438,-0.91){$\kin$}
\usefont{T1}{ptm}{m}{n}
\rput(2.5723438,-1.73){$\hat{m}_{\inn}^{1}$}
\psline[linewidth=0.04cm,arrowsize=0.05291667cm 2.0,arrowlength=1.4,arrowinset=0.4]{->}(5.4009376,2.8)(4.8409376,1.78)
\psframe[linewidth=0.04,dimen=outer](5.1209373,1.62)(4.3209376,0.2)
\psline[linewidth=0.04cm,arrowsize=0.05291667cm 2.0,arrowlength=1.4,arrowinset=0.4]{->}(6.1209373,2.82)(6.4209375,1.82)
\psframe[linewidth=0.04,dimen=outer](6.9209375,1.62)(6.1209373,0.2)
\psline[linewidth=0.04cm,arrowsize=0.05291667cm 2.0,arrowlength=1.4,arrowinset=0.4]{->}(9.920938,2.8)(12.260938,1.82)
\psframe[linewidth=0.04,dimen=outer](12.700937,1.62)(11.900937,0.2)
\usefont{T1}{ptm}{m}{n}
\rput(12.324062,1.11){ML}
\usefont{T1}{ptm}{m}{n}
\rput(12.299063,0.71){dec}
\usefont{T1}{ptm}{m}{n}
\rput(12.272344,-0.89){$\kin$}
\usefont{T1}{ptm}{m}{n}
\rput(12.152344,-1.73){$\hat{m}_{\inn}^{\Lin}$}
\usefont{T1}{ptm}{m}{n}
\rput(9.832344,4.71){$\hat{c}_\inn^{\Lin}$}
\psline[linewidth=0.04cm,arrowsize=0.05291667cm 2.0,arrowlength=1.4,arrowinset=0.4]{<->}(4.4809375,7.0)(5.1209373,7.0)
\usefont{T1}{ptm}{m}{n}
\rput(4.7523437,7.31){$1$}
\psline[linewidth=0.04cm,arrowsize=0.05291667cm 2.0,arrowlength=1.4,arrowinset=0.4]{<->}(9.4609375,7.02)(10.080937,7.02)
\usefont{T1}{ptm}{m}{n}
\rput(9.812344,7.33){$1$}
\psframe[linewidth=0.04,dimen=outer](13.300938,-1.38)(1.6609375,-2.04)
\psline[linewidth=0.04cm](3.6809375,-1.42)(3.6809375,-2.02)
\psline[linewidth=0.04cm](5.6809373,-1.38)(5.6809373,-2.0)
\psline[linewidth=0.04cm](11.280937,-1.42)(11.280937,-2.02)
\psline[linewidth=0.04cm,arrowsize=0.05291667cm 2.0,arrowlength=1.4,arrowinset=0.4]{<->}(1.6609375,-1.2)(3.6809375,-1.2)
\psline[linewidth=0.04cm,arrowsize=0.05291667cm 2.0,arrowlength=1.4,arrowinset=0.4]{<->}(11.260938,-1.2)(13.200937,-1.2)
\psdots[dotsize=0.1](7.3009377,4.02)
\psdots[dotsize=0.1](7.6809373,4.02)
\psdots[dotsize=0.1](8.080937,4.02)
\psdots[dotsize=0.1](7.8409376,0.66)
\psdots[dotsize=0.1](8.220938,0.66)
\psdots[dotsize=0.1](8.620937,0.66)
\psdots[dotsize=0.1](9.120937,-1.76)
\psdots[dotsize=0.1](9.500937,-1.76)
\psdots[dotsize=0.1](9.900937,-1.76)
\psline[linewidth=0.04cm](7.7009373,-1.38)(7.7009373,-2.0)
\usefont{T1}{ptm}{m}{n}
\rput(2.7440624,1.13){ML}
\usefont{T1}{ptm}{m}{n}
\rput(2.7190626,0.73){dec}
\usefont{T1}{ptm}{m}{n}
\rput(4.7640624,1.11){ML}
\usefont{T1}{ptm}{m}{n}
\rput(4.7390623,0.71){dec}
\usefont{T1}{ptm}{m}{n}
\rput(6.5640626,1.11){ML}
\usefont{T1}{ptm}{m}{n}
\rput(6.5390625,0.71){dec}
\psline[linewidth=0.04cm,arrowsize=0.05291667cm 2.0,arrowlength=1.4,arrowinset=0.4]{->}(2.7009375,0.04)(2.7009375,-0.58)
\psline[linewidth=0.04cm,arrowsize=0.05291667cm 2.0,arrowlength=1.4,arrowinset=0.4]{->}(4.7009373,0.04)(4.7009373,-0.56)
\psline[linewidth=0.04cm,arrowsize=0.05291667cm 2.0,arrowlength=1.4,arrowinset=0.4]{->}(12.320937,0.04)(12.320937,-0.54)
\psline[linewidth=0.04cm,arrowsize=0.05291667cm 2.0,arrowlength=1.4,arrowinset=0.4]{->}(6.5009375,0.0)(6.5009375,-0.56)
\psframe[linewidth=0.04,dimen=outer](13.340938,-3.38)(1.7009375,-4.04)
\psline[linewidth=0.04cm](3.7209375,-3.42)(3.7209375,-4.02)
\psline[linewidth=0.04cm](5.7209377,-3.38)(5.7209377,-4.0)
\psline[linewidth=0.04cm](11.300938,-3.42)(11.300938,-4.02)
\psdots[dotsize=0.1](9.160937,-3.76)
\psdots[dotsize=0.1](9.540937,-3.76)
\psdots[dotsize=0.1](9.940937,-3.76)
\psline[linewidth=0.04cm](7.7409377,-3.38)(7.7409377,-4.0)
\psline[linewidth=0.04cm](2.7209375,-3.42)(2.7209375,-4.02)
\psline[linewidth=0.04cm](4.7009373,-3.42)(4.7009373,-4.02)
\psline[linewidth=0.04cm](6.7209377,-3.42)(6.7209377,-4.02)
\psline[linewidth=0.04cm](12.300938,-3.42)(12.300938,-4.02)
\usefont{T1}{ptm}{m}{n}
\rput(3.5723438,-5.05){$\log ( |\Sigma_\out |)$}
\psframe[linewidth=0.04,dimen=outer](11.760938,-5.52)(3.1209376,-6.18)
\psline[linewidth=0.04cm](5.1209373,-5.52)(5.1209373,-6.14)
\psline[linewidth=0.04cm](10.700937,-5.56)(10.700937,-6.16)
\psdots[dotsize=0.1](8.560938,-5.9)
\psdots[dotsize=0.1](8.940937,-5.9)
\psdots[dotsize=0.1](9.340938,-5.9)
\psline[linewidth=0.04cm](7.1409373,-5.52)(7.1409373,-6.14)
\psline[linewidth=0.04cm](4.1009374,-5.56)(4.1009374,-6.16)
\psline[linewidth=0.04cm](6.1209373,-5.56)(6.1209373,-6.16)
\psframe[linewidth=0.04,dimen=outer](11.760938,-7.54)(3.1209376,-8.2)
\psline[linewidth=0.04cm](5.1209373,-7.54)(5.1209373,-8.16)
\psline[linewidth=0.04cm](10.700937,-7.58)(10.700937,-8.18)
\psdots[dotsize=0.1](8.560938,-7.92)
\psdots[dotsize=0.1](8.940937,-7.92)
\psdots[dotsize=0.1](9.340938,-7.92)
\psline[linewidth=0.04cm](7.1409373,-7.54)(7.1409373,-8.16)
\psline[linewidth=0.04cm](4.1009374,-7.58)(4.1009374,-8.18)
\psline[linewidth=0.04cm](6.1209373,-7.58)(6.1209373,-8.18)
\psline[linewidth=0.04cm](3.3209374,-7.56)(3.3209374,-8.16)
\psline[linewidth=0.04cm](3.5209374,-7.58)(3.5209374,-8.18)
\psline[linewidth=0.04cm](3.7209375,-7.58)(3.7209375,-8.18)
\psline[linewidth=0.04cm](3.9009376,-7.58)(3.9009376,-8.18)
\psline[linewidth=0.04cm](4.3209376,-7.56)(4.3209376,-8.16)
\psline[linewidth=0.04cm](4.5209374,-7.58)(4.5209374,-8.18)
\psline[linewidth=0.04cm](4.7209377,-7.58)(4.7209377,-8.18)
\psline[linewidth=0.04cm](4.9009376,-7.58)(4.9009376,-8.18)
\psline[linewidth=0.04cm](5.3209376,-7.54)(5.3209376,-8.14)
\psline[linewidth=0.04cm](5.5209374,-7.56)(5.5209374,-8.16)
\psline[linewidth=0.04cm](5.7209377,-7.56)(5.7209377,-8.16)
\psline[linewidth=0.04cm](5.9009376,-7.56)(5.9009376,-8.16)
\psline[linewidth=0.04cm](6.3409376,-7.56)(6.3409376,-8.16)
\psline[linewidth=0.04cm](6.5409374,-7.58)(6.5409374,-8.18)
\psline[linewidth=0.04cm](6.7409377,-7.58)(6.7409377,-8.18)
\psline[linewidth=0.04cm](6.9209375,-7.58)(6.9209375,-8.18)
\psline[linewidth=0.04cm](10.920938,-7.56)(10.920938,-8.16)
\psline[linewidth=0.04cm](11.120937,-7.58)(11.120937,-8.18)
\psline[linewidth=0.04cm](11.320937,-7.58)(11.320937,-8.18)
\psline[linewidth=0.04cm](11.500937,-7.58)(11.500937,-8.18)
\psline[linewidth=0.04cm,arrowsize=0.05291667cm 2.0,arrowlength=1.4,arrowinset=0.4]{->}(7.3209376,-6.42)(7.3209376,-7.38)
\psline[linewidth=0.04cm,arrowsize=0.05291667cm 2.0,arrowlength=1.4,arrowinset=0.4]{->}(7.3209376,-4.42)(7.3209376,-5.34)
\usefont{T1}{ptm}{m}{n}
\rput(7.8323436,-6.87){$(1)$}
\usefont{T1}{ptm}{m}{n}
\rput(7.8923435,-4.83){$(2)$}
\psline[linewidth=0.04cm,arrowsize=0.05291667cm 2.0,arrowlength=1.4,arrowinset=0.4]{->}(7.3209376,-2.24)(7.3209376,-3.2)
\usefont{T1}{ptm}{m}{n}
\rput(7.8923435,-2.69){$(3)$}
\psline[linewidth=0.04cm,arrowsize=0.05291667cm 2.0,arrowlength=1.4,arrowinset=0.4]{<->}(1.7209375,-3.18)(2.6809375,-3.18)
\usefont{T1}{ptm}{m}{n}
\rput(2.1723437,-2.91){$\log ( |\Sigma_\out |)$}
\usefont{T1}{ptm}{m}{n}
\rput(7.092344,2.37){$(4)$}
\usefont{T1}{ptm}{m}{n}
\rput(2.2223437,-3.71){$\hat{c}_{\out}^{1}$}
\usefont{T1}{ptm}{m}{n}
\rput(12.822344,-3.73){$\hat{c}_{\out}^{\nout}$}
\usefont{T1}{ptm}{m}{n}
\rput(0.77234375,-3.75){$\hat{c}_{\out}$}
\usefont{T1}{ptm}{m}{n}
\rput(11.262343,-5.87){$\hat{m}_{\out}^{\kout}$}
\usefont{T1}{ptm}{m}{n}
\rput(3.6223438,-5.85){$\hat{m}_{\out}^{1}$}
\usefont{T1}{ptm}{m}{n}
\rput(1.7523438,-5.91){$\hat{m}_{\out}$}
\usefont{T1}{ptm}{m}{n}
\rput(1.8723438,-7.91){$\hat{m}\in \BIT{k}$}
\psline[linewidth=0.04cm,arrowsize=0.05291667cm 2.0,arrowlength=1.4,arrowinset=0.4]{<->}(3.1009376,-5.4)(4.0609374,-5.4)
\usefont{T1}{ptm}{m}{n}
\end{pspicture} 
}
\caption{Decoder of the concatenated code uses ML-decoding for the inner code and the
decoder of the outer code}
\label{figure:dec}
\end{figure}
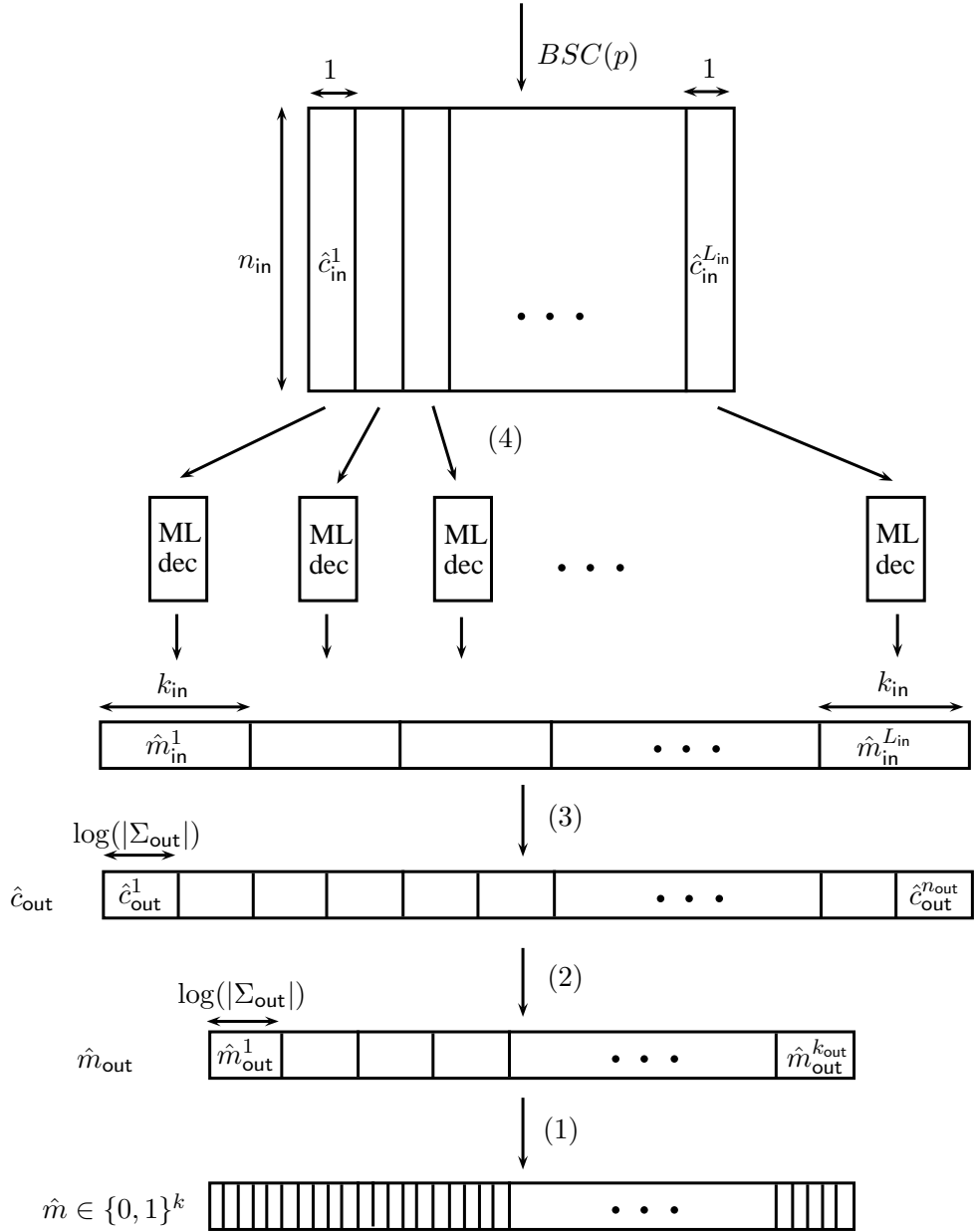


\remove{
\section{Discussion}
\begin{enumerate}
\item Emphasize that our code is systematic.
\item Does the decoder know the crossover probability?
\item Extension to other DMCs? Gaussian?
\item Extension to BEC?
\item Extension to adversarial channel?
\item Comparison to previous rateless code (spinal codes): deterministic, proven
  without assumption about good hash function. parallel encoding and decoding.
\item General question of existence of a specific infinite object that exists at random.
\item Apart from achieving capacity, one may be interested in having
  $\lim_{n\to\infty}\er(p,k,n)=0$, for every sufficiently large $k$.  Such a result
  requires designing an inner rateless code, the minimum distance of which increases
  as a function of the codeword length.
\item Strengthen theorem~\ref{thm:inner}: Prove that the inefficient rateless code can also have \[\er(p,k,n)=e^{-\Omega(k/\log (k))}\:.\]
\end{enumerate}
}
{\paragraph{Acknowledgments.} We thank Uri Erez, Meir Feder, Simon Litsyn, and Rami Zamir for useful conversations.}


\bibliographystyle{alpha}
\bibliography{rateless}

\newcommand{\etalchar}[1]{$^{#1}$}
\begin{thebibliography}{BLMR98}

\bibitem[Bar03]{Barg_enee739lecture4}
Alexander Barg.
\newblock Lecture notes {ENEE 739C}: Advanced topics in signal processing:
  Coding theory (lecture 4), 2003.
\newblock http://www.ece.umd.edu/~abarg/ENEE739C-03/lecture4.pdf.

\bibitem[BFJ02]{barg2002random}
Alexander Barg and G~David Forney~Jr.
\newblock Random codes: Minimum distances and error exponents.
\newblock {\em Information Theory, IEEE Transactions on}, 48(9):2568--2573,
  2002.

\bibitem[BIPS12]{spinal2012}
Hari Balakrishnan, Peter Iannucci, Jonathan Perry, and Devavrat Shah.
\newblock De-randomizing shannon: The design and analysis of a
  capacity-achieving rateless code.
\newblock {\em CoRR}, abs/1206.0418, 2012.

\bibitem[BLMR98]{ByersLMR98}
John~W. Byers, Michael Luby, Michael Mitzenmacher, and Ashutosh Rege.
\newblock A digital fountain approach to reliable distribution of bulk data.
\newblock In {\em SIGCOMM}, pages 56--67, 1998.

\bibitem[BZ00]{Barg00linear}
Alexander Barg and Gilles Zemor.
\newblock Linear-time decodable, capacity achieving binary codes with
  exponentially falling error probability.
\newblock {\em IEEE Transactions on Information Theory}, 2000.

\bibitem[BZ02]{Barg02error}
A.~Barg and G.~Zemor.
\newblock Error exponents of expander codes.
\newblock {\em Information Theory, IEEE Transactions on}, 48(6):1725--1729, Jun
  2002.

\bibitem[BZ04]{Barg04error}
A.~Barg and G.~Zemor.
\newblock Error exponents of expander codes under linear-complexity decoding.
\newblock {\em SIAM Journal on Discrete Mathematics}, 17(3):426--445, 2004.

\bibitem[Cha85]{chase1985code}
David Chase.
\newblock Code combining--a maximum-likelihood decoding approach for combining
  an arbitrary number of noisy packets.
\newblock {\em Communications, IEEE Transactions on}, 33(5):385--393, 1985.

\bibitem[CT01]{caire2001throughput}
Giuseppe Caire and Daniela Tuninetti.
\newblock The throughput of hybrid-{ARQ} protocols for the gaussian collision
  channel.
\newblock {\em Information Theory, IEEE Transactions on}, 47(5):1971--1988,
  2001.

\bibitem[ETW12]{erez2012rateless}
Uri Erez, Mitchell~D Trott, and Gregory~W Wornell.
\newblock Rateless coding for {Gaussian} channels.
\newblock {\em Information Theory, IEEE Transactions on}, 58(2):530--547, 2012.

\bibitem[{For}66]{Forney66}
G.~David {Forney, Jr.}
\newblock {\em Concatenated Codes}.
\newblock M.I.T. Press, Cambridge, MA, USA, 1966.

\bibitem[GI05]{guruswami2005linear}
Venkatesan Guruswami and Piotr Indyk.
\newblock Linear-time encodable/decodable codes with near-optimal rate.
\newblock {\em Information Theory, IEEE Transactions on}, 51(10):3393--3400,
  2005.

\bibitem[Hag88]{hagenauer1988rate}
Joachim Hagenauer.
\newblock Rate-compatible punctured convolutional codes ({RCPC} codes) and
  their applications.
\newblock {\em Communications, IEEE Transactions on}, 36(4):389--400, 1988.

\bibitem[HKM04]{ha2004rate}
Jeongseok Ha, Jaehong Kim, and Steven~W McLaughlin.
\newblock Rate-compatible puncturing of low-density parity-check codes.
\newblock {\em Information Theory, IEEE Transactions on}, 50(11):2824--2836,
  2004.

\bibitem[JS05]{ji2005rate}
Tingfang Ji and Wayne Stark.
\newblock Rate-adaptive transmission over correlated fading channels.
\newblock {\em Communications, IEEE Transactions on}, 53(10):1663--1670, 2005.

\bibitem[LCM84]{lin1984automatic}
Shu Lin, Daniel Costello, and Michael Miller.
\newblock Automatic-repeat-request error-control schemes.
\newblock {\em Communications Magazine, IEEE}, 22(12):5--17, 1984.

\bibitem[Lub02]{luby2002lt}
Michael Luby.
\newblock {LT} codes.
\newblock In {\em Annual Symposium on Foundations of Computer Science}, pages
  271--280, 2002.

\bibitem[Man74]{mandelbaum1974adaptive}
David Mandelbaum.
\newblock An adaptive-feedback coding scheme using incremental redundancy
  (corresp.).
\newblock {\em Information Theory, IEEE Transactions on}, 20(3):388--389, 1974.

\bibitem[PBS11]{perry2011rateless}
Jonathan Perry, Hari Balakrishnan, and Devavrat Shah.
\newblock {Rateless Spinal Codes}.
\newblock In {\em HotNets-X}, Cambridge, MA, November 2011.

\bibitem[PIF{\etalchar{+}}12]{perry2012spinal}
Jonathan Perry, Peter~A Iannucci, Kermin~E Fleming, Hari Balakrishnan, and
  Devavrat Shah.
\newblock Spinal codes.
\newblock In {\em Proceedings of the ACM SIGCOMM 2012 conference on
  Applications, technologies, architectures, and protocols for computer
  communication}, pages 49--60. ACM, 2012.

\bibitem[Pol94]{poltyrev1994bounds}
Gregory Poltyrev.
\newblock Bounds on the decoding error probability of binary linear codes via
  their spectra.
\newblock {\em Information Theory, IEEE Transactions on}, 40(4):1284--1292,
  1994.

\bibitem[Raj07]{rajwan2007method}
Doron Rajwan.
\newblock Method of encoding and transmitting data over a communication medium
  through division and segmentation, December~4 2007.
\newblock US Patent 7,304,990.

\bibitem[RLA08]{rajwan2008data}
Doron Rajwan, Eyal Lubetzky, and Joseph~Yossi Azar.
\newblock Data streaming, February~5 2008.
\newblock US Patent 7,327,761.

\bibitem[RM00]{rowitch2000performance}
Douglas~N Rowitch and Laurence~B Milstein.
\newblock On the performance of hybrid {FEC/ARQ} systems using rate compatible
  punctured turbo ({RCPT}) codes.
\newblock {\em Communications, IEEE Transactions on}, 48(6):948--959, 2000.

\bibitem[SCV04]{sesia2004incremental}
Stefania Sesia, Giuseppe Caire, and Guillaume Vivier.
\newblock Incremental redundancy hybrid {ARQ} schemes based on low-density
  parity-check codes.
\newblock {\em Communications, IEEE Transactions on}, 52(8):1311--1321, 2004.

\bibitem[SF99]{shulman1999random}
Nadav Shulman and Meir Feder.
\newblock Random coding techniques for nonrandom codes.
\newblock {\em Information Theory, IEEE Transactions on}, 45(6):2101--2104,
  1999.

\bibitem[SF00]{shulman2000static}
Nadav Shulman and Meir Feder.
\newblock Static broadcasting.
\newblock In {\em Information Theory, 2000. Proceedings. IEEE International
  Symposium on}, page~23. IEEE, 2000.

\bibitem[Sho06]{shokrollahi2006raptor}
Amin Shokrollahi.
\newblock Raptor codes.
\newblock {\em Information Theory, IEEE Transactions on}, 52(6):2551--2567,
  2006.

\bibitem[Shu03]{shulman2003communication}
Nadav Shulman.
\newblock {\em Communication over an unknown channel via common broadcasting}.
\newblock PhD thesis, Tel Aviv University, 2003.

\bibitem[Spi96a]{Spielman96a}
Spielman.
\newblock Linear-time encodable and decodable error-correcting codes.
\newblock {\em IEEETIT: IEEE Transactions on Information Theory}, 42, 1996.

\bibitem[Spi96b]{spielmanPhD}
Daniel Spielman.
\newblock {\em Computationally efficient error-correcting codes and holographic
  proofs}.
\newblock PhD thesis, 1996.

\end{thebibliography}


\appendix
\section{Omitted Proofs}

\subsection{Proof of Lemma~\ref{lem:Rn}}\label{app:Rn}
We begin with the following claim.
\begin{myclaim}\label{claim:split}
  For every set $W\subseteq \bit^k\setminus \{0^k\}$ of size at least $2n^2$, there
  are more than  $2^k \cdot (1-\frac{1}{2n})$ vectors that
 $\frac{1}{2\cdot \sqrt{n}}$-split $W$.
\end{myclaim}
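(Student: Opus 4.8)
The plan is to run the standard second-moment (pairwise-independence) argument with the splitting vector drawn uniformly at random. Fix an enumeration $W=\{x_1,\dots,x_m\}$ of the (distinct, nonzero) elements of $W$, where $m\ge 2n^2$, and let $R$ be uniform in $\bit^k$. For each $i$ let $Z_i$ be the indicator of the event $R\cdot x_i=1$, and set $X\triangleq\frac1m\sum_{i=1}^m Z_i$, so that $X=|\{s\in W:s\cdot R=1\}|/m$. Since $x_i\neq 0^k$, the linear functional $R\mapsto R\cdot x_i$ is balanced (its kernel has size $2^{k-1}$), hence $\E[Z_i]=\tfrac12$ and $\E[X]=\tfrac12$. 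The key observation is that the $Z_i$ are \emph{pairwise} independent: for $i\neq j$ the vectors $x_i,x_j$ are distinct and nonzero, hence linearly independent over $\GF(2)$, so $R\mapsto(R\cdot x_i,R\cdot x_j)$ is a surjective linear map onto $\GF(2)^2$ and $(Z_i,Z_j)$ is uniform on $\bit^2$. Therefore $\Var(X)=\frac1{m^2}\sum_i\Var(Z_i)=\frac1{m^2}\cdot\frac m4=\frac1{4m}$.

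Next I would apply Chebyshev's inequality with deviation $a=\frac1{2\sqrt n}$:
\[
\Pr\!\left[\,\Bigl|X-\tfrac12\Bigr|\ge \tfrac1{2\sqrt n}\,\right]\;\le\;\frac{\Var(X)}{a^2}\;=\;\frac{1/(4m)}{1/(4n)}\;=\;\frac nm\;\le\;\frac1{2n}.
\]
By the definition of an $\eps$-splitter, a vector $R$ fails to $\frac1{2\sqrt n}$-split $W$ exactly when $\bigl|X-\tfrac12\bigr|>\frac1{2\sqrt n}$, an event contained in the one bounded above; so the number of $R\in\bit^k$ that do not split $W$ is at most $2^k/(2n)$, and at least $2^k(1-\frac1{2n})$ vectors split $W$. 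To upgrade this to the strict bound in the statement, note that the Chebyshev estimate is not tight here: for $n\ge 2$ the all-zero vector $R=0^k$ lies in the tail event and contributes squared deviation $\tfrac14$, strictly more than $a^2=\frac1{4n}$, to $\E[(X-\tfrac12)^2]$; hence in fact $\Pr[\,|X-\tfrac12|\ge\frac1{2\sqrt n}\,]<\frac nm\le\frac1{2n}$, so strictly more than $2^k(1-\frac1{2n})$ vectors $\frac1{2\sqrt n}$-split $W$.

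I do not expect a real obstacle; the only point needing care is the pairwise-independence step, which is precisely where both hypotheses are used — that the elements of $W$ are distinct (otherwise $Z_i=Z_j$) and that $0^k\notin W$ (otherwise the corresponding functional is constant and $\E[Z_i]\neq\tfrac12$). The quantitative choice $m\ge 2n^2$ is calibrated so that $\Var(X)/a^2=n/m\le 1/(2n)$, which is exactly the per-set bound needed afterwards in Lemma~\ref{lem:Rn}: a union bound over the at most $n$ nonempty unmarked weight classes $\widehat W_{1,n},\dots,\widehat W_{n,n}$ then leaves strictly more than half of the vectors $R$ simultaneously splitting all of them.
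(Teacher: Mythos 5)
Your proof is correct and follows essentially the same route as the paper's: pairwise independence of the indicators $Z_i$ (using that the elements of $W$ are distinct and nonzero) plus Chebyshev with deviation $\frac{1}{2\sqrt{n}}$, giving failure probability at most $n/m\leq \frac{1}{2n}$. Your extra observation about the atom $R=0^k$ to make the bound strict is a legitimate refinement of a point the paper glosses over by simply writing the Chebyshev inequality strictly.
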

\begin{proof}
  Let $W=\{x_1,\ldots, x_m\}$, where $m\geq 2n^2$.  Let $R$ denote a random vector
  chosen uniformly in $\BIT{k}$.  This uniform distribution induces $m$ random
  variables defined by
\begin{align*}
  Z_i & \triangleq
  \begin{cases}
    1 & \text{if $R\cdot x_i=1$,}\\
    0 & \text{if $R\cdot x_i=0$.}
  \end{cases}
\end{align*}
The expectation of each random variable $Z_i$ is $1/2$, and the variance of each
$Z_i$ is $1/4$. (However, they are not independent.) Since the elements of $W$ are
distinct, the random variables $\{Z_i\}_i$ are pairwise independent.
By Chebyshev's Inequality,
\begin{align}
  \label{eq:Cheb}
\Prob \left( \left|\frac 1m\cdot \sum_{i=1}^m Z_i -\frac 12 \right| > \frac{1}{2\cdot\sqrt{n}}\right) < \frac{1}{2n}.
\end{align}
To complete the proof, note that $R$ is an $\frac{1}{2\cdot \sqrt{n}}$-splitter for $W$
if and only if
$\left|\frac 1m\cdot \sum_{i=1}^m Z_i -\frac 12 \right| \leq
\frac{1}{2\cdot\sqrt{n}}$.
\myqed\end{proof}

\begin{proof}[Proof of Lemma~\ref{lem:Rn}]
  If $|W_{i,n}\setminus M|<2n^2$, then all the information words in $W_{i,n}$ are
  marked, and $\widehat{W_{i,n}}$ is empty. Therefore, $\widehat{W}_{i,n}$ is either
    empty or of size at least $2n^2$. It follows, by a union bound, that more than
    half of the $k$-bit vectors simultaneously $\frac{1}{2\cdot \sqrt{n}}$-split each
    set $\widehat{W}_{i,n}$, for $1\leq i\leq n$.  Therefore, to prove the lemma it
    suffices to show that at least half of the $R$'s $(1/8)$-elevates the set
    $W_{d,n}$.

    Note that any $3/8$-splitter of $W_{d,n}$ is also a $1/8$-elevator of this set.
    In case $|W_{d,n}| \leq 8$, pick a vector $x\in W_{d,n}$. Half the vectors are
    not orthogonal to $x$, and hence at least half the vectors are $1/8$-elevators of
    $W_{d,n}$.  If $|W_{d,n}| >9$, we can apply the argument of the above
    Claim~\ref{claim:split} and get that at least half of the $R$'s $1/8$-elevate
    $W_{d,n}$.  This completes the proof of Lemma~\ref{lem:Rn}.
\end{proof}
\subsection{Bound on number of unmarked vectors}\label{appendix:bound}
\begin{myclaim}\label{claim:f}
  \begin{align*}
    |\widehat{W}_{i,k+t}| & \leq \left (\frac{2}{3} \right )^{t} \binom {k+t}{i} \:.
  \end{align*}
\end{myclaim}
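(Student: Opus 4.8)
The plan is to prove the (slightly more general) bound by induction on $t\ge 0$ for all indices $i$, exploiting the Pascal identity $\binom{k+t+1}{i}=\binom{k+t}{i}+\binom{k+t}{i-1}$, which is precisely the recursion the right-hand side obeys. Write $a_{i,t}\triangleq|\widehat W_{i,k+t}|$ (and $a_{i,t}\triangleq 0$ for $i\le 0$, consistent with the convention that the zero information word is discarded throughout). The base case $t=0$ is immediate: $G_k$ is the $k\times k$ identity, so $W_{i,k}=\{x\in\bit^k:\wt(x)=i\}$ and hence $a_{i,0}\le |W_{i,k}|=\binom ki$. The whole argument then reduces to a single one-step inequality, $a_{i,t+1}\le \tfrac23(a_{i,t}+a_{i-1,t})$, after which the induction closes by substituting the hypothesis and applying Pascal's identity. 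I would assume $k\ge 9$ from the outset (this is harmless, since Claim~\ref{claim:f} only feeds into the asymptotic distance bound of Claim~\ref{claim:dist}); then $k+t\ge 9$ for every $t\ge 0$, which is the regime in which the one-step inequality holds.

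To prove the one-step inequality, I would first record how a single new row acts on the weight classes: adding $R_{n+1}$ (with $n=k+t$) increases $\wt(G_n\cdot x)$ by one exactly when $R_{n+1}\cdot x=1$, so $W_{i,n+1}=\{x\in W_{i,n}:R_{n+1}\cdot x=0\}\cup\{x\in W_{i-1,n}:R_{n+1}\cdot x=1\}$. Since the marked set $M_n$ only grows, every $x\in\widehat W_{i,n+1}=W_{i,n+1}\setminus M_{n+1}$ is also outside $M_n$, giving the disjoint inclusion $\widehat W_{i,n+1}\subseteq\{x\in\widehat W_{i,n}:R_{n+1}\cdot x=0\}\cup\{x\in\widehat W_{i-1,n}:R_{n+1}\cdot x=1\}$. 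Now I invoke the construction: each nonempty unmarked class $\widehat W_{j,n}$ has size at least $2n^2$, so by Line~\ref{line:Rn} the chosen $R_{n+1}$ is a $\tfrac1{2\sqrt n}$-splitter of it. Consequently at most a $(\tfrac12+\tfrac1{2\sqrt n})$-fraction of $\widehat W_{i,n}$ is orthogonal to $R_{n+1}$, and at most a $(\tfrac12+\tfrac1{2\sqrt n})$-fraction of $\widehat W_{i-1,n}$ is non-orthogonal to it (both trivially true when the class is empty). Because $n\ge 9$ forces $\tfrac12+\tfrac1{2\sqrt n}\le\tfrac23$, this yields $a_{i,t+1}\le(\tfrac12+\tfrac1{2\sqrt n})(a_{i,t}+a_{i-1,t})\le\tfrac23(a_{i,t}+a_{i-1,t})$. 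It is worth noting that the $\tfrac18$-elevation requirement in Line~\ref{line:Rn} is not used here at all; it enters only in the separate treatment of marked balls inside Claim~\ref{claim:dist}.

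Closing the induction is then routine: from $a_{i,t}\le(2/3)^t\binom{k+t}{i}$ for all $i$, the one-step inequality gives $a_{i,t+1}\le\tfrac23\bigl((2/3)^t\binom{k+t}{i}+(2/3)^t\binom{k+t}{i-1}\bigr)=(2/3)^{t+1}\binom{k+t+1}{i}$, which is the desired bound at level $t+1$.

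The only genuinely delicate point — the one I would expect to be the main obstacle if one is not careful — is the constant: one needs the orthogonal and non-orthogonal fractions bounded by $2/3$, not merely by $\tfrac12+o(1)$, and this is exactly why the split parameter in the algorithm is $\tfrac1{2\sqrt n}$ and why one restricts to $n\ge 9$. The secondary thing to watch is the bookkeeping with the marked set: newly marked words can only leave $\widehat W$, so they never increase any $a_{i,t+1}$, which is what legitimizes the disjoint inclusion used above.
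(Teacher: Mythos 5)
Your proof is correct and follows essentially the same route as the paper's: induction on $t$ with base case $|W_{i,k}|=\binom{k}{i}$, the one-step inequality $|\widehat{W}_{i,k+t+1}|\leq \tfrac{2}{3}\bigl(|\widehat{W}_{i,k+t}|+|\widehat{W}_{i-1,k+t}|\bigr)$ coming from the $\tfrac{1}{2\sqrt{n}}$-splitting of the (empty or large) unmarked classes, and Pascal's identity to close the induction. The only difference is that you make explicit two details the paper leaves implicit --- the monotonicity of the marked set justifying the inclusion $\widehat{W}_{i,n+1}\subseteq\{x\in\widehat{W}_{i,n}:R_{n+1}\cdot x=0\}\cup\{x\in\widehat{W}_{i-1,n}:R_{n+1}\cdot x=1\}$, and the check that $\tfrac12+\tfrac{1}{2\sqrt{n}}\leq\tfrac23$ (i.e., $n\geq 9$) --- which is fine and does not change the argument.
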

\begin{proof}
  The proof is by induction on $t$.  The induction basis for $t=0$ holds because
  $|W_{i,k}|=\binom{k}{i}$.
We now prove the induction step for $t+1$. The choice of $R_{k+t+1}$ splits each
$\widehat{W}_{i,k+t}$
so that
\begin{align*}
   |\widehat{W}_{i,k+t+1}|&\leq \frac{2}{3} \cdot \left (|\widehat{W}_{i-1,k+t}|+|\widehat{W}_{i,k+t}|\right).
\end{align*}
The induction hypothesis for $t$ implies that
\begin{align*}
   |\widehat{W}_{i,k+t+1}|&\leq \left(\frac{2}{3}\right)^{t+1} \cdot \left (  \binom
     {k+t}{i-1} + \binom {k+t}{i}\right)\\
&=\left(\frac{2}{3}\right)^{t+1} \cdot \binom {k+t+1}{i},
\end{align*}
and the claim follows.
\myqed
\end{proof}

\subsection{Proof of Extension of Poltyrev's Theorem}\label{sec:poltyrev}\label{appendix:P}
Before we prove Theorem~\ref{thm:poltyrev}, we collect some useful facts.
The extension of the binomial coefficients to reals is defined by
  \begin{align}\label{Eq:gamma}
    \binom{n}{k}=\frac{\Gamma(n+1)}{\Gamma(k+1)\Gamma(n-k+1)},
  \end{align}
where $\Gamma(x)$ is the Gamma function that extends the factorial function to the real numbers.
In particular, $\Gamma(x)$ is monotone increasing for $x\geq 1$ and $\Gamma(x+1)=x\cdot \Gamma(x)$.

\begin{lemma}\label{lemma:comb}
  Let $0<a\leq b$. Define the function $f:[0,b] \rightarrow \RR$ by
  \begin{align*}
    f(x) \triangleq \binom{a}{x}\binom{b}{x}.
  \end{align*}
If $a^2\geq a+b$, then
\begin{align}
\left|\argmax f(x) - \frac{ab-1}{a+b+2}\right| \leq 1,\\
\max\{ f(x)\}  \leq f\left(\frac{ab}{a+b}\right) \cdot \frac{(ab)^4}{(a+b)^4}.
  \end{align}
\end{lemma}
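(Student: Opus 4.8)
The plan is to work on the interval $I=[0,a]$ (recall $a\le b$), where $f>0$; one checks separately that $f$ attains its maximum over $[0,b]$ inside $I$ (outside $I$ the factor $\binom ax$ has absolute value at most $1$, while on $I$ the peak of $f$ is exponentially large in $a$). Writing $g:=\ln f=\mathrm{const}-2\ln\Gamma(x+1)-\ln\Gamma(a-x+1)-\ln\Gamma(b-x+1)$, one gets $g''(x)=-2\psi'(x+1)-\psi'(a-x+1)-\psi'(b-x+1)<0$ since the trigamma function $\psi'$ is positive; hence $f$ is strictly log-concave on $I$ with a unique maximizer $x^*$, $g'(x^*)=0$. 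The computation that drives everything is the elementary ratio identity (from $\Gamma(z+1)=z\Gamma(z)$)
\[
\frac{f(x+1)}{f(x)}=\frac{(a-x)(b-x)}{(x+1)^2},
\]
together with the fact that $(a-x)(b-x)-(x+1)^2=(ab-1)-(a+b+2)x$ is \emph{linear} in $x$.

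For the first bound, set $h(x):=g(x+1)-g(x)=\ln\frac{(a-x)(b-x)}{(x+1)^2}=\int_x^{x+1}g'(t)\,dt$. Since $g'$ is strictly decreasing, so is $h$, and by the linearity above its unique zero is $x_0:=\frac{ab-1}{a+b+2}$. From $\int_{x_0}^{x_0+1}g'=h(x_0)=0$ and the monotonicity of $g'$ we get $g'(x_0)>0>g'(x_0+1)$, hence $x_0<x^*<x_0+1$, i.e. $|x^*-x_0|\le1$. (The hypothesis $a^2\ge a+b$ forces $a\ge2$, which is what one uses to verify $0<x_0$ and $x_0+1=\frac{(a+1)(b+1)}{a+b+2}\le a$, so that all of this stays inside $I$.)

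For the second bound put $x_m:=\frac{ab}{a+b}$. Two facts: $(a-x_m)(b-x_m)=x_m^2$, and $x_m\ge a/2\ge1$ (both from $b\ge a$ and $ab\ge a^2\ge a+b$); also $x_m-x_0=\frac{2ab+a+b}{(a+b)(a+b+2)}\in(0,1)$, so $x^*$ and $x_m$ both lie in $(x_0,x_0+1)$ and $|x^*-x_m|<1$. The role of $x_m$ is that it solves $(a-x)(b-x)=x^2$, which is the optimality equation $g'(x)=0$ with every $\psi(z+1)$ replaced by its leading term $\ln z$; since $\psi(z+1)=\ln z+\frac1{2z}+\theta(z)$ with $|\theta(z)|=O(z^{-2})$, the logarithmic terms in $g'(x_m)=-2\psi(x_m+1)+\psi(a-x_m+1)+\psi(b-x_m+1)$ cancel identically, leaving
\[
g'(x_m)=\frac{(a+b)(a-b)^2}{2(ab)^2}+\bigl(-2\theta(x_m)+\theta(a-x_m)+\theta(b-x_m)\bigr),
\]
which is bounded by an absolute constant and, moreover, vanishes when $a=b$ and is $O((a-b)^2)$ near the diagonal $a=b$ (the first term obviously, the $\theta$-part because it is a second difference of $\theta\circ\exp$ at $\ln(a-x_m),\ln(b-x_m)$ with midpoint $\ln x_m$). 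A second-order Taylor expansion of $g$ about $x^*$ (using $g'(x^*)=0$ and the trigamma bounds on $g''$) then turns this small residual into a bound of the shape $g(x^*)-g(x_m)=O\bigl(g'(x_m)^2/(-g''(\xi))\bigr)$ for some $\xi$ between $x_m$ and $x^*$, and crudely $g(x^*)-g(x_m)\le|g'(x_m)|\cdot|x^*-x_m|\le|g'(x_m)|$. It remains to check this is at most $4\ln x_m=\ln\bigl((ab)^4/(a+b)^4\bigr)$. I would do so by a case split on $x_m$: when $x_m$ is large, $4\ln x_m$ dominates the (constant) crude bound; when $x_m$ is bounded — which, by $x_m\ge a/2$, confines $(a,b)$ to a compact set (also using $b\le a^2-a$) — one exploits that $g(x^*)-g(x_m)$ and $4\ln x_m$ are continuous there and both vanish only at $(a,b)=(2,2)$ (where $x^*=x_m=1$), with $g(x^*)-g(x_m)$ vanishing like $(a-b)^4$ while $\ln x_m\ge\ln(a/2)$ vanishes only linearly as $a\to2$, so the inequality holds with room to spare, and away from $(2,2)$ one has $x_m$ bounded away from $1$.

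I expect this last step to be the main obstacle: making the estimate of $g(x^*)-g(x_m)$ against $4\ln x_m$ go through \emph{uniformly} in $(a,b)$ — in particular reconciling the near-diagonal regime with the boundary regime $b\approx a^2-a$, and choosing the right mix of the crude and second-order bounds in the compact intermediate zone. Everything else is routine log-concavity and Gamma-function manipulation, and the deliberately generous fourth power in the statement leaves slack to absorb the crude constants.
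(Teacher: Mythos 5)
Your first bound is fine: the log-concavity argument via the trigamma function plus the ratio identity $f(x+1)/f(x)=(a-x)(b-x)/(x+1)^2$, whose numerator minus denominator is linear with root $x_0=(ab-1)/(a+b+2)$, is just a continuous rephrasing of the paper's ``discrete monotonicity'' step, and your observation that both $x^*$ and $x_m=ab/(a+b)$ lie in $(x_0,x_0+1)$ is correct. The genuine gap is in the second bound, exactly where you flag it: the comparison $g(x^*)-g(x_m)\le 4\ln x_m$ is never actually established. In the large-$x_m$ regime you need an explicit numerical constant bounding $|g'(x_m)|$ (main term plus the three digamma remainders) and an explicit threshold beyond which $4\ln x_m$ exceeds it; in the ``compact intermediate zone'' you offer only continuity and ``room to spare'', but continuity of two functions on a compact set does not by itself yield an inequality between them -- that is the statement to be proved. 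The corner $(a,b)=(2,2)$, where $x_m=1$ and the right-hand side vanishes, requires a genuine quantitative local comparison (relating $b-a$ to $a-2$ through the constraint $b\le a^2-a$, with uniform control of the $\theta$-terms), which is asserted rather than carried out; and your remark that both sides vanish only at $(2,2)$ is incorrect for the left side, which vanishes on the whole diagonal $a=b$ (harmless, but symptomatic of how loosely this region is treated). As submitted, this part is a plan, not a proof.

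The paper's own proof shows the asymptotic machinery is unnecessary and, more importantly, explains why your strategy is strained near the corner. Having placed $x^*$ and $y=ab/(a+b)$ in a common unit interval, the paper writes $f(x^*)/f(y)$ as a ratio of Gamma values and compares the denominator with the numerator factor by factor, using $\Gamma(z+1)=z\,\Gamma(z)$ and the fact that the hypothesis $a^2\ge a+b$ makes $y$, $a-y=a^2/(a+b)$ and $b-y=b^2/(a+b)$ all at least $1$; the bound that falls out is exactly $y^2(a-y)(b-y)=(ab)^4/(a+b)^4$. So the fourth power is not ``deliberately generous slack'' available to absorb unquantified constants -- it is precisely the value of $y^2(a-y)(b-y)$, and it degenerates to $1$ at $(2,2)$, which is why any argument that tries to beat $4\ln x_m$ by a crude absolute constant must fail near that corner and be repaired by a delicate local analysis. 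Since you already know $|x^*-x_m|<1$, you could discard the digamma expansion entirely and close the argument the paper's way, with the hypothesis $a^2\ge a+b$ entering only to keep all Gamma arguments at least $1$.
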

The following proof is based on~\cite{Barg_enee739lecture4}.
\begin{proof}
Let $t\triangleq \frac{ab-1}{a+b+2}$. We first prove the following ``discrete'' monotonicity property:
  \begin{enumerate}
  \item If $i\leq t$, then $f(i)\leq f(i+1)$
  \item If $i\geq t$, then $f(i)\geq f(i+1)$
  \end{enumerate}
The proof of this monotonicity property is by evaluating the quotient
  \begin{align*}
    Q \triangleq
    \frac{f(i)}{f(i+1)}=\frac{\binom{a}{i}\binom{b}{i}}{\binom{a}{i+1}\binom{b}{i+1}}.
  \end{align*}
It is easy to check that $Q\leq 1$ if $i\leq t$, and $Q\geq 1$ if $i\geq t$.

Let $x^*\triangleq \argmax f(x)$. The monotonicity property implies that $t\leq x^*
\leq t+1$. Let $y\triangleq ab/(a+b)$, which proves the first part of the lemma.

Note that $y$ is also between $t$ and $t+1$. This implies that $|x^*-y|\leq 1$.
Note also that $y\geq 1$, $(a-y)=a^2/(a+b)\geq 1$, and $(b-y)=b^2/(a+b)\geq 1$.
Hence by the properties of the Gamma function we obtain:
\begin{align*}
  \frac{f(x^*)}{f(y)} &
= \frac{\Gamma^2(y+1) \cdot \Gamma(a+1-y) \cdot \Gamma(b+1-y)}
{\Gamma^2(x^*+1) \cdot \Gamma(a+1-x^*) \cdot \Gamma(b+1-x^*)}\\
&\leq \frac{\Gamma^2(y+1) \cdot \Gamma(a+1-y) \cdot \Gamma(b+1-y)}
{\Gamma^2(y) \cdot \Gamma(a-y) \cdot \Gamma(b-y)}\\
&=y^2 \cdot (a-y) \cdot (b-y)\\
&=\left(\frac{ab}{a+b}\right)^2 \cdot \frac{a^2}{a+b}\cdot\frac{b^2}{a+b}\\
&= \frac{(ab)^4}{(a+b)^4}.
\end{align*}
\end{proof}

\begin{definition}
Let $\delta_{\GV}(n,k)$ be the root $x\in(0,1/2)$ of the equation $H(x)=1-\frac kn$.
\end{definition}
\begin{definition}
  Let $w^*_i(n,k)\triangleq \binom{n}{i}\cdot 2^{k-n}$ denote the expected weight
  distribution of a random linear $[n,k]$ code.
\end{definition}

\begin{theorem}[refinement of Thm.~1 of~\cite{poltyrev1994bounds}]
Let $p\in (0,\half)$ be a constant, $\delta >0$ be a constant such that $\frac{k}{n}<1-H(p)-\delta$, and $\tau\in [0,1]$ be a threshold parameter. There exists a constant $\alpha>0$ for which the following holds. If $C$ is an $[n,k]$ linear code whose weight distribution $\{w_{i}(C_n)\}_i$
     satisfies \[w_i \leq      2^{(\delta / 3) n} \cdot w^*_i(n,k) \qquad \text{for every } i\geq \tau n.\]
  Then, the probability over $\BSC(p)$ that the all zero word is ML-decoded to a codeword of weight at least $\tau n$ is $2^{-\alpha n}$.
\end{theorem}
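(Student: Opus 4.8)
Following Poltyrev~\cite{poltyrev1994bounds} and the analysis of Barg~\cite{Barg_enee739lecture4}, the plan is the following. By linearity we assume the all-zero codeword is transmitted, so the received word equals the noise $\noise\rs\BSC(p)$ and the ML decoder outputs a codeword $\hat c$ minimizing $\dist(\noise,\hat c)$. A codeword $c$ of weight $i$ is \emph{preferred over} $0$ iff $\dist(\noise,c)\le\wt(\noise)$, which, writing $d=|\support(\noise)\cap\support(c)|$ and using $\wt(\noise\oplus c)=\wt(\noise)+i-2d$, is equivalent to $d\ge i/2$. Hence it suffices to bound the probability of the event $\cE$ that some codeword of weight at least $\tau n$ is preferred over $0$. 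The first step is to condition on the noise weight $\wt(\noise)=\theta n$. A Chernoff bound gives $\Pr[\,|\wt(\noise)/n-p|\ge\gamma\,]\le 2^{-\Omega(n)}$ for any fixed constant $\gamma>0$, so noise weights outside $[p-\gamma,p+\gamma]$ contribute only $2^{-\Omega(n)}$ to $\Pr[\cE]$ and may be dropped; the value $\gamma=\gamma(p,\delta)$ is fixed at the end.

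Fix a typical weight $w=\theta n$ with $|\theta-p|\le\gamma$. Conditioned on $\wt(\noise)=w$, the vector $\noise$ is uniform over the $\binom{n}{w}$ words of weight $w$, so for a fixed codeword $c$ of weight $i$ the intersection size $d$ is hypergeometric and
\[
\Pr\!\big[\,c\text{ preferred over }0 \,\big|\, \wt(\noise)=w\,\big]
= \binom{n}{w}^{-1}\sum_{d\ge i/2}\binom{i}{d}\binom{n-i}{w-d}\,.
\]
Since $\theta\approx p<\tfrac12$, the lower end $i/2$ of the summation range lies above the mode $\approx iw/n$ of the unimodal summands, so the sum is within a $\poly(n)$ factor of its first term $\binom{i}{i/2}\binom{n-i}{w-i/2}$ (this term is $0$ once $i>2w$, so only $i\le 2w$ matter); binomial-coefficient estimates such as Lemma~\ref{lemma:comb} together with Stirling's approximation make this precise. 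Applying a union bound over the $w_i\le 2^{(\delta/3)n}\binom{n}{i}2^{k-n}$ codewords of each weight $i\in[\tau n,\,2w]$ and writing $i=\lambda n$, we obtain
\[
\Pr[\cE\mid\wt(\noise)=\theta n]\ \le\ \poly(n)\cdot 2^{\,n\cdot g(\theta)}\,,\qquad
g(\theta)\ :=\ \tfrac{\delta}{3}+\tfrac kn-1+\max_{\lambda\in[\tau,\,2\theta]}\psi_\theta(\lambda)\,,
\]
where $\psi_\theta(\lambda):=H(\lambda)+\lambda+(1-\lambda)\,H\!\big(\tfrac{\theta-\lambda/2}{1-\lambda}\big)-H(\theta)$ is the exponent of $\binom{n}{i}\cdot 2^{i}\cdot\binom{n-i}{w-i/2}/\binom{n}{w}$, with $2^{i}$ bounding $\binom{i}{i/2}$.

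It remains to fix $\gamma$. By continuity of $g$ it suffices to show $g(p)<0$: then $g(\theta)\le -\delta/3<0$ for all $\theta$ in some interval $[p-\gamma,p+\gamma]$, and recombining with the discarded atypical weights yields $\Pr[\cE]\le\poly(n)\,2^{-\delta n/3}+2^{-\Omega(n)}=2^{-\alpha n}$ for a suitable constant $\alpha=\alpha(p,\delta)>0$. Using the hypothesis $k/n<1-H(p)-\delta$ we have $g(p)<\tfrac{\delta}{3}-H(p)-\delta+\max_{\lambda\in[0,2p]}\psi_p(\lambda)$, so the whole argument reduces to the analytic inequality
\[
H(\lambda)+\lambda+(1-\lambda)\,H\!\Big(\frac{p-\lambda/2}{1-\lambda}\Big)\ \le\ 2\,H(p)\qquad\text{for all }\lambda\in[0,2p]\,,
\]
which gives $\max_{\lambda}\psi_p(\lambda)\le H(p)$ and hence $g(p)<-\tfrac{2\delta}{3}<0$.

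I expect this last inequality to be the main obstacle. It is precisely what puts the decoding threshold at the channel capacity $1-H(p)$ rather than at the strictly smaller cutoff rate $R_0=1-\log_2\!\big(1+2\sqrt{p(1-p)}\big)$: a union bound over codewords that does \emph{not} condition on the typical noise weight replaces the left-hand side above by $H(\lambda)-\lambda\beta$ with $\beta=-\tfrac12\log_2(4p(1-p))$, whose maximum over $\lambda$ equals $\log_2(1+2\sqrt{p(1-p)})=1-R_0>H(p)$, giving an exponentially small error only when $k/n$ is below the cutoff rate. The improvement comes from the extra factor $\binom{n-i}{w-i/2}/\binom{n}{w}$ that the conditioning introduces, which is exponentially smaller than $1$ precisely because $\frac{p-\lambda/2}{1-\lambda}<\tfrac12$. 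The inequality itself should be provable either by differentiating its left-hand side in $\lambda$ and checking the endpoints $\lambda=0$ (where it holds with slack $H(p)$) and $\lambda=2p$, or by reading $2^{n(\text{LHS})}$ as a count of pairs (support of a weight-$\lambda n$ word, weight-$pn$ noise with exactly $\lambda n/2$ of its ones inside that support) and bounding this count by $\binom{n}{pn}^2=2^{2nH(p)}$.
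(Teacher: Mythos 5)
Your plan is correct and follows essentially the same route as the paper's proof: condition on the typical noise weight via a Chernoff bound, bound the number of weight-$r$ noise vectors attracted to a fixed weight-$i$ codeword by (a $\poly(n)$ multiple of) $\binom{i}{i/2}\binom{n-i}{r-i/2}$, union-bound against the assumed weight distribution, and finish by a continuity argument in the noise weight. The inequality you flag as the main obstacle is true and closes exactly along your second suggestion: counting your pairs the other way gives $\binom{n}{pn}\binom{pn}{\lambda n/2}\binom{(1-p)n}{\lambda n/2}$, and the Vandermonde identity $\sum_j\binom{pn}{j}\binom{(1-p)n}{j}=\binom{n}{pn}$ bounds this by $\binom{n}{pn}^2$ --- this is precisely the paper's double-counting identity $\binom{n}{i}\binom{i}{i/2}\binom{n-i}{r-i/2}=\binom{n}{r}\binom{r}{i/2}\binom{n-r}{i/2}$ followed by a bound on $\binom{r}{i/2}\binom{n-r}{i/2}$ (which the paper obtains via Lemma~\ref{lemma:comb} and entropy estimates rather than via Vandermonde).
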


\begin{proof}
  Let $y$ be the received word when the all zero word is transmitted (i.e,
  $\{y_i\}_i$ are independent Bernoulli variables with probability $p$).  Let
  $\hat{y}$ denote the codeword computed by the ML-decoder with respect to the input
  $y$.  Our goal is to upper-bound the event that $\hat{y}$ has weight at least $\ell\triangleq \tau n$.

Let $\epsilon>0$  denote a sufficiently small constant that depends only on $p$ and $\delta$; in particular $\epsilon$ satisfies:
\begin{align}
  \label{eq:eps}
  \epsilon &\leq \min \{\frac 12 - p, p\}.
\end{align}

We divide the analysis into two cases based on the
  Hamming weight of $y$:
  \begin{align*}
    \Pr_{y\rs \BSC(p)}(\wt(\hat{y}) \geq \ell) &\leq
    \Pr_{y\rs \BSC(p)}[ | \wt(y)-np|> \epsilon n]
    \\
    &+ \Pr_{y\rs \BSC(p)}[ \wt(\hat{y})\geq \ell \text{ }\&\text{ } | \wt(y)-np| \leq
    \epsilon n]
  \end{align*}

\paragraph{Case 1: The weight of $y$ is far from $np$, i.e $| \wt(y)-np|> \epsilon
  n$.}
By additive Chernoff - Heoffding inequality we know that,
\begin{align*}
   \Pr[|\wt(y)-np|>\epsilon n] \leq 2\cdot e^{-2 \epsilon^2\cdot n} = 2^{-\Omega( n)}.
    \end{align*}

\paragraph{Case 2: The weight of $y$ is close to $np$, i.e $| \wt(y)-np| \leq \epsilon n$.}

Let $r \triangleq \wt(y)$.
Note that,
\begin{align}\label{eq:r}
 pn -\epsilon n \leq r \leq pn + \epsilon n.
\end{align}
Let $ P_{\ell,r}$ denote the following probability
\begin{align*}
  P_{\ell,r}\triangleq \Pr_{y\rs \BSC(p)}[ \wt(\hat{y})\geq \ell \text{ }\&\text{ }  \wt(y)= r] .
\end{align*}
Because all $y$'s of weight $r$ are equiprobable, we have
\begin{align*}
  \Pr_{y\rs \BSC(p)}[ \wt(\hat{y})\geq \ell \mid  \wt(y)= r] &= \frac{ |\{ y: \wt(y)=r ,\wt(\hat{y}) \geq \ell \}|}{|\{y:\wt(y)=r\}|}.
\end{align*}
Hence,
  \begin{align} \nonumber
  P_{\ell,r}&=\Pr_{y\rs \BSC(p)}[ \wt(y)= r] \cdot
\frac{ |\{ y: \wt(y)=r ,\wt(\hat{y}) \geq \ell \}|}{|\{y:\wt(y)=r\}|}.
\\
&\leq
\sum_{i=\ell}^{n} \sum_{c \in C: \wt(c)=i} \frac{|\{y:\wt(y)=r,\hat{y}=c \}|}{|\{y:\wt(y)=r\}|}\;.
\label{Eq:UB}
  \end{align}
Let
  \begin{align*}
     \alpha_{c,r} \triangleq | \{y:\hat{y}=c ~\&~ \wt(y)=r \}|.
  \end{align*}

  Fix a codeword $c\in C$ of weight $i$.  A word $y$ of weight $r$ is ML-decoded to
  $c$ only if $\dist(y,c) \leq r$.
Without loss of generality $c=1^i\circ 0^{n-i}$ (i.e., $c$ consists of $i$ ones
followed $n-i$ zeros).
Note that $\wt(y)=\dist(y,0^n)$. Let $y'$ and $y"$ denote the prefix of length $i$ of
$y$ and the suffix of length $n-i$ of $y$, respectively.
Because $\dist(y,c) \leq r$, it follows that $0\leq r-\dist(y,c) =
\dist(y,0^n)-\dist(y,c)$.
But
\begin{align*}
  \dist(y,0^n)-\dist(y,c) &= \dist(y',0^i) + \dist(y",0^{n-i})
-\dist(y',1^i) + \dist(y",0^{n-i})\\
&=\dist(y',0^i)-\dist(y',1^i).
\end{align*}
Namely, in the prefix $y'$, the majority of the bits are ones. We conclude that at
least $i/2$ of the coordinates of the support $y$ have to be chosen from the
coordinates of the support of $c$.  Hence,
\begin{align}\label{Eq:given}
 \alpha_{c,r} &=\sum_{w=i/2}^{r} \binom{i}{w} \binom{n-i}{r-w}.
\end{align}

Because, $\binom{i}{w} \leq \binom{i}{i/2}$, we can upper-bound (\ref{Eq:given}) by,
\begin{align*}
 \alpha_{c,r} \leq \binom{i}{i/2}\sum_{w=0}^{r-i/2} \binom{n-i}{w}.
\end{align*}
Because $\epsilon\leq\frac{1}{2}-p$ the maximal summand is $\binom{n-i}{r-i/2}$, and we get an upper-bound of
\begin{align}\label{Eq:alpha}
 \alpha_{c,r} \leq n \binom{i}{i/2} \binom{n-i}{r-i/2}.
\end{align}

Substituting Eq.~(\ref{Eq:alpha}) in Eq.~(\ref{Eq:UB}), we get,


\begin{align*}
P_{\ell,r} \leq  \sum_{i=\ell}^{n} w_{i,n}\frac{n \binom{i}{i/2} \binom{n-i}{r-i/2}}{\binom{n}{r}}\;.
\end{align*}

The weight distribution $w_{i,n}$ satisfies
$w_{i,n} = 2^{(\delta/3)n} \cdot w^*_i(n,k)$, therefore,
\begin{align*}
P_{\ell,r}\leq  n\binom{n}{r}^{-1}\sum_{i=\ell}^{n} 2^{(\delta/3)n} \cdot w^*_i(n,k) \binom{i}{i/2} \binom{n-i}{r-i/2}\;.
\end{align*}

Recall that the average weight distribution $w^*_i(n,k) $ satisfies
\begin{align*}
w^*_i(n,k)=2^{k-n}\binom{n}{i}
\end{align*}

therefore,
\begin{align}\label{Eq:UB2}
P_{\ell,r} \leq  n\binom{n}{r}^{-1}\sum_{i=\ell}^{n} 2^{k-n+(\delta/3)n}\binom{n}{i} \binom{i}{i/2} \binom{n-i}{r-i/2}\;.
\end{align}

Now, we show that,
\begin{align}\label{Eq:id}
\binom{n}{i} \binom{i}{i/2} \binom{n-i}{r-i/2}=\binom{n}{r}  \binom{r}{i/2}\binom{n-r}{i/2}.
\end{align}

The combinatorial proof proceeds by counting the number of possibilities of dividing
students to two classes and choosing committee members in two ways.  Consider $n$
students that we wish to partition to two classes one of size $i$ and the other of
size $n-i$. We want to choose a committee of $r$ students that consists of $i/2$
students from the first class, and $r-i/2$ students from the second class.
The left hand side in Eq.~\ref{Eq:id} counts the number of possible partitions into
two classes and choices of committee members as follows.
First partition the students by choosing the members of the first class, then choose
the committee members from each class.
The right hand side in Eq.~\ref{Eq:id} counts the same number of possibilities by
first choosing the committee members (before dividing the students into
classes). Only then we partition the committee members to two classes. Finally, the
non-committee members  of the first class are chosen.

Plugging in (\ref{Eq:id}) and (\ref{Eq:UB2}), we get,
\begin{align*}
 P_{\ell,r} \leq n\sum_{i=\ell}^{n} 2^{k-n+(\delta/3)n}   \binom{r}{i/2} \binom{n-r}{i/2}
\end{align*}

By Lemma~\ref{lemma:comb},
\begin{align*}
  \binom{r}{i/2} \binom{n-r}{i/2} &\leq \binom{r}{r(n-r)/n} \binom{n-r}{r(n-r)/n}
  \cdot \left( \frac{r(n-r)}{n}\right)^4\\
&\leq  \binom{r}{r(n-r)/n} \binom{n-r}{r(n-r)/n}  \cdot n^4.
\end{align*}

It follows that
\begin{align}\label{Eq:error1}
 P_{\ell,r} &\leq n^6 \cdot 2^{k-n+(\delta/3)n} \cdot \binom{r}{r(n-r)/n} \binom{n-r}{r(n-r)/n}.
\end{align}
Let $\hat{p}\triangleq\frac{r}{n}$. By Eq.~\ref{eq:r} it follows that
\begin{align}\label{Eq:error3}
 P_{\ell,r} &\leq n^6\cdot 2^{k-n+(\delta/3)n}  \cdot \binom{\hat{p}n}{\hat{p}(1-\hat{p})n} \binom{(1-\hat{p})n}{\hat{p}(1-\hat{p})n}.
\end{align}

Because
\begin{align*}
  \binom{n}{k} \leq 2^{nH(\frac{k}{n})},
\end{align*}
it follows that
\begin{align*}
P_{\ell,r} \leq n^6 \cdot 2^{k-n+(\delta/3)n} \cdot  2^{\hat{p}nH(1-\hat{p})} \cdot  2^{(1-\hat{p})nH(\hat{p})}
\end{align*}
Because $H(\hat{p})=H(1-\hat{p})$, we get,
\begin{align*}
 P_{\ell,r} \leq n^6 \cdot  2^{k-n+(\delta/3)n +nH(\hat{p})}
\end{align*}
Our goal now is to prove that the exponent $k-n+(\delta/3)n +nH(\hat{p})$ is
at most $-\delta \cdot n/3$.
Indeed,
\begin{align*}
k-n+(\delta/3)n +nH(\hat{p}) &= -n\cdot (-\frac kn+1 -\frac{\delta}{3} - H(\hat{p}))
\\
&\leq -n \cdot(H(p)- H(\hat{p})+\frac{2}{3}\cdot \delta ) .
  \end{align*}
  To complete the proof, it suffices to show that $|H(\hat{p})-H(p)|< \delta/3$.
Indeed, $|p-\hat{p}|\leq \epsilon$, and hence by
  continuity, this holds if $\epsilon$ is sufficiently small (as a function of $p$ and
  $\delta)$.
\end{proof}

\end{document}